\documentclass[11pt]{article}
\usepackage[margin=25mm]{geometry}
\usepackage{amsmath,amssymb,amsthm,mathtools,booktabs,array,etoolbox,graphicx,longtable}
\usepackage[colorlinks=true,linkcolor=blue,citecolor=blue,urlcolor=blue,unicode]{hyperref}
\allowdisplaybreaks

\theoremstyle{definition}
\newtheorem{theorem}{Theorem}[section]
\newtheorem{lemma}[theorem]{Lemma}
\newtheorem{proposition}[theorem]{Proposition}
\newtheorem{corollary}[theorem]{Corollary}

\newtheorem*{remark*}{Remark}
\newtheorem{definition}[theorem]{Definition}
\newtheorem{example}[theorem]{Example}
\newtheorem{conjecture}[theorem]{Conjecture}
\AtBeginEnvironment{theorem}{\pushQED{\qed}}
\AtEndEnvironment{theorem}{\popQED}
\AtBeginEnvironment{lemma}{\pushQED{\qed}}
\AtEndEnvironment{lemma}{\popQED}
\AtBeginEnvironment{proposition}{\pushQED{\qed}}
\AtEndEnvironment{proposition}{\popQED}
\AtBeginEnvironment{corollary}{\pushQED{\qed}}
\AtEndEnvironment{corollary}{\popQED}
\AtBeginEnvironment{remark}{\pushQED{\qed}}
\AtEndEnvironment{remark}{\popQED}
\AtBeginEnvironment{remark*}{\pushQED{\qed}}
\AtEndEnvironment{remark*}{\popQED}
\AtBeginEnvironment{definition}{\pushQED{\qed}}
\AtEndEnvironment{definition}{\popQED}
\AtBeginEnvironment{example}{\pushQED{\qed}}
\AtEndEnvironment{example}{\popQED}

\newcommand{\engappendixsection}[2]{%
  \refstepcounter{section}%
  \section*{Appendix \thesection\quad #1}%
  \label{#2}%
}

\newcommand{\F}{\mathbb F}
\newcommand{\EE}{\mathbb E}
\newcommand{\PP}{\mathbb P}
\newcommand{\rank}{\operatorname{rank}}

\newcommand{\wt}{\operatorname{wt}}
\newcommand{\Row}{\operatorname{Row}}
\newcommand{\Ker}{\operatorname{Ker}}
\newcommand{\bu}{\mathbf u}
\newcommand{\bp}{\mathbf p}
\newcommand{\bx}{\mathbf x}
\newcommand{\bv}{\mathbf v}
\newcommand{\be}{\mathbf e}
\newcommand{\bq}{\mathbf q}
\newcommand{\bs}{\mathbf s}
\newcommand{\bw}{\mathbf w}
\newcommand{\bt}{\mathbf t}
\newcommand{\bff}{\mathbf f}
\newcommand{\zeros}{\mathbf 0}
\newcommand{\ones}{\mathbf 1}
\newcommand{\h}{h_2}
\newcommand{\CZ}{C_Z}
\newcommand{\CX}{C_X}

\newcommand{\NX}{N_X}

\title{\textbf{Finite-Degree Quantum LDPC Codes Reaching the Gilbert--Varshamov Bound}}
\author{Kenta Kasai\\Institute of Science Tokyo\\ \texttt{kenta@ict.eng.isct.ac.jp}}
\date{}

\begin{document}
\maketitle

\begin{abstract}
We construct asymptotically good nested Calderbank--Shor--Steane (CSS) code pairs from Hsu--Anastasopoulos codes and MacKay--Neal codes. For fixed balanced triples with even component degrees and \(j_Z\ge4\), we prove that the coding rate stays bounded away from zero and that the relative distances on both sides stay bounded away from zero with probability tending to one as the blocklength grows. Moreover, within an explicit low-degree search window, we prove that all 56 balanced triples in that window with even common row degree, column degree at least four, and positive design quantum rate attain the classical Gilbert--Varshamov (GV) bound on both constituent sides, and consequently the CSS GV bound at fixed finite degree.
\end{abstract}

\section{Introduction}

Recent breakthroughs established asymptotically good Calderbank--Shor--Steane (CSS) low-density parity-check (LDPC) codes with simultaneous rate, distance, and sparsity~\cite{PK21,DHLV22,LZ22}. Those constructions, however, are not organized around an explicit fixed-degree finite-parameter landscape. The question studied here is whether one can exhibit quantum LDPC families that keep the degrees genuinely small and nevertheless reach the Gilbert--Varshamov (GV) benchmark already at fixed finite degree. On the classical side, this question belongs to the sparse-graph coding tradition initiated by Tanner's graph-based formulation~\cite{Tanner81}.

At a high level, our answer is affirmative in a concrete balanced regular regime. For every fixed balanced triple with even component degrees and \(4\le j_Z<k/2\), we prove relative linear distance. Inside the specified low-degree search window, we prove finite-degree attainment of the classical GV distance on both constituent sides for all 56 balanced triples with even \(k\) and \(j_Z\ge4\), including 24 triples with odd \(j_Z\). Consequently, the corresponding nested CSS family attains the CSS GV distance at fixed finite degree throughout that certified region.

Our starting point is a classical pair of sparse-graph families with complementary strengths. MacKay--Neal (MN) codes~\cite{MN96} have very sparse graphical representations and near-capacity performance, while Hsu--Anastasopoulos (HA) codes~\cite{HA05} retain bounded graphical complexity and admit maximum-likelihood capacity theorems. In both cases, the visible code is a punctured sparse-graph code: it is first described by an extended parity-check matrix with hidden variables and then equivalently by a compressed parity-check matrix obtained after eliminating those variables. In the homogeneous regular specialization studied later, the degree triple \((j_Z,j_X,k)\) records the column socket degrees of the constituent matrices \(A_Z\) and \(A_X\), while the common parameter \(k\) is their row socket degree and also the row and column socket degree of the square sparse map \(B\). A direct dual pairing of these families is useless for CSS coding because it gives zero quantum rate. The main structural step of this paper is instead to impose a nested relation on the MN side and combine it with the HA side so as to obtain a balanced CSS family with positive design quantum rate. We sample the matrices from the socket-based random graph ensemble of~\cite[Secs.~3.3--3.4]{RU08}, which is natural both for the exact-enumerator analysis carried out here and for future density-evolution-based decoding analyses~\cite[Sec.~3.9 and App.~B]{RU08}.

From the broader theory literature, this paper is closest in spirit to two lines of work. On the classical side, Sipser and Spielman's expander codes~\cite{SipserS94} showed that constant-degree sparse graphs can already support linear distance and efficient decoding. On the quantum side, the pre-breakthrough benchmark of Tillich and Z{\'e}mor~\cite{TZ09} gave positive rate together with square-root distance, while quantum expander codes~\cite{LeverrierTZ15}, decodable quantum LDPC codes from high-dimensional expanders~\cite{EvraKZ20}, quantum Tanner codes~\cite{LZ22}, balanced product quantum codes~\cite{BreuckmannE21}, and the recent asymptotically good constructions~\cite{PK21,DHLV22} established a modern trajectory toward sparse quantum codes with simultaneously strong asymptotic parameters. Our result is complementary to those works: rather than building asymptotically good families from expansion or product constructions, we isolate a concrete finite-degree regime and prove that the specified sparse quantum CSS construction already attains the GV benchmark there.

The sparse family has an explicit random sampling rule; the distance statements below are high-probability ensemble results, rather than a deterministic algorithm selecting a good code at each blocklength. On the HA side, we adapt the visible-weight analysis of~\cite{HA05} to the socket-based regular ensemble of~\cite[Secs.~3.3--3.4]{RU08}. On the MN side, we analyze the true stacked ensemble \(A_X=[A_Z;A_\Delta]\) rather than a homogeneous surrogate, which requires a refined exact-enumerator calculation for a genuinely nonstandard sparse matrix model.

The main contributions are:
\begin{enumerate}
\item For every fixed balanced degree triple \((j_Z,j_X,k)\) with \(j_Z\ge4\), even \(j_Z\), and even \(j_\Delta\), we prove the existence of positive constants \(\delta_Z\) and \(\delta_X\) such that \(\PP[d_Z^{\mathrm{rel}}\le \delta_Z n]\to 0\) and \(\PP[d_X^{\mathrm{rel}}\le \delta_X n]\to 0\).
\item Inside the stated low-degree search window, we prove that all 56 balanced triples with even \(k\) and \(j_Z\ge4\) attain the classical GV distance at finite degree on both constituent sides.
\item Consequently, the corresponding nested CSS family attains the CSS GV distance at fixed finite degree throughout that region.
\end{enumerate}

The certified set is a sufficient parameter region for GV attainment. No converse excluding attainment outside this set is asserted. Of its 56 triples, 32 have even \(j_Z\); the remaining 24 use an odd-syndrome complement argument.

This finite-degree theorem is genuinely different from large-degree limit statements. In particular, the HA result of~\cite{HA05} is an \(\varepsilon\)--\(K\) theorem: it shows approach to the GV bound as \(k\to\infty\), not finite-degree GV attainment for each fixed degree triple. Our proof closes the positive-rate finite-degree cases by reducing the exponent inequalities to scalar entropy and convexity inequalities, followed by exact rational endpoint comparisons. The zero-rate boundary cases use rational interval subdivision with explicit logarithm remainders~\cite{MKC09,Tucker11}.

At a high level, the proof has three layers. First, exact ensemble formulas reduce the distance problem to exponent inequalities for visible-weight enumerators. Second, almost all of the parameter space is handled analytically by trial-point bounds and pairing estimates. Third, finite scalar comparisons are verified using exact rational arithmetic; the separate zero-rate boundary proof certifies its residual scalar intervals by rational subdivision. On the MN side, entropy concavity reduces the stacked exponent to one hidden-weight variable. On the HA side, entropy conjugacy removes the visible weight. The accompanying standard-library verifiers check every finite comparison and every interval in the stated degree ranges.

The remainder of this paper is organized as follows. Section~2 defines the family and its basic properties. Sections~3 and~4 state the HA-side and MN-side distance theorems. Section~5 deduces the CSS relative-distance and CSS-GV consequences. Section~6 records the finite-degree parameter landscape. The appendices contain the deferred proofs, including the certification arguments and the convergence of actual rates to the design rates.

\section{Construction and Basic Properties}

We work over the binary field \(\F_2\).
\(\Ker M\) denotes the kernel of a matrix \(M\), and \(\Row(M)\) its row space.
The binary entropy function is \(\h(x):=-x\log_2 x-(1-x)\log_2(1-x)\)
for \(0<x<1\), extended by \(\h(0)=\h(1)=0\).
The inverse \(\h^{-1}\) always denotes the branch with values in
\([0,1/2]\). For a linear code \(C\), write
\(d(C):=\min\{\wt(\bv):\bv\in C\setminus\{0\}\}\), with
\(d(\{0\})=+\infty\).
This section first gives in Section 2.1 the general framework of the proposed code pair under the sole nested assumption \(\Row(A_Z)\subseteq\Row(A_X)\) for arbitrary matrices \(A_Z,A_X,B\), together with the CSS condition and the dimension formulas for the actual rates. Section 2.2 then specializes this framework to regular LDPC matrices and a square regular sparse map. Section 2.3 states the design rates, the balanced condition, and an illustrative concrete example. Finally, Section 2.4 explains how to represent compressed syndromes by sparse affine systems.

\subsection{General Definition}

\begin{definition}[General Framework of the Proposed Construction]\label{def:framework-jp}
Let \(A_Z\in\F_2^{m_Z\times n}\), \(A_X\in\F_2^{m_X\times n}\), and \(B\in\F_2^{n\times n}\) be arbitrary matrices satisfying \(\Row(A_Z)\subseteq \Row(A_X)\).
For example, this condition is automatically satisfied if \(A_Z\) is chosen as a row submatrix of \(A_X\).
In what follows, the rightmost \(n\) coordinates correspond to the visible variable \(\bv\in\F_2^n\), while the left block corresponds to hidden variables. More precisely, the left block is \(\bu\in\F_2^n\) for \(H_Z'\) and \(\bw\in\F_2^{m_X}\) for \(H_X'\).
Define the extended parity-check matrices with hidden variables by
\begin{equation*}
\begin{aligned}
H_Z'
:=
\begin{bmatrix}
A_Z & 0\\
B   & I_n
\end{bmatrix},
\qquad
H_X'
:=
\begin{bmatrix}
A_X^T & B^T
\end{bmatrix}
\end{aligned}
\end{equation*}
The relevant codes \(\CZ\) and \(\CX\) are obtained by puncturing the left hidden-variable part of \(\Ker H_Z'\) and \(\Ker H_X'\), respectively. Concretely,
\(\CZ=\{\bv\in\F_2^n:\exists \bu\in\F_2^n,\ H_Z'(\bu,\bv)^T=\zeros\}\)
and
\(\CX=\{\bv\in\F_2^n:\exists \bw\in\F_2^{m_X},\ H_X'(\bw,\bv)^T=\zeros\}\)
define the Z-side and X-side codes.
\end{definition}

In this paper, we refer to the Z-side code \(\CZ\) and the X-side code \(\CX\) defined above, and later to their regular sparse specializations, as HA-type and MN-type codes in a broad sense, respectively. This terminology is broader than in the original papers~\cite{MN96,HA05}: there the ensembles are introduced through more specific graphical constructions and decoding viewpoints, whereas here we abstract them as punctured hidden-variable codes represented by extended parity-check matrices and, equivalently, by compressed parity-check matrices tailored to the nested CSS construction.

In what follows, especially when we specialize to regular sparse families, these matrices are sampled according to the socket-based random graph ensemble of~\cite[Secs.~3.3--3.4]{RU08}. This choice is natural for the exact-enumerator analysis carried out here and is also consistent with future decoding analyses based on density evolution~\cite[Sec.~3.9 and App.~B]{RU08}.

This hidden-variable representation immediately yields the reduced forms used later in the analysis: \(\CZ=B(\Ker A_Z)\), \(C_Z(A_X):=B(\Ker A_X)\), and \(\CX=\{\bv\in\F_2^n:\exists \bw,\ A_X^T \bw+B^T \bv=0\}\). Indeed, \(H_Z'(\bu,\bv)^T=\zeros\) is equivalent to \(A_Z\bu=\zeros\) and \(\bv=B\bu\), while \(H_X'(\bw,\bv)^T=\zeros\) is equivalent to \(A_X^T\bw+B^T\bv=\zeros\). The auxiliary code \(C_Z(A_X)\), used later, is obtained from the Z-side expression for \(\CZ\) by replacing \(A_Z\) with \(A_X\).

Up to this point, all matrices, codes, and parity-check representations are deterministic objects attached to a fixed choice of \(A_Z,A_X,B\). In particular, the general framework of Section~2.1, the compressed parity-check representatives, and the dimension identities below are pointwise algebraic statements for each realization, and no randomness has yet been introduced.

\begin{theorem}[Nested CSS pair]\label{prop:css-jp}
Under Definition~\ref{def:framework-jp}, \(\CZ^\perp\subseteq \CX\) holds. In particular, \((\CX,\CZ)\) forms a CSS pair.
\end{theorem}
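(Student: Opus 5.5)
The plan is to use the reduced descriptions recorded just after Definition~\ref{def:framework-jp}: \(\CZ=B(\Ker A_Z)\) and \(\CX=\{\bv:\exists\bw,\ A_X^T\bw+B^T\bv=\zeros\}\). Equivalently, \(\CX=\{\bv: B^T\bv\in\Row(A_X)\}\), since \(A_X^T\bw\) ranges exactly over the column space of \(A_X^T\), which is \(\Row(A_X)\) viewed as column vectors. The argument then has three steps: first compute \(\CZ^\perp\), then use the nesting hypothesis to show that each element of \(\CZ^\perp\) satisfies the defining condition of \(\CX\), and finally read off the CSS property.

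First, I compute \(\CZ^\perp\). A vector \(\bv\) lies in \(\CZ^\perp\) if and only if \(\bv^T B\bu=0\) for all \(\bu\in\Ker A_Z\). This says \((B^T\bv)^T\bu=0\) for all \(\bu\in\Ker A_Z\), that is, \(B^T\bv\in(\Ker A_Z)^\perp\). The standard finite-dimensional identity over \(\F_2\) gives \((\Ker A_Z)^\perp=\Row(A_Z)\); it follows from rank–nullity together with the obvious inclusion \(\Row(A_Z)\subseteq(\Ker A_Z)^\perp\). Hence
\[
\CZ^\perp=\{\bv\in\F_2^n: B^T\bv\in\Row(A_Z)\}.
\]

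Second, I invoke the nesting hypothesis \(\Row(A_Z)\subseteq\Row(A_X)\). Take \(\bv\in\CZ^\perp\). Then \(B^T\bv\in\Row(A_X)\), so \(B^T\bv=A_X^T\bw\) for some \(\bw\in\F_2^{m_X}\). Over \(\F_2\), where signs do not matter, this reads \(A_X^T\bw+B^T\bv=\zeros\), so \(\bv\in\CX\). This gives \(\CZ^\perp\subseteq\CX\).

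Third, the CSS statement follows. The inclusion \(\CZ^\perp\subseteq\CX\) is the defining CSS condition for the pair \((\CX,\CZ)\); equivalently, it can be written \(\CX^\perp\subseteq\CZ\) by taking orthogonal complements. The only step needing care is the identity \((\Ker A_Z)^\perp=\Row(A_Z)\) over \(\F_2\). It holds because dimensions match: \(\dim\Ker A_Z=n-\rank A_Z\), so \(\dim(\Ker A_Z)^\perp=\rank A_Z=\dim\Row(A_Z)\). The identity remains valid even though \(\F_2\) admits self-orthogonal vectors, because the standard bilinear form is nondegenerate. I do not anticipate any real obstacle beyond these steps.
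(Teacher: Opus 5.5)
Your proof is correct and follows essentially the same route as the paper: both characterize \(\CZ^\perp=\{\bv: B^T\bv\in\Row(A_Z)\}\) via \((\Ker A_Z)^\perp=\Row(A_Z)\) and then invoke the nesting hypothesis. The only difference is that you apply \(\Row(A_Z)\subseteq\Row(A_X)\) directly, whereas the paper passes through \(C_Z(A_X)=B(\Ker A_X)\subseteq\CZ\) and takes orthogonal complements, which also records the identity \(\CX=C_Z(A_X)^\perp\) that it uses later for the compressed matrix \(H_X\) and the rate formulas.
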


\begin{proof}
For any subspace \(U\subseteq\F_2^n\),
\[
\bv\in (BU)^\perp
\iff
\langle \bv,B\bu\rangle =0\ \forall\,\bu\in U
\iff
\langle B^T\bv,\bu\rangle =0\ \forall\,\bu\in U
\iff
B^T\bv\in U^\perp.
\]
First,
\(\CZ^\perp=\{\bv\in\F_2^n:B^T\bv\in(\Ker A_Z)^\perp=\Row(A_Z)\}\),
and similarly
\(\bigl(C_Z(A_X)\bigr)^\perp=\{\bv\in\F_2^n:B^T\bv\in(\Ker A_X)^\perp=\Row(A_X)\}=\CX\).
The last equality holds because \(B^T\bv\in\Row(A_X)\) can be written as
\(A_X^T\bw+B^T\bv=0\) for some \(\bw\).
On the other hand, \(\Row(A_Z)\subseteq \Row(A_X)\) implies \(\Ker A_X\subseteq \Ker A_Z\), so
\(
C_Z(A_X)=B(\Ker A_X)\subseteq B(\Ker A_Z)=\CZ
\).
Taking orthogonal complements gives
\(
\CZ^\perp\subseteq \bigl(C_Z(A_X)\bigr)^\perp=\CX
\)
as claimed.
\end{proof}

\begin{definition}[Compressed parity-check matrices]\label{def:compressed-pcm-jp}
For the general framework of Definition~\ref{def:framework-jp}, let \(\Row(H_Z)=\CZ^\perp\) and \(\Row(H_X)=\CX^\perp\), and call any visible-variable matrices \(H_Z\in\F_2^{r_Z\times n}\) and \(H_X\in\F_2^{r_X\times n}\) compressed parity-check matrices for \(\CZ\) and \(\CX\), respectively, if they satisfy these row-space conditions.
By the proof of Theorem~\ref{prop:css-jp}, one may take
\(\Row(H_Z)=\{\bv\in\F_2^n:\exists \bx\in\F_2^{m_Z},\ A_Z^T\bx+B^T\bv=0\}\)
and
\(\Row(H_X)=C_Z(A_X)=B(\Ker A_X)\).
In particular, if \(K_X\) is a basis matrix of \(\Ker A_X\), then \(H_X=K_XB^T\) is one possible choice of \(H_X\), and a basis of the kernel of \([A_Z^T\ B^T]\), projected to the visible component, gives one possible choice of \(H_Z\).
These matrices are not unique in general; different row-equivalent representatives define the same code.
\end{definition}

To distinguish them from the design rates introduced later in Definition~\ref{def:rates-jp}, we call the following quantities the actual coding rates:
\(R_Z:=\frac{\dim \CZ}{n}\), \(R_X:=\frac{\dim \CX}{n}\), and \(R_Q:=R_X+R_Z-1\),
where \(R_Q\) is the quantum rate. This is the standard CSS dimension formula
\(\dim Q=\dim \CX+\dim \CZ-n\) under \(\CZ^\perp\subseteq \CX\); see, for example,
\cite{CS96,Ste96}.

\begin{proposition}[Dimension formulas for the actual rates]\label{prop:actual-rates-jp}
\(L_Z:=\dim(\Ker A_Z\cap \Ker B)\) and \(L_X:=\dim(\Ker A_X\cap \Ker B)\).
Then \(R_Z=\frac{n-\rank A_Z-L_Z}{n}\), \(R_X=\frac{\rank A_X+L_X}{n}\), \(R_Q=\frac{\rank A_X-\rank A_Z+L_X-L_Z}{n}\), and moreover \(L_X\le L_Z\) holds.
\end{proposition}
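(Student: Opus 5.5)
The plan is to compute both dimensions by rank--nullity applied to the linear maps that define the codes, and to use the orthogonality relations already established in the proof of Theorem~\ref{prop:css-jp}.

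For the Z-side, \(\CZ=B(\Ker A_Z)\) is the image of the restriction of \(B\) to \(\Ker A_Z\). Its kernel is \(\Ker A_Z\cap\Ker B\), so by rank--nullity
\(\dim\CZ=\dim\Ker A_Z-L_Z=n-\rank A_Z-L_Z\). Dividing by \(n\) gives the formula for \(R_Z\).

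For the X-side, I will not count solutions of \(A_X^T\bw+B^T\bv=0\) directly, because the hidden variable \(\bw\) may be non-unique. Instead I use the identity \(\CX=\bigl(C_Z(A_X)\bigr)^\perp\) from the proof of Theorem~\ref{prop:css-jp}, where \(C_Z(A_X)=B(\Ker A_X)\). The same rank--nullity argument, applied to \(B\) restricted to \(\Ker A_X\), gives \(\dim C_Z(A_X)=n-\rank A_X-L_X\). Hence
\[
\dim\CX=n-\dim C_Z(A_X)=\rank A_X+L_X,
\]
which gives the formula for \(R_X\). Substituting into \(R_Q=R_X+R_Z-1\) then yields \(R_Q=(\rank A_X-\rank A_Z+L_X-L_Z)/n\).

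Finally, \(\Row(A_Z)\subseteq\Row(A_X)\) gives \(\Ker A_X\subseteq\Ker A_Z\). Intersecting both sides with \(\Ker B\) gives \(\Ker A_X\cap\Ker B\subseteq\Ker A_Z\cap\Ker B\), so \(L_X\le L_Z\).

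There is no real obstacle here. The only point needing care is to pass through the dual description of \(\CX\) rather than treating \(\bw\) as a free parameter. This relies on the non-degeneracy of the standard bilinear form over \(\F_2\), namely \(\dim U^\perp=n-\dim U\) for every subspace \(U\subseteq\F_2^n\).
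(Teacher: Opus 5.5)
Your proposal is correct and follows the paper's proof essentially line by line. Both use rank--nullity for \(B\) restricted to \(\Ker A_Z\) and to \(\Ker A_X\), the duality \(\CX=\bigl(B(\Ker A_X)\bigr)^\perp\) from the proof of Theorem~\ref{prop:css-jp} together with \(\dim U^\perp=n-\dim U\) over \(\F_2\), and the inclusion \(\Ker A_X\subseteq\Ker A_Z\) to get \(L_X\le L_Z\).
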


\begin{proof}
By the rank-nullity formula for the image of a linear map, \(\dim \CZ=\dim B(\Ker A_Z)=\dim \Ker A_Z-\dim(\Ker A_Z\cap\Ker B)=n-\rank A_Z-L_Z\), which yields \(R_Z=\frac{n-\rank A_Z-L_Z}{n}\) for the Z-side rate.
Also, \(\CX=\bigl(B(\Ker A_X)\bigr)^\perp\), so
\(\dim \CX=n-\dim B(\Ker A_X)=n-(n-\rank A_X-L_X)=\rank A_X+L_X\),
and therefore \(R_X=\frac{\rank A_X+L_X}{n}\) follows.
Hence
\(
R_Q=R_X+R_Z-1=\frac{\rank A_X-\rank A_Z+L_X-L_Z}{n}
\)
as stated.
Finally, \(\Ker A_X\subseteq \Ker A_Z\) implies
\(\Ker A_X\cap\Ker B\subseteq \Ker A_Z\cap\Ker B\),
and hence \(L_X\le L_Z\).
\end{proof}

\subsection{Definition Using LDPC Matrices}

In this subsection, we specialize the general framework of Section~2.1 to a family defined by regular sparse matrices.

\begin{definition}[Nested regular sparse family]\label{def:family-jp}
The regular sparse specialization of the general framework in Definition~\ref{def:framework-jp} is defined as follows. All degrees below are socket degrees; on simple graphs they equal binary row or column weights.
Take positive integers \(k,j_Z,j_X,k_Z,k_\Delta\) such that \(1\le j_Z<j_X\), and set
\(j_\Delta:=j_X-j_Z\).

Let \(m_Z,m_\Delta,m_X\) denote the numbers of rows of \(A_Z,A_\Delta,A_X\), respectively.
First sample \(A_Z\in\F_2^{m_Z\times n}\) from the standard \((j_Z,k_Z)\)-regular LDPC ensemble~\cite[Secs.~3.3--3.4]{RU08}.
Next sample \(A_\Delta\in\F_2^{m_\Delta\times n}\) independently from the standard \((j_\Delta,k_\Delta)\)-regular LDPC ensemble, and define \(A_X:=\begin{bmatrix}A_Z\\ A_\Delta\end{bmatrix}\in\F_2^{m_X\times n}\) and \(m_X:=m_Z+m_\Delta\).

In a regular Tanner graph, the number of edges on the variable-node side must agree with that on the check-node side, so
\(j_Z n = k_Z m_Z\) and \(j_\Delta n = k_\Delta m_\Delta\)
must hold. In particular,
\(m_Z=\frac{j_Z}{k_Z}n\) and \(m_\Delta=\frac{j_\Delta}{k_\Delta}n\)
follow. Throughout we assume \(k_Z\mid j_Z n\) and \(k_\Delta\mid j_\Delta n\) so that these quantities are integers.

Finally, sample \(B\in\F_2^{n\times n}\) independently from the square \((k,k)\)-regular sparse ensemble~\cite[Secs.~3.3--3.4]{RU08}.
Since \(A_X=[A_Z;A_\Delta]\), the inclusion \(\Row(A_Z)\subseteq \Row(A_X)\) holds automatically, and Definition~\ref{def:framework-jp} therefore gives a nested CSS pair \((\CX,\CZ)\).
\end{definition}

From Definition~\ref{def:family-jp} onward, randomness enters through the independent configuration-model draws of \(A_Z\), \(A_\Delta\), and \(B\). These draws determine \(A_X=[A_Z;A_\Delta]\), the extended matrices \(H'_Z,H'_X\), and the codes \(\CZ,\CX\). They also determine compressed representatives \(H_Z,H_X\) and the rates \(R_Z,R_X,R_Q\), constituent distances \(d(\CZ),d(\CX)\), and relative distances \(d_Z^{\mathrm{rel}},d_X^{\mathrm{rel}}\). By contrast, the degree parameters \(j_Z,j_X,j_\Delta,k_Z,k_\Delta,k\), the blocklength \(n\), the row counts \(m_Z,m_\Delta,m_X\), and the design rates \(R_Z^{\mathrm{des}},R_X^{\mathrm{des}},R_Q^{\mathrm{des}}\) are deterministic. Whenever \(H_Z,H_X\) are used in a probabilistic context, they denote representatives chosen deterministically from the sampled row spaces; the probabilistic statements depend only on the underlying codes or row spaces, not on that representative choice.

From this point on, the stacked matrix \(A_X=[A_Z;A_\Delta]\) itself is treated as the object of analysis, without replacing it by a homogeneous regular ensemble.
In this paper, the ``standard \((j,k)\)-regular LDPC ensemble'' means the regular ensemble obtained from the socket-based configuration model of~\cite[Secs.~3.3--3.4]{RU08}, with optional conditioning on the simple-graph event when needed. Likewise, the ``square \((k,k)\)-regular ensemble'' means the square regular case of the same configuration model. That is, one assigns \(j\) (or \(k\)) sockets to each column-side node, \(k\) (or \(j\)) sockets to each row-side node, and then forms the Tanner graph by a uniformly random perfect matching. The coefficient-extraction formulas and pairing bounds below are derived first as exact statements for the unconditioned configuration model, which allows multiple edges. Its binary matrices record edge multiplicities modulo two. Thus prescribed degrees are socket degrees; after conditioning on a simple graph they equal binary row and column weights. On the other hand, for fixed degrees, the simple-graph event has probability bounded away from zero as \(n\to\infty\), so the \(o(1)\) first-moment bounds and negative exponential upper bounds proved in the unconditioned model transfer unchanged to the conditioned simple ensemble. Hence the asymptotic claims on distance and rate are justified for the \(\F_2\)-valued regular matrix ensembles used in the main text.

\subsection{Design Rates and Balanced Conditions}

\begin{definition}[Design rates]\label{def:rates-jp}
The design rates are the rates determined purely by the degrees, obtained from the actual-rate formulas in Proposition~\ref{prop:actual-rates-jp} by formally discarding the finite-length contributions of rank deficiency and kernel overlap. Namely, substitute \(\rank A_Z=m_Z\), \(\rank A_X=m_X\), and \(L_Z=L_X=0\) into the formulas, and define \(R_Z^{\mathrm{des}}:=(n-m_Z)/n\) and \(R_X^{\mathrm{des}}:=m_X/n\).
\end{definition}

For even \(k\), one has \(B\ones_{[n]}=0\), so \(B\) is singular
at every finite blocklength. For \(k\ge3\),
Proposition~\ref{prop:B-rank-jp} gives \(\dim\Ker B=o(n)\) in probability.
Together with the block-rank estimates under the hypotheses of
Theorem~\ref{thm:actual-rates-jp}, this yields convergence of the actual
rates to their design values in that regime.

\begin{proposition}[Formula for the design quantum rate]\label{prop:design-qrate-jp}
\(R_Q^{\mathrm{des}}=R_X^{\mathrm{des}}+R_Z^{\mathrm{des}}-1=(m_X-m_Z)/n=m_\Delta/n=j_\Delta/k_\Delta\) holds.
\end{proposition}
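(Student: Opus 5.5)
The statement is a direct algebraic consequence of Definition~\ref{def:rates-jp} together with the row-count identities recorded in Definition~\ref{def:family-jp}, so the plan is a short chain of substitutions with no probabilistic content. I will first write out the two design rates as given, namely \(R_Z^{\mathrm{des}}=(n-m_Z)/n\) and \(R_X^{\mathrm{des}}=m_X/n\). Their sum minus one is then
\[
R_X^{\mathrm{des}}+R_Z^{\mathrm{des}}-1=\frac{m_X+n-m_Z-n}{n}=\frac{m_X-m_Z}{n}.
\]
This is also the formula of Proposition~\ref{prop:actual-rates-jp} for \(R_Q\) after substituting \(\rank A_X=m_X\), \(\rank A_Z=m_Z\), and \(L_X=L_Z=0\). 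That observation confirms that the quantity defined as \(R_Q^{\mathrm{des}}\) agrees with the natural design analogue of \(R_Q\).

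Next I will use the stacking \(A_X=[A_Z;A_\Delta]\) from Definition~\ref{def:family-jp}, which fixes the row count \(m_X=m_Z+m_\Delta\). This gives \((m_X-m_Z)/n=m_\Delta/n\).

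Finally, the edge-count identity \(j_\Delta n=k_\Delta m_\Delta\) holds under the standing divisibility assumption \(k_\Delta\mid j_\Delta n\). It yields \(m_\Delta=(j_\Delta/k_\Delta)n\), and hence \(m_\Delta/n=j_\Delta/k_\Delta\), which completes the chain of equalities.

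There is no genuine obstacle. The only point needing care is to note that the formula is deterministic: it holds for every \(n\) in the admissible sequence and does not depend on the realization of \(A_Z\), \(A_\Delta\), or \(B\). Since the design rates are defined purely from the degrees, no statement about ranks or kernel intersections is needed here. The convergence of the actual rates to these values is deferred to Theorem~\ref{thm:actual-rates-jp}.
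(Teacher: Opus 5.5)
Your proposal is correct and is essentially the paper's own proof. Both substitute the design-rate definitions \(R_Z^{\mathrm{des}}=(n-m_Z)/n\) and \(R_X^{\mathrm{des}}=m_X/n\), use \(m_X=m_Z+m_\Delta\) from the stacking, and finish with the edge-count identity \(j_\Delta n=k_\Delta m_\Delta\).
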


\begin{proof}
By Definitions~\ref{def:rates-jp} and~\ref{def:family-jp}, \(R_X^{\mathrm{des}}=m_X/n=(m_Z+m_\Delta)/n=j_Z/k_Z+j_\Delta/k_\Delta\) and \(R_Z^{\mathrm{des}}=(n-m_Z)/n=1-j_Z/k_Z\). Hence \(R_X^{\mathrm{des}}+R_Z^{\mathrm{des}}-1=m_X/n+(n-m_Z)/n-1=(m_X-m_Z)/n=m_\Delta/n=j_\Delta/k_\Delta\).
\end{proof}

The necessary and sufficient condition for the two classical design rates to coincide is \(R_X^{\mathrm{des}}=R_Z^{\mathrm{des}}\iff m_X/n=1-m_Z/n\iff m_X+m_Z=n\iff 2j_Z/k_Z+j_\Delta/k_\Delta=1\), and we call this the general balanced condition.

\begin{example}[An illustrative example of the general balanced condition]\label{ex:general-balanced-jp}
To illustrate the general balanced condition concretely, consider the example
\(n=40\), \(m_Z=15\), \(m_\Delta=10\), and \(m_X=25\),
with degrees \((j_Z,k_Z,j_\Delta,k_\Delta,k)=(3,8,2,8,2)\).
Then \(m_Z=(3/8)n\), \(m_\Delta=(2/8)n\), and \(m_X=(5/8)n\), so \(m_X+m_Z=n\) indeed holds.
On the other hand, since \(k_Z=k_\Delta=8\) whereas \(k=2\), this example does not belong to the homogeneous specialization introduced immediately below.
Accordingly, Figures~\ref{fig:random-hz-ext-jp} and~\ref{fig:random-hx-ext-jp} visualize a stacked block structure already permitted at the level of the general balanced condition.

\begin{figure}[t]
\centering
\includegraphics[width=\textwidth]{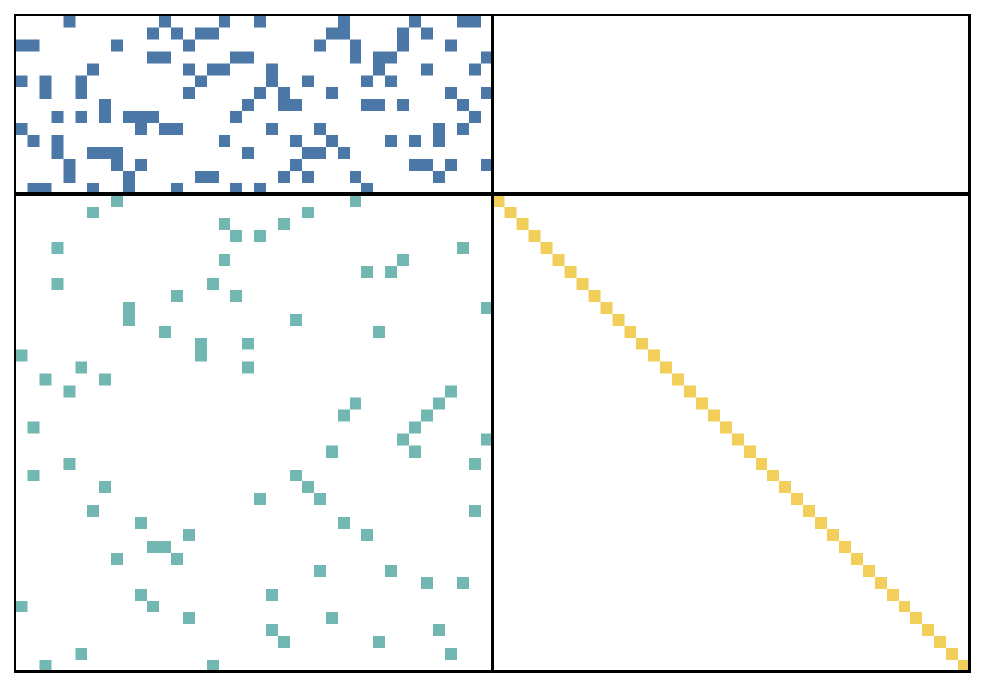}
\caption{Z-side extended parity-check matrix for the illustrative example \((j_Z,k_Z,j_\Delta,k_\Delta,k)=(3,8,2,8,2)\), \(n=40\), \(m_Z=15\), and \(m_\Delta=10\) (hence \(m_X=25\)):
\(
H_Z^\prime=
\begin{bmatrix}
A_Z & 0\\
B   & I_n
\end{bmatrix}
\in \F_2^{55\times 80}.
\)
The blue block in the upper left is the \((3,8)\)-regular matrix \(A_Z\in\F_2^{15\times 40}\), the light-blue block in the lower left is the square \((2,2)\)-regular sparse map \(B\in\F_2^{40\times 40}\), the yellow block in the lower right represents \(I_n\), and the upper-right block is zero. Black lines indicate block boundaries.}
\label{fig:random-hz-ext-jp}
\end{figure}

\begin{figure}[t]
\centering
\includegraphics[width=\textwidth]{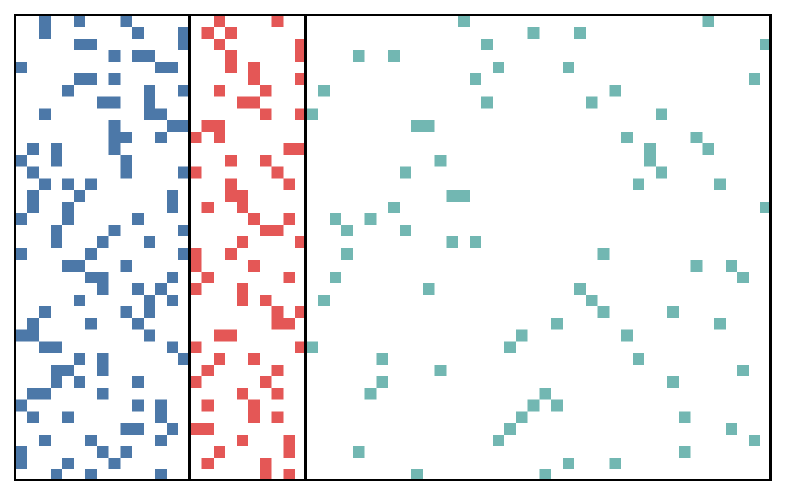}
\caption{X-side extended parity-check matrix for the same example:
\(H_X^\prime=[A_X^T\ B^T]=[A_Z^T\ A_\Delta^T\ B^T]\in \F_2^{40\times 65}\).
From left to right, the blue block is the transpose \(A_Z^T\in\F_2^{40\times 15}\) of the \((3,8)\)-regular block, the red block is the transpose \(A_\Delta^T\in\F_2^{40\times 10}\) of the \((2,8)\)-regular block, and the light-blue block is \(B^T\in\F_2^{40\times 40}\). Thus the stacked structure of \(A_X=[A_Z;A_\Delta]\in\F_2^{25\times 40}\) is visualized directly. Black lines indicate block boundaries.}
\label{fig:random-hx-ext-jp}
\end{figure}

For the same example, Figures~\ref{fig:random-hz-cmpr-jp} and~\ref{fig:random-hx-cmpr-jp} show compressed parity-check matrices \(H_Z,H_X\) obtained without taking reduced row echelon form (RREF).
Here \(H_Z\) is obtained by projecting a basis of the kernel of \([A_Z^T\ B^T]\) to the visible component, and \(H_X\) is defined from a basis matrix \(K_X\) of \(\Ker(A_X)\) by
\(H_X=K_XB^T\).
No final row-reduction by elementary row operations is applied to either visible matrix.
For this generated example, we also verified directly over \(\F_2\) that \(H_XH_Z^T=0\) holds.
The design rates are \(R_Z^{\mathrm{des}}=(n-m_Z)/n=25/40=0.625\), \(R_X^{\mathrm{des}}=m_X/n=25/40=0.625\), and \(R_Q^{\mathrm{des}}=m_\Delta/n=10/40=0.25\). On the other hand, for the generated matrices we have \(\rank H_Z=16\) and \(\rank H_X=15\), so \(R_Z=\dim \CZ/n=(40-16)/40=24/40=0.6\), \(R_X=\dim \CX/n=(40-15)/40=25/40=0.625\), and \(R_Q=(\dim\CX+\dim\CZ-n)/n=(25+24-40)/40=9/40=0.225\).
Here \(\rank A_Z=15\), \(\rank A_X=24\), and
\(L_Z=L_X=1\), which account for these rates through
Proposition~\ref{prop:actual-rates-jp}. The accompanying matrix files and
\path{verify_published_example.py} reproduce the binary-rank and
orthogonality checks.

\begin{figure}[t]
\centering
\includegraphics[width=0.50\textwidth]{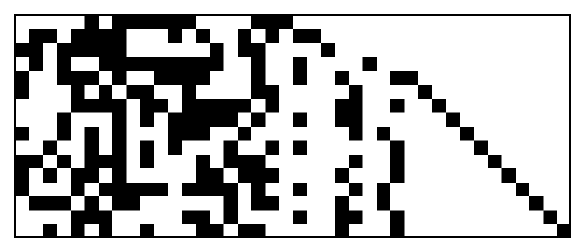}
\caption{Z-side compressed parity-check matrix for the same example:
\(
H_Z\in\F_2^{16\times 40}.
\)
It is obtained by projecting a basis of the kernel of \([A_Z^T\ B^T]\) to the visible component, without taking RREF on the final visible matrix. Thus each row represents an explicit generator of \(\CZ^\perp\).}
\label{fig:random-hz-cmpr-jp}
\end{figure}

\begin{figure}[t]
\centering
\includegraphics[width=0.615\textwidth]{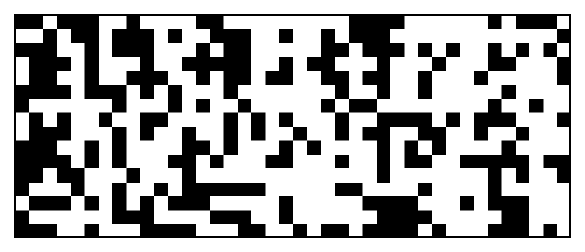}
\caption{X-side compressed parity-check matrix for the same example:
\(
H_X\in\F_2^{16\times 40}.
\)
If \(K_X\) is a basis matrix of \(\Ker(A_X)\), then this figure displays \(H_X=K_XB^T\) directly, again without taking RREF on the final visible matrix. Thus each row represents an explicit generator of \(\CX^\perp\).}
\label{fig:random-hx-cmpr-jp}
\end{figure}
\end{example}

In the distance analysis below, we assume the homogeneous specialization \(k_Z=k_\Delta=k\).
This assumption is made in order to simplify the coefficient-extraction formulas and symmetry statements used in the fixed-degree analysis from Section 3 onward and in the finite-degree GV theorems for explicit triples.
That is, the first half of Section 2 defines the stacked family with general \((j_Z,k_Z)\) and \((j_\Delta,k_\Delta)\)-regular blocks, and from this point on we restrict attention to the homogeneous subclass in which the row degree is also common.
Since \(j_\Delta=j_X-j_Z\), the balanced condition then reduces to \(j_X+j_Z=k\),
and from now on we assume this homogeneous balanced condition and call any such parameter triple \((j_Z,j_X,k)\) a balanced triple. In this case
\(R_Z^{\mathrm{des}}=R_X^{\mathrm{des}}=\frac{j_X}{k}=1-\frac{j_Z}{k}\)
and
\(R_Q^{\mathrm{des}}=\frac{j_\Delta}{k}=1-\frac{2j_Z}{k}>0\)
hold.
Note that the lower bound \(4\le j_Z\) is not needed to define the nested CSS pair itself; it is imposed only later in the fixed-degree distance analysis.

The next theorem states that the design rates introduced here are not merely formal proxies: they asymptotically describe the actual rates of the finite-length codes.
Its significance is that the distance analysis below may therefore use the ratios determined by the design rates without losing contact with the asymptotic parameters of the actual code family.
For the following rate theorem, assume the homogeneous specialization \(k_Z=k_\Delta=k\), the balanced relation \(j_X=k-j_Z\), \(4\le j_Z<k/2\), and even \(k\). Both parities of \(j_Z\) are included; \(j_\Delta=k-2j_Z\) is always even.

The lemmas and proofs needed for the next theorem are deferred to Appendix~\ref{app:actual-rates-jp}.

\begin{theorem}[The actual rates converge in probability to the design rates]\label{thm:actual-rates-jp}
Fix a homogeneous balanced triple with \(k_Z=k_\Delta=k\),
\(j_X=k-j_Z\), \(4\le j_Z<k/2\), and even \(k\).
For the ensemble in Definition~\ref{def:family-jp},
\(R_Z \to R_Z^{\mathrm{des}}=\frac{j_X}{k}\), \(R_X \to R_X^{\mathrm{des}}=\frac{j_X}{k}\), and \(R_Q \to R_Q^{\mathrm{des}}=\frac{j_X-j_Z}{k}\) hold as convergence in probability when \(n\to\infty\).
\end{theorem}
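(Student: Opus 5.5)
By Proposition~\ref{prop:actual-rates-jp}, it suffices to show that \((m_Z-\rank A_Z)/n\), \((m_X-\rank A_X)/n\), \(L_Z/n\) and \(L_X/n\) all tend to zero in probability. Indeed,
\[
R_Z-R_Z^{\mathrm{des}}=\frac{(m_Z-\rank A_Z)-L_Z}{n},
\qquad
R_X-R_X^{\mathrm{des}}=\frac{(\rank A_X-m_X)+L_X}{n},
\]
and \(R_Q\) is their combination. The kernel-overlap terms are the easy part. Since \(0\le L_X\le L_Z\le\dim\Ker B\), and Proposition~\ref{prop:B-rank-jp} gives \(\dim\Ker B=o(n)\) in probability for \(k\ge3\), we have \(L_Z/n,L_X/n\to0\) in probability. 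This holds in particular here, because \(k>2j_Z\ge8\). So the whole theorem reduces to showing that the rank deficiencies of \(A_Z\) and of the stacked matrix \(A_X=[A_Z;A_\Delta]\) are \(o(n)\).

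For the rank deficiency, I will use the dual viewpoint \(m-\rank A=\dim\Ker A^T\). The goal is to show that \(\Ker A_X^T\) contains, with high probability, only vectors supported on a sublinear set or a bounded-dimensional family of ``parity'' vectors. I will argue by the first moment on the left kernel. For \(\bx=(\bx_Z,\bx_\Delta)\in\F_2^{m_X}\) with weights \((a,b)\), the event \(A_X^T\bx=0\) says that every variable node sees an even number of edges from the selected check sockets. In the configuration model this probability is an exact coefficient extraction,
\[
\PP[A_X^T\bx=0]=\frac{\mathrm{coef}\bigl(((1+z)^{j_Z}+(1-z)^{j_Z})^n/2^n,\,z^{ka}\bigr)}{\binom{j_Zn}{ka}}\times(\text{same for }\Delta),
\]
but the two blocks share the variable nodes. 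So the correct object is a single generating function per variable node, namely the number of ways to choose \(s\) of its \(j_Z\) \(Z\)-sockets and \(t\) of its \(j_\Delta\) \(\Delta\)-sockets with \(s+t\) even. I then bound the expected number of nonzero left-kernel vectors of relative weights \((\alpha,\beta)\) by \(2^{n(F(\alpha,\beta)+o(1))}\) via a saddle-point bound. The plan is to show \(F<0\) on the region \(\{\epsilon\le\alpha+\beta\}\) away from the all-ones-type points. Those points occur because \(k\) is even: when \(j\) is even, the all-ones vector on a block is in the left kernel, so the expected count is not small there, but these contribute only \(O(1)\) dimensions. For small weights, I will use the standard expansion argument: \(j_Z\ge4\) and \(j_\Delta\ge2\) make every check set of sublinear size have a variable node with an odd number of incident edges, so \(F<0\) near zero with an explicit \(n^{-\Omega(1)}\) polynomial bound.

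From the first-moment bound I get \(\EE\bigl[\#\{\bx\ne0:\ A_X^T\bx=0,\ \bx\notin V_0\}\bigr]\to0\), where \(V_0\) is the span of the \(O(1)\) parity vectors. Hence \(\dim\Ker A_X^T\le\dim V_0=O(1)\) with probability \(1-o(1)\), since any larger kernel would contain a vector outside \(V_0\). The same argument applied to \(A_Z\) alone, a standard \((j_Z,k)\)-regular ensemble with \(j_Z\ge4\) (the case of Miller--Cohen type full-rank results), handles \(\rank A_Z\). Transfer to the simple-graph ensemble follows from the remark after Definition~\ref{def:family-jp}.

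The main obstacle is the global negativity of \(F(\alpha,\beta)\) over the whole two-parameter weight square for the stacked ensemble. The odd-\(j_Z\) case is especially delicate, because the parity structure of the two blocks interacts: the all-ones vector of \(A_X\) has column sums \(j_Z+j_\Delta\), which is odd, so it is not in the kernel, but mixed near-complement vectors can produce local maxima of \(F\). I would first try to reduce \(F\) to a one-variable function by optimizing the saddle parameter in closed form, using concavity of entropy as the paper does on the MN side. If global negativity fails in a small region, I would fall back on a weaker statement that suffices here: only \(o(n)\) deficiency is needed, not \(O(1)\). That weaker statement follows if \(\sup F<0\) only away from a neighborhood of the maximizers, combined with a Markov bound of the form \(2^{\dim\Ker}\le\) the kernel count.
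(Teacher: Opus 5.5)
Your reduction matches the paper's proof. Proposition~\ref{prop:actual-rates-jp} reduces the theorem to \(m_Z-\rank A_Z\), \(m_X-\rank A_X\), \(L_Z\) and \(L_X\) being \(o(n)\). The overlaps follow from \(0\le L_X\le L_Z\le\dim\Ker B\) and Proposition~\ref{prop:B-rank-jp}. The rank deficiencies come from a first moment on \(\Ker A_X^T\), using the shared per-column even-parity polynomial, a pairing bound for small supports, and a trial-point bound for large supports.

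For even \(j_Z\) (hence even \(j_\Delta\)) this goes through, provided you say explicitly that the neighbourhoods of your ``all-ones-type points'' are handled by the blockwise complement symmetry of Proposition~\ref{prop:fold-jp}. That symmetry maps them onto the low-weight region. Saying these points ``contribute only \(O(1)\) dimensions'' does not by itself control kernel vectors near \(V_0\) but not in it. A smaller correction: the small-support decay comes from the row degree \(k\ge10\), not from \(j_Z\ge4\), \(j_\Delta\ge2\). With \(u\) selected rows there are \(ku\) active sockets, and the pairing bound is \(O(n^{u(1-k/2)})\).

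The genuine gap is odd \(j_Z\). The theorem covers this case, and you leave it as ``first try \dots\ otherwise fall back.'' The fallback does not close it. Markov applied to \(|\Ker A_X^T|=2^{\dim\Ker A_X^T}\) needs \(\EE|\Ker A_X^T|=2^{o(n)}\), i.e.\ an upper exponent \(\le0\) on the whole weight square. Negativity away from a neighbourhood of unknown maximizers gives nothing if those maximizers are positive, which is exactly your worry.

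The missing idea is the odd-syndrome coset complement:
\begin{enumerate}
\item Since \(j_\Delta=k-2j_Z\) is even, you can still fold the \(\Delta\)-block.
\item On the upper half \(t_1>m_Z/2\) of the \(Z\)-block, replace \(\bp_Z\) by \(\bp_Z':=\bp_Z+\ones_{[m_Z]}\). Because \(A_Z^T\ones_{[m_Z]}=\ones_{[n]}\) for odd \(j_Z\), the kernel condition becomes \(A_Z^T\bp_Z'+A_\Delta^T\bp_\Delta=\ones_{[n]}\) with \(\wt(\bp_Z')<m_Z/2\).
\item This coset count has one-column polynomial \(\bigl((1+x)^{j_Z}(1+z)^{j_\Delta}-(1-x)^{j_Z}(1-z)^{j_\Delta}\bigr)/2\). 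Its coefficients are nonnegative, so the trial-point bound applies.
\item The resulting exponent carries \(\log_2(1-y_1^{j_Z}y_\Delta^{j_\Delta})\le\log_2(1+y_1^{j_Z}y_\Delta^{j_\Delta})\). It is therefore dominated pointwise by the even-case exponent, and Lemma~\ref{lem:stacked-zero-output-gap-jp} supplies the margin on the large-support part.
\item On the small corner \(t_1',t_\Delta\le\tau_0 n\) the count is identically zero. An all-ones syndrome needs an active socket in each of the \(n\) columns, but only \(k(t_1'+t_\Delta)\le2k\tau_0 n<n\) are available.
\end{enumerate}
So there are no ``mixed near-complement'' local maxima. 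With this step your argument yields \(\EE|\Ker A_X^T|=O(1)\) for both parities. The same one-block coset step is what makes the argument for \(A_Z\) alone work when \(j_Z\) is odd; there too the single block has no complement symmetry.
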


Accordingly, in the distance analysis below, the natural reference quantities are the ratios determined by the design rates, rather than finite-length rank deficiencies.

\subsection{Sparse-Matrix Representation of Syndrome Consistency}

In this subsection, we use the sparse extended matrices \(H_Z'\) and \(H_X'\) introduced in Section~2.2 to rewrite the decoding conditions for the compressed syndromes as affine sparse systems. This also makes explicit that the present family has bounded graphical complexity and admits sum-product (BP) updates with linear per-iteration complexity.

For the nested regular sparse family, the binary Tanner graphs of \(H_Z'\) and \(H_X'\) introduced in Definition~\ref{def:framework-jp} have at most \(j_Zn+kn+n\) and \(j_Xn+kn\) edges, respectively. The bounds are equalities on simple graphs. If \(j_Z,j_X,k_Z,k_\Delta,k\) are fixed, both are \(O(n)\). Therefore the present family has bounded graphical complexity in the sense of~\cite{HA05}.

Although the compressed-syndrome matrices \(H_Z,H_X\) are generally dense, syndrome decoding asks us to estimate noise vectors \(\be_X,\be_Z\) from the measured syndromes \(\bs_Z,\bs_X\) under the compressed syndrome equations \(H_Z\be_X=\bs_Z\) and \(H_X\be_Z=\bs_X\). Once a syndrome representative is injected, this decoding scenario can still be represented by affine sparse systems that retain the sparse structure of \(H_Z'\) and \(H_X'\).

In what follows, let \(H_Z\) be a full-row-rank compressed parity-check matrix in the sense of Definition~\ref{def:compressed-pcm-jp}, and let \(K_X\) be a full-row-rank basis matrix of \(\Ker A_X\). On the Z side, fix any right inverse \(\Gamma_Z\) satisfying \(H_Z\Gamma_Z=I\). On the X side, take the compressed representative \(H_X:=K_XB^T\) and fix any right inverse \(\Gamma_X\) satisfying \(K_X\Gamma_X=I\). Write \(\bt_Z:=\Gamma_Z\bs_Z\) and \(\bt_X:=\Gamma_X\bs_X\).

\begin{theorem}[Sparse affine representations of the syndrome equations]\label{prop:lifted-bp-jp}
The syndrome equations \(H_Z\be_X=\bs_Z\) and \(H_X\be_Z=\bs_X\) hold if and only if there exist \(\bff_X\in\F_2^n\) on the Z side and \(\bff_Z\in\F_2^{m_X}\) on the X side such that
\[
\begin{bmatrix}
A_Z & 0\\
B & I_n
\end{bmatrix}
\begin{bmatrix}
\bff_X\\
\be_X
\end{bmatrix}
=
\begin{bmatrix}
\mathbf 0\\
\bt_Z
\end{bmatrix}
\]
and
\[
\begin{bmatrix}
A_X^T & B^T
\end{bmatrix}
\begin{bmatrix}
\bff_Z\\
\be_Z
\end{bmatrix}
=
\bt_X.
\]
Equivalently, \(A_Z\bff_X=0\), \(\be_X=\bt_Z+B\bff_X\), and \(B^T\be_Z=\bt_X+A_X^T\bff_Z\) hold simultaneously.
\end{theorem}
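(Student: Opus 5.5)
The plan is to treat the two sides separately. On each side, the syndrome equation says that \(\be\) lies in a fixed coset of the relevant code, namely the coset through the injected representative \(\bt\). The hidden-variable descriptions of \(\CZ\) and \(\CX\) from Definition~\ref{def:framework-jp} then turn membership in that coset into the stated sparse system.

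\emph{Z side.} First we check that \(\bt_Z\) is a particular solution: \(H_Z\bt_Z=H_Z\Gamma_Z\bs_Z=\bs_Z\). Hence \(H_Z\be_X=\bs_Z\) holds if and only if \(H_Z(\be_X+\bt_Z)=0\). Since \(\Row(H_Z)=\CZ^\perp\), this is equivalent to \(\be_X+\bt_Z\in(\CZ^\perp)^\perp=\CZ\). We have \(\CZ=B(\Ker A_Z)\), so this holds if and only if some \(\bff_X\) satisfies \(A_Z\bff_X=0\) and \(\be_X+\bt_Z=B\bff_X\). Over \(\F_2\) the second equation reads \(B\bff_X+\be_X=\bt_Z\), which is exactly the block equation with \(H_Z'\).

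\emph{X side.} We have \(H_X\be_Z=K_XB^T\be_Z\). The key step is to show that \(K_X\bx=\bs_X\) holds if and only if \(\bx\in\Gamma_X\bs_X+\Ker K_X\). For this we need \(\Ker K_X=(\Row K_X)^\perp=(\Ker A_X)^\perp=\Row(A_X)=\{A_X^T\bff:\bff\in\F_2^{m_X}\}\). We also need \(K_X\Gamma_X\bs_X=\bs_X\), which holds because \(K_X\Gamma_X=I\). Applying this with \(\bx=B^T\be_Z\) gives \(H_X\be_Z=\bs_X\) if and only if \(B^T\be_Z=\bt_X+A_X^T\bff_Z\) for some \(\bff_Z\). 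This equation is \([A_X^T\ B^T](\bff_Z,\be_Z)^T=\bt_X\). Both directions follow, and the two sides are independent, so the conjunction statement follows. The "equivalently" clause is just the block equations written out row by row.

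The only subtle point is the X side, where we must make sure that \(\Ker K_X\) equals \(\Row(A_X)\) and is not merely contained in it. Here \(K_X\) has full row rank with \(\Row(K_X)=\Ker A_X\), and over a finite field \((U^\perp)^\perp=U\), which settles this. On the Z side, full row rank of \(H_Z\) matters only for the existence of \(\Gamma_Z\).
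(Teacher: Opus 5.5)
Your proposal is correct and takes essentially the same route as the paper. Both reduce to the coset through \(\bt_Z\) or \(\bt_X\), then use \(\Ker H_Z=\CZ=B(\Ker A_Z)\) on the Z side and \(\Ker K_X=\Row(A_X)\) on the X side. The only difference is presentational: you write each side as a chain of equivalences, whereas the paper proves the two implications separately.
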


\begin{proof}
On the Z side, \(A_Z\bff_X=0\) and \(\be_X=\bt_Z+B\bff_X\) imply \(B\bff_X\in B(\Ker A_Z)=\CZ\), and \(\Row(H_Z)=\CZ^\perp\) therefore gives \(H_Z(B\bff_X)=0\). Hence \(H_Z\be_X=H_Z(\bt_Z+B\bff_X)=H_Z\bt_Z=H_Z\Gamma_Z\bs_Z=\bs_Z\). Conversely, \(H_Z\be_X=\bs_Z\) implies \(H_Z(\be_X-\bt_Z)=0\), so \(\be_X-\bt_Z\in\Ker H_Z=\CZ=B(\Ker A_Z)\), and thus there exists \(\bff_X\in\Ker A_Z\) such that \(\be_X=\bt_Z+B\bff_X\).

On the X side, \(B^T\be_Z=\bt_X+A_X^T\bff_Z\) implies \(K_XB^T\be_Z=K_X\bt_X=K_X\Gamma_X\bs_X=\bs_X\) because \(K_XA_X^T=0\). Since \(H_X:=K_XB^T\), this is exactly \(H_X\be_Z=\bs_X\). Conversely, \(H_X\be_Z=\bs_X\) implies \(K_X(B^T\be_Z-\bt_X)=0\). The row space of \(K_X\) is \(\Ker A_X\), so its right kernel equals \(\Row(A_X)\). Therefore there exists \(\bff_Z\in\F_2^{m_X}\) such that \(B^T\be_Z-\bt_X=A_X^T\bff_Z\), namely \(B^T\be_Z=\bt_X+A_X^T\bff_Z\).
\end{proof}

The sparse matrices appearing here are exactly \(\begin{bmatrix}A_Z&0\\ B&I_n\end{bmatrix}=H_Z'\) and \(\begin{bmatrix}A_X^T&B^T\end{bmatrix}=H_X'\). Hence, in the nested regular sparse family, their numbers of nonzero entries are at most \(j_Zn+kn+n\) and \(j_Xn+kn\), respectively, and running BP on the corresponding Tanner graphs requires \(O(n)\) operations per iteration.

However, the sparse object here is the matrix representation of the affine systems after the representatives \(\bt_Z,\bt_X\) have been injected externally. The step that forms \(\bt_Z,\bt_X\) from the compressed syndromes is still generally dense. For simple realizations in the certified degree range, every check node in these Tanner graphs is adjacent to multiple punctured nodes \(\bff_X\) or \(\bff_Z\), so fixing the syndrome bits does not by itself produce nontrivial BP messages on the visible variables \(\be_X,\be_Z\) at the initial update. Thus this theorem should be read as giving a sparse affine representation for a fixed syndrome representative, not a sparse measurement matrix for the compressed syndromes themselves or a directly usable BP decoder.

The same observation also clarifies the role of girth. Bounded degree of the extended matrices \(H'_Z,H'_X\) does not by itself imply any large-girth theorem, and no such theorem is proved in this paper. The point is rather that the three blocks \(A_Z\), \(A_\Delta\), and \(B\) are chosen independently, so finite-length design retains genuine freedom to impose additional cycle constraints on the extended Tanner graphs. This is the precise sense in which the present construction leaves room for large-girth design via general sparse-graph and Tanner-graph generation methods~\cite{BayatiMS09,BayatiKMOS09}, subject of course to the logarithmic-growth limitations implied by the Moore bound~\cite{AlonHL02}; by contrast, structured CSS-LDPC subclasses can exhibit explicit short-cycle obstructions~\cite{AmirzadePS24}.

\section{Distance Analysis on the Hsu--Anastasopoulos Side}

This section states the distance results on the HA side.
The linear minimum-distance property of regular classical LDPC ensembles goes back to Gallager~\cite{Gal63}, and classical distance analysis for HA codes already appears in~\cite{HA05}.
For fixed degrees, a positive linear distance follows from a low-output-weight argument.
All limits are taken along blocklengths satisfying the stated integrality conditions. Unless stated otherwise, every probability and convergence statement in Sections~3--5 is taken with respect to the random draw of \(A_Z,A_\Delta,B\) in Definition~\ref{def:family-jp}.
Although in Section~2 we did not define \(\CZ=B(\Ker A_Z)\) as a concatenated code,
the analysis in this section views it in that way:
\(\Ker A_Z\) plays the role of the outer code, and the map \(\bu\mapsto B\bu\)
is the inner regular sparse map.
That is, we first choose an outer codeword \(\bu\in\Ker A_Z\), and then obtain
\(\bv=B\bu\) through the inner map \(B\).
The proof throughout this section follows the HA-side analysis in~\cite{HA05}.
However, the ensemble itself is not the same: in~\cite{HA05} the outer code is a Gallager ensemble,
whereas here both the inner and outer parts are modeled by the socket-based configuration model of~\cite[Secs.~3.3--3.4]{RU08}.
For this reason, we state only the results in the main text, and move to Appendix~\ref{app:HA-proofs-jp}
the transition enumerator, the first-moment estimates, and the proof of the fixed-degree theorem,
namely the parts of the argument in~\cite{HA05} that must be rewritten for the present ensemble.

\begin{theorem}[Fixed-degree positive linear distance on the HA side]\label{thm:HA-lin-jp}
Assume a fixed even balanced triple satisfying
\(4\le j_Z<\frac{k}{2}\), \(j_Z\equiv 0\pmod 2\), and \(j_\Delta:=j_X-j_Z\equiv 0\pmod 2\).
Then there exists a constant \(\delta_Z^{\mathrm{lin}}>0\), depending only on \((j_Z,k)\), such that for every
\(0<\delta<\delta_Z^{\mathrm{lin}}\),
\(\PP[d(\CZ)\le \delta n]\to 0 \qquad (n\to\infty)\)
holds.
In particular, \(d(\CZ)=\Omega(n)\) holds with high probability.
\end{theorem}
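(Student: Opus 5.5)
The plan is a first-moment argument over pairs \((\bu,\bv)\), viewing \(\CZ=B(\Ker A_Z)\) as a concatenation with outer code \(\Ker A_Z\) and inner map \(B\), as in~\cite{HA05}. Every nonzero \(\bv\in\CZ\) of weight at most \(\delta n\) has at least one preimage \(\bu\in\Ker A_Z\) with \(\bv=B\bu\). Therefore
\[
\PP[d(\CZ)\le\delta n]\le \sum_{w=1}^{n}\sum_{d=1}^{\lfloor\delta n\rfloor}\EE\bigl[N_{A_Z}(w)\bigr]\,P_B(w\to d).
\]
Here \(N_{A_Z}(w)\) is the number of weight-\(w\) words in \(\Ker A_Z\), and \(P_B(w\to d)\) is the probability that a fixed weight-\(w\) input is mapped by the square \((k,k)\) configuration-model \(B\) to a weight-\(d\) output.

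The independence of \(A_Z\) and \(B\) makes this factorization exact. I would write both factors by coefficient extraction in the socket model:
\[
\EE[N_{A_Z}(w)]=\binom{n}{w}\frac{\mathrm{coef}\bigl(((1+x)^{k}+(1-x)^{k})^{m_Z}/2^{m_Z},\,x^{j_Zw}\bigr)}{\binom{j_Zn}{j_Zw}},
\]
and similarly \(P_B(w\to d)\) is an edge-matching count: choose \(d\) odd checks among \(n\), with \(kw\) input sockets landing on the \(kn\) check sockets. The evenness hypotheses matter here. With \(k\) even, \(\ones\in\Ker A_Z\cap\Ker B\), so the enumerators are symmetric under \(w\mapsto n-w\). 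It therefore suffices to treat \(w\le n/2\) and then double the bound.

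The sum splits into three ranges.

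\emph{(i) Small inputs}, \(1\le w\le \varepsilon n\). For the outer factor I use the standard Gallager-type bound \(\EE[N_{A_Z}(w)]\le (Cw/n)^{w(j_Z/2-1)}\). Since \(j_Z\ge4\), this is at most \((Cw/n)^{w}\). Its sum over \(w\le\log n\) is \(O(1/n)\), and the terms with \(\log n\le w\le\varepsilon n\) are \(e^{-\Omega(w)}\) once \(\varepsilon\) is small. Here I simply bound \(P_B\le1\).

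\emph{(ii) Middle inputs}, \(\varepsilon n\le w\le n/2\). Passing to exponents with \(\omega=w/n\) and \(\theta=d/n\), I would define \(F(\omega,\theta)=a_{j_Z,k}(\omega)+b_k(\omega,\theta)\). Here \(a_{j_Z,k}(\omega)\) is the Gallager outer exponent, obtained by the saddle-point/Legendre form of the coefficient. The term \(b_k(\omega,\theta)\) is the inner transition exponent, which for \(\theta\to0\) reduces to the exponent of \(\PP[B\bu=0]\). The key claim is
\[
\sup_{\omega\in[\varepsilon,1/2]}F(\omega,0)<0 .
\]
The heuristic is that \(\h(\omega)+b_k(\omega,0)\le0\) with equality only at \(\omega=1/2\), because \(\EE\,|\Ker B|\) is subexponential. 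Meanwhile \(a_{j_Z,k}(\omega)\le \h(\omega)-j_Z/k\cdot(\text{something positive})\), so the combined value at \(\omega=1/2\) is \(-j_Z/k<0\). Continuity of \(F\) in \(\theta\), uniformly on the compact \(\omega\)-range, then gives some \(\delta_Z^{\mathrm{lin}}>0\) with \(F<0\) on \([\varepsilon,1/2]\times[0,\delta_Z^{\mathrm{lin}}]\). The polynomially many \((w,d)\) terms are then \(e^{-\Omega(n)}\) in total.

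\emph{(iii)} The range \(w\ge n/2\) follows from (i) and (ii) by the complement symmetry.

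The main obstacle is range (ii), specifically making the inequality \(F(\omega,0)<0\) rigorous and uniform, including near \(\omega=1/2\) where both the outer and inner exponents are close to their extreme values. I would first derive explicit upper bounds on \(a\) and \(b\) by evaluating the generating functions at a trial point instead of the exact saddle point. This gives clean expressions such as
\[
b_k(\omega,0)\le -\h(\omega)+\tfrac12\log_2\!\bigl(1+(1-2\omega)^k\bigr)\cdot(\cdots),
\]
which are then reduced to a one-variable entropy inequality that holds for even \(j_Z\ge4\) and even \(j_\Delta\). A secondary technical point is the transfer from the unconditioned configuration model to the simple ensemble. This follows from the remark after Definition~\ref{def:family-jp}, since the simple-graph event has probability bounded away from zero.
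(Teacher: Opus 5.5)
Your outline matches the paper's proof in every step but one: a first moment over pairs $(\bu,B\bu)$ with $\bu\in\Ker A_Z$, socket pairing for small input weights, complement symmetry from even $k$ for large ones, an exponent bound with continuity in the output weight on the middle range, and Markov. The missing step is the decisive one. You only argue $\sup_{\omega\in[\varepsilon,1/2]}F(\omega,0)<0$ heuristically, and the rigorization you sketch does not work as described.

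At the natural trial point $z=\omega/(1-\omega)$, the inner factor gives only $b_k(\omega,0)\le\log_2\frac{1+(1-2\omega)^k}{2}$. That is not a bound of the form $-\h(\omega)+\cdots$. Indeed $\h(\omega)-1+\log_2(1+(1-2\omega)^k)>0$ for $\omega\lesssim\mathrm{e}^{1-k}$, so trial points do not deliver your claim $\h(\omega)+b_k(\omega,0)\le0$. The resulting combined envelope (the paper's $F_Z(\cdot,0)$) is
\[
\h(\omega)-\alpha_Z+(1+\alpha_Z)\log_2\bigl(1+(1-2\omega)^k\bigr)-1,
\]
and it is positive for $\omega\lesssim\mathrm{e}^{1-k-j_Z}$. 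So $\varepsilon$ has two competing requirements:
\begin{itemize}
\item it must be small enough for the pairing bound (for $j_Z=4$, roughly $\varepsilon<\mathrm{e}/k^2$);
\item it must be large enough for this envelope to be negative on $[\varepsilon,1/2]$.
\end{itemize}
You impose only the first. Showing that the two windows overlap is exactly Lemma~\ref{lem:HA-inputs-jp}, with split point $\delta_o=2\mathrm{e}\,\mathrm{e}^{-(k+j_Z)}$. There, negativity for $k\omega\le1$ comes from $(\ln2)\h(\omega)\le\omega\ln(\mathrm{e}/\omega)$ and $\ln\cosh x\le x^2/2$, and for $k\omega\ge1$ from Lemma~\ref{lem:half-entropy-jp}. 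The ``one-variable entropy inequality for even $j_Z\ge4$ and even $j_\Delta$'' you mention is never stated, and $j_\Delta$ plays no role on the HA side.

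Your decomposition heuristic can still be made into a proof, and it then gives a cleaner route than the paper's. It needs an ingredient you only assert, namely the subexponential kernel bound for $B$, extended to nonzero outputs.
\begin{itemize}
\item Since $k$ is even, $0\le f_-(x,k)\le f_+(x,k)$ for $x>0$, so $[z^{kw}]f_-^{d}f_+^{n-d}\le f_+(x,k)^nx^{-kw}$.
\item Take the Clarkson point $x=y^{k-1}$, $y=(\omega/(1-\omega))^{1/k}$, exactly as in Proposition~\ref{prop:B-rank-jp}. This gives $\binom nw P_B(w\to d)\le(kn+1)\binom nd$ for $w\le n/2$.
\item The outer trial point gives $\EE[N_{A_Z}(w)]\le(nj_Z+1)\binom nw\,2^{-n\alpha_Z(1-\log_2(1+(1-2\omega)^k))}$.
\end{itemize}
Multiplying,
\[
\EE[N_{A_Z}(w)]\,P_B(w\to d)\le\operatorname{poly}(n)\,
2^{\,n[\h(\theta)-\alpha_Z(1-\log_2(1+(1-2\varepsilon)^k))]}
\qquad(\varepsilon n\le w\le n/2).
\]
Hence any $\delta$ with $\h(\delta)<\alpha_Z\bigl(1-\log_2(1+(1-2\varepsilon)^k)\bigr)$ works. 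This holds for every fixed $\varepsilon>0$, which removes the coordination problem. The output weight enters only through the additive $\h(\theta)$, so no continuity argument is needed either. Without an argument of this kind, or the paper's, range (ii) is unproved.

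A smaller point on range (i): use $\sum_dP_B(w\to d)\le1$, or $P_B(w\to d)=0$ for $d>kw$, rather than a termwise $P_B\le1$. Summed over $d\le\delta n$, the termwise bound leaves an $O(\delta)$ contribution from $w=1$, because $\EE[N_{A_Z}(1)]=\Theta(1/n)$ when $j_Z=4$.
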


Write \(\alpha_Z=j_Z/k\), \(\alpha_X=j_X/k=1-\alpha_Z\), and \(\alpha_\Delta=j_\Delta/k\). Here the classical GV distance for binary linear codes is the existence curve
\(R \ge 1-\h(\delta)\),
or equivalently
\(\delta_{\mathrm{GV}}(R)=\h^{-1}(1-R)\),
as given in~\cite{Gil52,Var57}.
The design rate of the HA-side constituent code \(\CZ\) considered here is
 \(R_Z^{\mathrm{des}}=\alpha_X=1-\alpha_Z\),
so the corresponding classical GV point is
\(\delta_{\mathrm{GV}}(R_Z^{\mathrm{des}})=\h^{-1}(1-\alpha_X)=\h^{-1}(\alpha_Z)\).
Let the search window be
\[
\mathcal T_{\mathrm{scan}}
:=
\{(j_Z,j_X,k): k\le 30,\ j_Z\le 10,\ j_Z<k/2,\ j_X=k-j_Z\}.
\]
Here we restrict attention to this low-degree region in order to examine finite-degree GV attainment while keeping the degrees genuinely small.
This defines the comparison window. The theorems certify the subset \(\mathcal T_{\mathrm{GV}}\) below, which is displayed in Fig.~\ref{fig:finite-gv-jp}.
Let
\[
\mathcal T_{\mathrm{GV}}
:=
\{(j_Z,j_X,k)\in \mathcal T_{\mathrm{scan}}: j_Z\ge 4,\ k\equiv 0\!\!\!\pmod 2\}
\subset \mathcal T_{\mathrm{scan}}
\]
denote the 56 balanced triples proved to attain GV in Appendices~\ref{app:finite-gv-ha-jp} and~\ref{app:finite-gv-mn-jp}. Here evenness is imposed on \(k\), not on \(j_Z\). Figure~\ref{fig:finite-gv-jp} displays the complete certified set.

\begin{theorem}[Finite-degree GV attainment on the HA side]\label{thm:HA-gv-jp}
For each \((j_Z,j_X,k)\in\mathcal T_{\mathrm{GV}}\), let \(\delta_{\mathrm{GV}}:=\delta_{\mathrm{GV}}(R_Z^{\mathrm{des}})=\h^{-1}(1-R_Z^{\mathrm{des}})\). Then for every \(0<\delta<\delta_{\mathrm{GV}}\),
\[
\PP[d(\CZ)\le \delta n]\to 0
\qquad (n\to\infty)
\]
holds. In this sense, every triple in \(\mathcal T_{\mathrm{GV}}\) attains the classical GV point on the HA side at finite degree.
\end{theorem}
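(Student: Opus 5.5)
The plan is a first-moment (union) bound over low-weight outputs of the concatenation \(\bv=B\bu\), \(\bu\in\Ker A_Z\), split into a sublinear regime handled by Theorem~\ref{thm:HA-lin-jp} and a linear regime handled by an exponent inequality. Let \(N(w)\) be the expected number of pairs \((\bu,\bv)\) with \(\bu\in\Ker A_Z\setminus\{0\}\), \(\wt(\bu)=\ell\), \(B\bu=\bv\), and \(\wt(\bv)=w\). Pairs with \(\bu\in\Ker B\) and \(\bv=0\) contribute no nonzero codeword, so they are excluded. Then \(\PP[d(\CZ)\le\delta n]\le\sum_{1\le w\le\delta n}\sum_\ell N(\ell,w)\).

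Because the socket-based ensembles for \(A_Z\) and \(B\) are independent, \(N(\ell,w)\) factors exactly as
\[
\binom{n}{\ell}\,\frac{\operatorname{coef}\bigl((1+x)^k+(1-x)^k)/2)^{m_Z},x^{j_Z\ell}\bigr)}{\binom{j_Zn}{j_Z\ell}}\cdot\binom{n}{w}\,\frac{\operatorname{coef}\bigl(\text{check-parity GF},\ldots\bigr)}{\binom{kn}{k\ell}},
\]
using the transition enumerator for the square \((k,k)\) map from Appendix~\ref{app:HA-proofs-jp}. By the saddle-point method this gives \(N(\ell,w)\le 2^{n\,G(\lambda,\omega)+o(n)}\), where \(\lambda=\ell/n\) and \(\omega=w/n\).

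\textbf{Small weights.} Fix \(\delta_0>0\) and set \(\delta'=\min(\delta_0,\delta)\). For \(w\le\delta' n\), the contribution tends to zero by the argument behind Theorem~\ref{thm:HA-lin-jp} for even \(j_Z\). When \(j_Z\) is odd, I would instead use the parity structure: since \(k\) is even, \(B\) has even column weights, so \(\ones^T B\bu=0\) and every visible weight \(w\) is even. Small-\(\ell\) outer codewords are controlled by the standard \(j_Z\ge 3\) expansion bound for regular LDPC codes, and the inner map with \(k\ge 3\) sends weight \(\ell\) to output weight \(\Omega(\ell)\) with polynomially small failure probability. This is the ``odd-syndrome complement'' step: I would bound the odd-\(j_Z\) case by complementing the check-side syndrome, using the identity \(\ones\in\Row\) on the \(B\) side.

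\textbf{Linear weights.} For \(\delta'\le\omega\le\delta<\h^{-1}(\alpha_Z)\), it suffices to show
\[
\sup_{\lambda\in(0,1]} G(\lambda,\omega)<0 \quad\text{for all } \omega\in[\delta',\delta].
\]
I would first maximize over \(\lambda\). By entropy conjugacy, the optimum over the hidden weight of the \(B\)-enumerator times the outer weight enumerator is bounded above by \(\h(\omega)-\alpha_Z+\epsilon_k(\lambda,\omega)\). The target identity is
\[
\sup_\lambda G(\lambda,\omega)\le \h(\omega)-\alpha_Z \quad\text{for } \omega<\tfrac12,
\]
which is negative exactly when \(\omega<\h^{-1}(\alpha_Z)\). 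To get it, I would write the \(B\)-factor's saddle-point exponent via its Legendre dual and bound the outer exponent using the convexity of \(\log\bigl((1+x)^k+(1-x)^k\bigr)\) in \(\log x\). I would then evaluate at a trial point where the two saddles align, which leaves scalar inequalities in \(\lambda\) alone.

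\textbf{Main obstacle.} The hard step is proving that scalar inequality uniformly in \(\lambda\) for each of the 56 triples. The likely failure is near \(\lambda\to 1/2\) and near the boundary where the outer exponent touches zero. My plan is to split \(\lambda\) into an interior range, handled analytically by concavity of the exponent and pairing estimates, and endpoint ranges. On the endpoint ranges I would reduce to finitely many rational comparisons of logarithms, checked with exact arithmetic and explicit Taylor remainders for \(\log_2\), as in the paper's verifiers. Combining the three regimes gives \(\PP[d(\CZ)\le\delta n]\to0\) in the unconditioned model. This transfers to the simple ensemble because, as explained after Definition~\ref{def:family-jp}, the simple-graph event has probability bounded away from zero.
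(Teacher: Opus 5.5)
You correctly set up a first-moment bound over pairs, and the factorization $N(\ell,w)=N_o(\ell)M_k(\ell,w)$ is right. The linear-weight step, however, aims at a false inequality.

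\textbf{The target inequality is false.} The claim $\sup_\lambda G(\lambda,\omega)\le\h(\omega)-\alpha_Z$ says the worst input weight is always $\lambda=1/2$. This holds only at $\omega=d:=\h^{-1}(\alpha_Z)$. For $\omega<d$, pick $\lambda$ with $p(\lambda):=(1-(1-2\lambda)^k)/2=\omega$, so $\lambda\approx\omega/k$. There the transition factor is only polynomially small, since $\sum_l M_k(s,l)=1$ and it peaks at $l\approx p(\lambda)n$. Hence any valid exponent satisfies $G(\lambda,\omega)\ge W_o(\lambda)=-O(\lambda\log(1/\lambda))$, which tends to $0$ as $\omega\to0$, while $\h(\omega)-\alpha_Z\to-\alpha_Z$. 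Concretely, for $(4,6,10)$ and $\omega=0.02$, one gets $\lambda\approx0.002$ and $W_o(\lambda)\approx-0.008$; the explicit trial envelope $W_o^{\mathrm{tr}}(\lambda)$ is even positive, about $+0.009$. By contrast, $\h(\omega)-\alpha_Z\approx-0.26$. Your cutoff $\delta'$ must lie below the tiny, inexplicit $\delta_Z^{\mathrm{lin}}$, so these $\omega$ sit inside your linear range. Aligning saddles and an unspecified $\epsilon_k(\lambda,\omega)$ cannot repair a false target.

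\textbf{What the paper does instead.} It needs only negativity and gets it by splitting on the \emph{input} weight, not on $\omega$:
\begin{enumerate}
\item For $s\le\tau_0 n$ with $\tau_0=5/(2k^2)$, the pairing bound $N_o(s)\le\sqrt2\,\lambda_Z^s$, $\lambda_Z\le5/(2\mathrm e)$, is summed over all outputs via $\sum_l M_k(s,l)=1$. Complement symmetry handles $s\ge(1-\tau_0)n$.
\item On $\tau_0\le\tau\le1/2$ it uses the identity $\h(\omega)+F_Z(\tau,\omega)=W_o^{\mathrm{tr}}(\tau)-D_2(\omega\|p(\tau))$. Since $p(\tau)\ge p_k>d$ (certified by $\h(p_k)-j_Z/k>1/50$), this is increasing in $\omega$ on $[0,d]$, so only $\omega=d$ matters. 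There the required bound is $\le0=\h(d)-\alpha_Z$.
\item Entropy conjugacy then eliminates $d$. This leaves a single $j_Z$-free inequality $K_k(u)\ge0$ on $[0,u_k]$, proved by monotonicity arguments plus exact rational checks at $u=u_k$ for the eleven even $k$.
\end{enumerate}
This reduction is the missing idea in your plan.

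\textbf{The parity split is misplaced.} On the HA side the outer code is $\Ker A_Z$, and $A_Z\ones_{[n]}=0$ because the \emph{row} degree $k$ of $A_Z$ is even. Hence $N_o(s)=N_o(n-s)$ for both parities of $j_Z$. The argument behind Theorem~\ref{thm:HA-lin-jp} (Lemma~\ref{lem:HA-inputs-jp}) never uses $j_Z$ even, so no separate odd-$j_Z$ treatment is needed. The odd-syndrome complement is an MN-side device: there the witness $\bp_Z$ lives on the check side, and $A_Z^T\ones=\ones$ when $j_Z$ is odd. Your substitute for odd $j_Z$ would fail anyway. A polynomially small probability that $B$ sends a linear-weight outer word to low weight cannot beat the exponentially many such outer words; you need an exponential-rate comparison of exactly the kind above.
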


\begin{proof}
The proof is deferred to Appendix~\ref{app:finite-gv-ha-jp}.
\end{proof}

We also stress that~\cite[Theorem~2 and Corollary~1]{HA05} does not prove a finite-degree GV statement of the form above for each fixed triple. More precisely, its result is an \(\varepsilon\)--\(K\) large-degree statement: for any \(\varepsilon>0\), there exists a sufficiently large integer \(K(\varepsilon)\) such that whenever \(k\ge K(\varepsilon)\), the normalized minimum distance can be made within \(\varepsilon\) of the GV bound. Equivalently,~\cite{HA05} shows approach to the GV bound in the limit \(k\to\infty\), rather than finite-degree GV attainment for each fixed triple as in the theorem above.

\section{Distance Analysis on the MacKay--Neal Side}

We now turn to the MN side.
The important point is that the actual model is
\(A_X=\begin{bmatrix}A_Z\\ A_\Delta\end{bmatrix}\),
namely a stacked ensemble, not a standard \((j_X,k)\)-regular ensemble.
Weight-distribution and spectral-shape analyses for irregular and multi-edge ensembles were developed in~\cite{DIRU06,KasaiMET10}. On the MN side, related analyses include the SC-MN weight-enumerator analysis of~\cite{Mitchell12} and the protograph-based MN input-output enumerator analysis of~\cite{Zahr25}, but none of them studies the stacked ensemble treated here itself.
Accordingly, one must build, on this stacked ensemble itself, the stacked refined enumerator, the exact configuration formula, the blockwise complement symmetry coming from even degrees, the trial-point bounds governing the linear-weight regime, and the pairing bound governing the low-weight regime. We defer these details, as well as the proof of the fixed-degree theorem, to Appendix~\ref{app:MN-proofs-jp}, and state only the two resulting theorems in this section.

\begin{theorem}[Fixed-degree positive linear distance on the MN side]\label{thm:MN-lin-jp}
Assume a fixed finite balanced triple satisfying \(4\le j_Z<\frac{k}{2}\), \(j_Z\equiv 0\pmod 2\), and \(j_\Delta:=j_X-j_Z\equiv 0\pmod 2\). Then there exists a constant \(\delta_X^{\mathrm{lin}}>0\) such that for every \(0<\delta<\delta_X^{\mathrm{lin}}\), one has \(\PP[d(\CX)\le \delta n]\to 0\) as \(n\to\infty\).
In particular, \(d(\CX)=\Omega(n)\) holds with high probability.
\end{theorem}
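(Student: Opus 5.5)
The plan is a first-moment argument on the stacked refined enumerator.

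\textbf{Setting up the count.} By Theorem~\ref{prop:css-jp}, a nonzero \(\bv\in\CX\) with \(\wt(\bv)=\ell\) exists if and only if there is a pair \((\bw_Z,\bw_\Delta)\in\F_2^{m_Z}\times\F_2^{m_\Delta}\) with
\[
A_Z^T\bw_Z+A_\Delta^T\bw_\Delta+B^T\bv=\zeros .
\]
Let \(a:=\wt(\bw_Z)\) and \(b:=\wt(\bw_\Delta)\). The count uses even degrees as follows.
\begin{itemize}
\item Since \(j_Z\) and \(j_\Delta\) are even, \(A_Z^T\ones_{m_Z}=\zeros\) and \(A_\Delta^T\ones_{m_\Delta}=\zeros\).
\item Replacing \(\bw_Z\) or \(\bw_\Delta\) by its complement therefore leaves the equation unchanged.
\item Hence it suffices to bound the expected number \(N(a,b,\ell)\) of solutions with \(a\le m_Z/2\), \(b\le m_\Delta/2\) and \(1\le\ell\le\delta n\).
\end{itemize}
Markov's inequality then gives \(\PP[d(\CX)\le\delta n]\le\sum_{a,b,\ell}N(a,b,\ell)\), and the goal is to make this sum \(o(1)\).

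\textbf{Exact formula.} In the unconditioned configuration model the three blocks are independent uniform matchings. Each of the \(n\) visible positions is a constraint node carrying \(j_Z\) sockets from \(A_Z\), \(j_\Delta\) from \(A_\Delta\) and \(k\) from \(B\), and it must see even parity. This gives
\[
N(a,b,\ell)=\frac{\binom{m_Z}{a}\binom{m_\Delta}{b}\binom{n}{\ell}\,\mathrm{coef}\bigl[x^{ka}y^{kb}z^{k\ell}\bigr]\,P(x,y,z)^n}{\binom{j_Zn}{ka}\binom{j_\Delta n}{kb}\binom{kn}{k\ell}},
\]
where \(P\) is the even part of \((1+x)^{j_Z}(1+y)^{j_\Delta}(1+z)^k\). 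The simple-graph conditioning is then removed as in Section~2.2.

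\textbf{Two regimes.} I would split the range of \((\alpha,\beta,\lambda)=(a,b,\ell)/n\) as follows.

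\emph{Low-weight regime} (\(a+b+\ell\le\epsilon n\)). Use a pairing bound. Every constraint node touched by a chosen socket must receive at least two chosen sockets. Combined with \(j_Z\ge4\), this shows that each term is at most \(n^{-c}\bigl(O(a+b+\ell)/n\bigr)^{c'(a+b+\ell)}\) for some \(c>0\). The sum over this regime is then \(o(1)\), in the style of Gallager's regular-ensemble argument.

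\emph{Linear regime.} Apply the trial-point (Chernoff) bound \(\mathrm{coef}\le P(x,y,z)^n/(x^{ka}y^{kb}z^{k\ell})\) with Stirling's formula. This gives \(N\le 2^{n\,E(\alpha,\beta,\lambda)+o(n)}\), where \(E\) is the minimum over trial points. It remains to show \(E<0\) on the compact set \(\{a+b+\ell\ge\epsilon n,\ \lambda\le\delta\}\) for small \(\delta\).

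\textbf{Main obstacle.} The hard step is the slice \(\lambda=0\): showing \(E(\alpha,\beta,0)<0\) uniformly for \((\alpha,\beta)\) in the reduced quadrant away from the origin. This slice is the exponent of dual codewords of the stacked \(A_X\), that is, of \(\Ker A_X^T\). Since \(m_X<n\), one expects the only solutions to be \(\zeros\) and the complement vectors already quotiented out, so the exponent should be strictly negative in the interior. The difficulty is that the ensemble is not homogeneous. I would first choose trial points \(x,y\) separately for the two blocks and use concavity of \(\h\) to decouple the \(A_Z\) and \(A_\Delta\) contributions.

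Near the corners \(\alpha\to0\) or \(\beta\to0\), the slice reduces to the single-block exponent of a \((j,k)\)-regular dual ensemble, which is negative for even \(j\ge2\) away from zero and \(1/2\). Near the half-weight corners, the complement symmetry maps the point back to a neighbourhood of the origin, where the pairing bound applies.

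Once strict negativity on the \(\lambda=0\) slice is established, continuity of \(E\) in \(\lambda\) on the compact region yields some \(\delta_X^{\mathrm{lin}}>0\) with \(E<0\) for all \(\lambda\le\delta_X^{\mathrm{lin}}\). The polynomially many terms then sum to \(o(1)\), which proves the theorem.
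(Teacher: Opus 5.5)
Your architecture matches the paper's: a witness-refined count of solutions of \(A_Z^T\bp_Z+A_\Delta^T\bp_\Delta+B^T\bv=0\), blockwise complement folding from even \(j_Z,j_\Delta\), the exact three-pool configuration formula, a socket-pairing bound for small total support, a trial-point exponent elsewhere, and then Markov. The gap is exactly the step you flag as the main obstacle. You leave it unproved, and the route you sketch would not close it.

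Write \(a=t_1/m_Z\) and \(b=t_\Delta/m_\Delta\). The zero-output exponent is \(\Phi_0(a,b)=\alpha_Z\h(a)+\alpha_\Delta\h(b)-1+\log_2\bigl(1+(1-2a)^{j_Z}(1-2b)^{j_\Delta}\bigr)\). Your sketch fails on four counts:
\begin{itemize}
\item The cross term \(\log_2\bigl(1+(1-2a)^{j_Z}(1-2b)^{j_\Delta}\bigr)\) does not separate, however the trial points are chosen.
\item Concavity of \(\h\) combines the two entropy terms rather than decoupling them.
\item The edges \(a=0\) or \(b=0\) say nothing about the interior of the quadrant.
\item ``Since \(m_X<n\) one expects\ldots'' is a heuristic, not a bound.
\end{itemize}
Your half-weight remark is also wrong. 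Complementation sends \(t_1\mapsto m_Z-t_1\), so it fixes \(t_1=m_Z/2\) rather than moving it near the origin. No help is needed there anyway, since \(\Phi_0(1/2,1/2)=\alpha_X-1<0\).

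The missing idea is to use both concavities in the coupling direction. Set \(L:=j_Z+j_\Delta=j_X\) and \(q:=(\alpha_Za+\alpha_\Delta b)/\alpha_X=(j_Za+j_\Delta b)/L\). Concavity of \(\h\) gives \(\alpha_Z\h(a)+\alpha_\Delta\h(b)\le\alpha_X\h(q)\), and concavity of \(x\mapsto\ln(1-2x)\) gives \((1-2a)^{j_Z}(1-2b)^{j_\Delta}\le(1-2q)^L\). Hence \(\Phi_0(a,b)\le\alpha_X\h(q)-1+\log_2\bigl(1+(1-2q)^L\bigr)\), a one-variable function with \(L<k\).

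The paper proves this function is negative on \(2\mathrm{e}\,\mathrm{e}^{-k}\le q\le1/2\) by splitting at \(Lq=1\):
\begin{itemize}
\item Below the split it uses \((\ln2)\h(q)\le q\ln(\mathrm{e}/q)\), \((1-2q)^L\le\mathrm{e}^{-2Lq}\) and \(\ln\cosh x\le x^2/2\).
\item Above the split it uses the half-interval entropy inequality.
\end{itemize}
The visible weight \(\omega=w/n\) then only adds \(\h(\omega)\), which is absorbed for small \(\omega\).

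You must also make the two regimes overlap quantitatively. With the explicit trial point \(s=a/(1-a)\), this exponent is genuinely positive for very small \(q\), roughly \(q<\mathrm{e}^{1-k}\). So the pairing bound has to stay geometric up to witness weight \(\tau_0 n\) with \(\tau_0=2\alpha_X\mathrm{e}^{1-k}\). The paper checks this through the base bound \((8k)^{k/2}\exp(k-k^2/2)<1\) for \(k\ge10\). Your \(\epsilon\) is unspecified. If you instead use the optimized exponent, you would still need to prove negativity down to whatever scale your pairing bound reaches.

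A minor point: the \(n^{-u(k/2-1)}\) decay in the pairing bound comes from the row degree \(k\), since every active witness row and every visible coordinate contributes \(k\) sockets. The hypothesis \(j_Z\ge4\) plays no role there.
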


\begin{proof}
The proof is deferred to Appendix~\ref{app:MN-proofs-jp}.
\end{proof}

The finite-degree GV proof keeps the same low-weight / linear-weight decomposition. Appendix~\ref{app:finite-gv-mn-jp} reduces the remaining inequalities analytically to finitely many scalar comparisons, checked with exact rational logarithm bounds.

\begin{theorem}[Finite-degree GV attainment on the MN side]\label{thm:MN-gv-jp}
For each \((j_Z,j_X,k)\in\mathcal T_{\mathrm{GV}}\), define \(\delta_{\mathrm{GV}}:=\delta_{\mathrm{GV}}(R_X^{\mathrm{des}})=\h^{-1}(1-R_X^{\mathrm{des}})\). Then for every \(0<\delta<\delta_{\mathrm{GV}}\), \(\PP[d(\CX)\le \delta n]\to 0\) as \(n\to\infty\). In this sense, every triple in \(\mathcal T_{\mathrm{GV}}\) attains the classical GV point on the MN side at finite degree.
\end{theorem}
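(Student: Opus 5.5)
The plan is a first-moment (union) bound over nonzero visible words of $\CX$, taken in the true stacked ensemble. Write $\CX=\{\bv: B^T\bv\in\Row(A_X)\}$. For the bound I will parametrize candidate codewords by pairs $(\bw,\bv)$ with $A_X^T\bw=B^T\bv$, and split $\bw=(\bw_Z,\bw_\Delta)$ according to the stacking $A_X=[A_Z;A_\Delta]$. The expected number of visible words of weight $\ell=\delta n$ is at most
\[
\sum_{a,b}\binom{m_Z}{a}\binom{m_\Delta}{b}\binom{n}{\ell}\,\PP\bigl[A_Z^T\bw_Z+A_\Delta^T\bw_\Delta=B^T\bv\bigr],
\]
where $a=\wt(\bw_Z)$ and $b=\wt(\bw_\Delta)$. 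The probability is an exact configuration-model coefficient: I condition on the three independent perfect matchings and use the fact that each row of $n$ column-nodes must receive an even total parity from sockets of $A_Z^T$, $A_\Delta^T$ and $B^T$. Written via generating functions, this is a product of coefficient extractions, one of $\bigl((1+x)^{k}+(1-x)^{k}\bigr)/2$-type per column node. I combine these with the $k$-regular socket counts. The saddle-point asymptotics then give an exponent $F(\delta,\alpha,\beta)$ with $\alpha=a/n$ and $\beta=b/n$. One subtlety: a single $\bv$ can be produced by several $\bw$. This only inflates the count, and the extra multiplicity from $\Ker A_X^T$ is handled because the even-degree complement symmetry contributes a fixed subexponential factor.

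The argument then proceeds in three regimes. \textbf{Low weight} ($\ell\le\varepsilon n$): I reuse the pairing bound behind Theorem~\ref{thm:MN-lin-jp}. Since $j_Z\ge4$ and all degrees are even, the total probability of words of sublinear weight is $o(1)$, and this transfers unchanged. The same holds for weights near the complement points $a\approx m_Z$ or $b\approx m_\Delta$, which blockwise complement symmetry maps back to the low-weight case. \textbf{Linear weight, $\varepsilon\le\delta<\delta_{\mathrm{GV}}$}: I must show $\sup_{\alpha,\beta}F(\delta,\alpha,\beta)<0$. Here the key reduction is that, for fixed total hidden weight, entropy concavity forces the maximizing split of hidden weight between the $A_Z$ and $A_\Delta$ blocks to be the proportional one. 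Indeed, the coefficient terms depend on the blocks only through socket counts $j_Za$ and $j_\Delta b$, while $\binom{m_Z}{a}\binom{m_\Delta}{b}$ is maximized by the proportional split. This collapses the exponent to a single hidden-weight variable. \textbf{Comparison with GV}: I use a trial-point (Chernoff) bound on the coefficient with an explicit parameter $x$. Then I show the one-variable exponent is at most $h_2(\delta)-(1-R_X^{\mathrm{des}})+$(a correction that is nonpositive on the interior). The quantity $h_2(\delta)-h_2(\delta_{\mathrm{GV}})$ is strictly negative for $\delta<\delta_{\mathrm{GV}}$, which yields the claim.

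The main obstacle is this last inequality: the correction term must be shown nonpositive uniformly in the hidden weight, for every triple in $\mathcal T_{\mathrm{GV}}$. My first attempt will be analytic. I will show that the correction is convex in the hidden-weight variable after substituting the trial point, so its maximum sits at the endpoints. Those endpoint values are rational-logarithm expressions, which I will compare exactly against zero for each of the 56 triples using rational upper and lower bounds on $\log_2$. If convexity fails near the boundary $\delta\to\delta_{\mathrm{GV}}$, where the margin vanishes, I fall back on interval subdivision of the hidden-weight range. On each subinterval I use monotone enclosures with explicit logarithm remainder bounds.

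For the 24 triples with odd $j_Z$ the blockwise complement symmetry fails on the $A_Z$ block, so the low-weight step needs separate treatment. There I will use the odd-syndrome complement argument: complementing $\bw_Z$ flips the syndrome parity by $j_Z\bmod 2$. This is compensated by complementing $\bv$ through the even $k$ with $B\ones=0$, which matches the high-weight branch to a low-weight one. The union bound over $\ell\le\delta n$ then gives $\PP[d(\CX)\le\delta n]\to0$.
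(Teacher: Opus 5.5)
Your ingredients are the right ones: the witness-refined enumerator on the true stacked ensemble, folding, the concavity reduction to the proportional hidden weight $q=(\alpha_Z a+\alpha_\Delta b)/\alpha_X$, pairing for small supports, and trial-point exponents elsewhere. However, your partition of the visible weight leaves a regime that your argument cannot cover.

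\textbf{The intermediate-weight regime.} The threshold $\varepsilon=\delta_X^{\mathrm{lin}}$ inherited from Theorem~\ref{thm:MN-lin-jp} is tiny. It must satisfy $\h(\varepsilon)\le\gamma_{\mathrm{MN}}$, and the zero-output gap at the witness scale $\tau_0=2\mathrm e\,\alpha_X\mathrm e^{-k}$ is only of order $\alpha_X\mathrm e^{1-k}$, so $\varepsilon<\mathrm e^{1-k}$. That proof cannot simply be rerun with a larger $\varepsilon$, because its large-support part absorbs $\h(\omega)$ into this gap.

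Above $\varepsilon$ you rely on ``exponent $\le\h(\omega)-\alpha_Z$ plus a correction that is nonpositive uniformly in the hidden weight.'' That inequality is false, not merely hard to certify. As the hidden weight and $\omega$ tend to $0$, the exponent tends to $0$, while $\h(\omega)-\alpha_Z\to-\alpha_Z$. Concretely, with $A=j_X/k$, $L=j_X$, $r=1-2q$, the correction is $\log_2(1+(1-2\omega)^kr^L)-A(1-\h(q))$. At $q=0$ it equals $\log_2(1+(1-2\omega)^k)-A>0$ whenever $(1-2\omega)^k>2^A-1$; for $(4,6,10)$ this is every $\omega<0.03$. Just above $\varepsilon$, even the trial exponent itself is positive at zero hidden weight, since $\h(\omega)-1+\log_2(1+(1-2\omega)^k)\approx\h(\omega)-k\omega\log_2\mathrm e>0$ for $\omega<\mathrm e^{1-k}$. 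So no convexity/endpoint check or interval subdivision can close this. The real obstacle is at low but linear visible weight, not near $\delta_{\mathrm{GV}}$.

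The missing idea is the paper's split at $\omega_*$, defined by $(1-2\omega_*)^k=A/2$.
\begin{itemize}
\item For $\omega\ge\omega_*$ the GV comparison is purely analytic: $\ln(1+x)\le x$, $r^L\le r^2$ and Pinsker give $\Phi_{\mathrm{MN}}\le\h(\omega)-\alpha_Z$.
\item For $0<\omega\le\omega_*$ you need a second, much larger low-weight argument. First, a pairing bound on the box $t_1,t_\Delta\le\beta n/k$, $w\le\omega_* n$ with $\beta=1/10$, whose geometric base must be certified below $1$ for each triple. Second, a proof that the reduced exponent $\psi(q,\omega)$ is uniformly negative for $q\ge 1/(10L)$, $0\le\omega\le\omega_*$. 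The paper does this with the chord bound from convexity of $\omega\mapsto\ln(1+z(1-2\omega)^k)$ plus entropy conjugacy for $q\le 1/(2L)$, and Pinsker for $q\ge 1/(2L)$.
\end{itemize}

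\textbf{The odd-$j_Z$ step.} Your compensation mechanism is wrong as stated. For even $k$ one has $B^T\ones_{[n]}=0$. Complementing $\bv$ therefore leaves $B^T\bv$ unchanged and cannot cancel the syndrome $A_Z^T\ones_{[m_Z]}=\ones_{[n]}$ created by complementing $\bw_Z$; it would also move the visible weight from $\ell$ to $n-\ell$. In addition, Theorem~\ref{thm:MN-lin-jp} assumes even $j_Z$, so its low-weight bound does not transfer to these 24 triples.

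The correct treatment folds only the $A_\Delta$ block. The upper half $t_1>m_Z/2$ is then bounded as a genuine all-ones-coset enumerator, with the odd-parity column polynomial $g^-_{j_Z,j_\Delta,k}$ (nonnegative coefficients), whose trial exponent contains $\log_2(1-\mu(1-2a)^{j_Z}(1-2b)^{j_\Delta})$. The inequality $1-x\le 1+x$ reduces the linear-weight and large-support parts to the even case. The small corner $a\le\beta/j_Z$, $b\le\beta/j_\Delta$, $0<\omega\le\omega_*$ is not covered by the zero-syndrome pairing bound and needs its own estimate. There $\mu\ge A/2$ and $(1-2a)^{j_Z}(1-2b)^{j_\Delta}\ge\eta$, which bound the exponent by $\alpha_Z\h(\beta/j_Z)+\alpha_\Delta\h(\beta/j_\Delta)+\h(\omega_*)-1+\log_2(1-\tfrac{A}{2}\eta)<0$.
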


\begin{proof}
The proof is deferred to Appendix~\ref{app:finite-gv-mn-jp}.
\end{proof}

\section{Relative Distance of the Nested CSS Pair}

Define the relative distances of the nested CSS pair \((\CX,\CZ)\), with the convention that the value is \(+\infty\) when the corresponding difference set is empty, by \(d_Z^{\mathrm{rel}}:=\min\{\wt(\bv): \bv\in \CZ\setminus C_Z(A_X)\}\) and \(d_X^{\mathrm{rel}}:=\min\{\wt(\bv): \bv\in \CX\setminus \CZ^\perp\}\).
 
\begin{corollary}[Relative linear distance for fixed even degrees]\label{cor:lin-jp}
Assume a fixed even balanced triple satisfying \(4\le j_Z<\frac{k}{2}\), \(j_Z\equiv 0\pmod 2\), and \(j_X-j_Z\equiv 0\pmod 2\). Then the nested CSS pair has relative linear distance. That is, there exist \(\delta_Z^{\mathrm{lin}},\delta_X^{\mathrm{lin}}>0\) such that for every \(0<\delta_Z<\delta_Z^{\mathrm{lin}}\) and \(0<\delta_X<\delta_X^{\mathrm{lin}}\), one has \(\PP[d_Z^{\mathrm{rel}}\le \delta_Z n]\to 0\) and \(\PP[d_X^{\mathrm{rel}}\le \delta_X n]\to 0\).
\end{corollary}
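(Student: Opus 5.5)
The plan is to reduce the relative distances to the constituent distances and then invoke Theorems~\ref{thm:HA-lin-jp} and~\ref{thm:MN-lin-jp}.

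The first step is a deterministic comparison. Every nonzero element of \(\CZ\setminus C_Z(A_X)\) is a nonzero codeword of \(\CZ\); note that \(\zeros\in C_Z(A_X)\), so the difference set contains no zero vector. Hence \(d_Z^{\mathrm{rel}}\ge d(\CZ)\) pointwise for every realization. Similarly, \(\zeros\in\CZ^\perp\), so \(\CX\setminus\CZ^\perp\subseteq\CX\setminus\{\zeros\}\), and therefore \(d_X^{\mathrm{rel}}\ge d(\CX)\). The conventions \(d(\{0\})=+\infty\) and \(\min\emptyset=+\infty\) make these inequalities valid even when the relevant sets are empty.

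From these inclusions of events we get
\[
\PP[d_Z^{\mathrm{rel}}\le \delta_Z n]\le \PP[d(\CZ)\le\delta_Z n],
\qquad
\PP[d_X^{\mathrm{rel}}\le \delta_X n]\le \PP[d(\CX)\le\delta_X n].
\]

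The hypotheses of the corollary are exactly those of Theorems~\ref{thm:HA-lin-jp} and~\ref{thm:MN-lin-jp}: a balanced triple with \(4\le j_Z<k/2\), \(j_Z\) even, and \(j_\Delta\) even. So we set \(\delta_Z^{\mathrm{lin}}\) and \(\delta_X^{\mathrm{lin}}\) to be the constants provided by those theorems. For any \(0<\delta_Z<\delta_Z^{\mathrm{lin}}\) and \(0<\delta_X<\delta_X^{\mathrm{lin}}\), the right-hand sides above tend to zero, and hence so do the left-hand sides.

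There is no real obstacle here; all the difficulty lies in the two constituent theorems. The only point needing care is that both probabilities refer to the same joint draw of \(A_Z,A_\Delta,B\) from Definition~\ref{def:family-jp}. This holds, so the bounds combine directly. A union bound would also give \(\PP[d_Z^{\mathrm{rel}}\le\delta_Zn \text{ or } d_X^{\mathrm{rel}}\le\delta_Xn]\to0\) if a simultaneous statement were wanted.
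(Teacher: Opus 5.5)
Your proposal is correct and follows the same route as the paper: the paper's proof deduces \(d_Z^{\mathrm{rel}}\ge d(\CZ)\) and \(d_X^{\mathrm{rel}}\ge d(\CX)\) from the inclusions \(\CZ\setminus C_Z(A_X)\subseteq\CZ\setminus\{0\}\) and \(\CX\setminus\CZ^\perp\subseteq\CX\setminus\{0\}\), then applies Theorems~\ref{thm:HA-lin-jp} and~\ref{thm:MN-lin-jp}. Your observation that only \(\zeros\in C_Z(A_X)\) and \(\zeros\in\CZ^\perp\) is needed for those inclusions is accurate and slightly sharper than the paper's phrasing.
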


\begin{proof}
From \(C_Z(A_X)\subseteq \CZ\) and \(\CZ^\perp\subseteq \CX\), we obtain \(\CZ\setminus C_Z(A_X)\subseteq \CZ\setminus\{0\}\) and \(\CX\setminus \CZ^\perp\subseteq \CX\setminus\{0\}\). Hence \(d_Z^{\mathrm{rel}}\ge d(\CZ)\) and \(d_X^{\mathrm{rel}}\ge d(\CX)\), and Theorems~\ref{thm:HA-lin-jp} and~\ref{thm:MN-lin-jp} finish the proof.
\end{proof}

Under the balanced condition, the quantum design rate is \(R_Q^{\mathrm{des}}=1-2\alpha_Z\).
On the other hand, the asymptotic existence bound corresponding to
the CSS existence results of Calderbank--Shor and Steane is
 \(R_Q \ge 1-2\h(\delta)\), or equivalently \(\delta_{\mathrm{CSS\text{-}GV}}(R_Q)=\h^{-1}\!\left(\frac{1-R_Q}{2}\right)\), as given in~\cite{CS96,Ste96}.
Therefore, for a finite triple \((j_Z,j_X,k)\), the target value is \(\delta_{\mathrm{CSS\text{-}GV}}(R_Q^{\mathrm{des}})=\h^{-1}\!\left(\frac{1-(1-2j_Z/k)}{2}\right)=\h^{-1}\!\left(\frac{j_Z}{k}\right)\).
Thus the two theorems above imply that this value is attained already at finite degree for every triple in \(\mathcal T_{\mathrm{GV}}\).

\begin{corollary}[Finite-degree CSS GV attainment]\label{cor:finite-gv-jp}
For each \((j_Z,j_X,k)\in\mathcal T_{\mathrm{GV}}\), let \(\delta_{\mathrm{GV}}:=\delta_{\mathrm{CSS\text{-}GV}}(R_Q^{\mathrm{des}})=\h^{-1}\!\left(\frac{1-R_Q^{\mathrm{des}}}{2}\right)\). Then for every \(0<\delta<\delta_{\mathrm{GV}}\), \(\PP[d_Z^{\mathrm{rel}}\le \delta n]\to 0\) and \(\PP[d_X^{\mathrm{rel}}\le \delta n]\to 0\) \((n\to\infty)\) hold. In this sense, every triple in \(\mathcal T_{\mathrm{GV}}\) attains the CSS GV distance at finite degree.
\end{corollary}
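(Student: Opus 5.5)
The corollary follows from Theorems~\ref{thm:HA-gv-jp} and~\ref{thm:MN-gv-jp} together with the same monotonicity used in Corollary~\ref{cor:lin-jp}; no new probabilistic estimate is needed. The plan is first to identify the three target distances for a triple in \(\mathcal T_{\mathrm{GV}}\), and then to transfer the two classical GV statements to the relative distances.

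First I fix \((j_Z,j_X,k)\in\mathcal T_{\mathrm{GV}}\), so the triple is balanced: \(j_X+j_Z=k\), hence \(R_Z^{\mathrm{des}}=R_X^{\mathrm{des}}=1-\alpha_Z\) and \(R_Q^{\mathrm{des}}=1-2\alpha_Z\). Substituting into the definitions gives
\[
\delta_{\mathrm{CSS\text{-}GV}}(R_Q^{\mathrm{des}})=\h^{-1}\!\Bigl(\tfrac{2\alpha_Z}{2}\Bigr)=\h^{-1}(\alpha_Z),
\qquad
\delta_{\mathrm{GV}}(R_Z^{\mathrm{des}})=\delta_{\mathrm{GV}}(R_X^{\mathrm{des}})=\h^{-1}(1-(1-\alpha_Z))=\h^{-1}(\alpha_Z).
\]
All three thresholds are therefore the same number, and it is well defined on the branch \([0,1/2]\) because \(0<\alpha_Z<1/2\), which holds since \(j_Z<k/2\).

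Next I fix any \(0<\delta<\h^{-1}(\alpha_Z)\). As in the proof of Corollary~\ref{cor:lin-jp}, Theorem~\ref{prop:css-jp} gives \(C_Z(A_X)\subseteq\CZ\) and \(\CZ^\perp\subseteq\CX\). Hence \(\CZ\setminus C_Z(A_X)\subseteq\CZ\setminus\{0\}\) and \(\CX\setminus\CZ^\perp\subseteq\CX\setminus\{0\}\). It follows that \(d_Z^{\mathrm{rel}}\ge d(\CZ)\) and \(d_X^{\mathrm{rel}}\ge d(\CX)\) hold pointwise for every realization; the convention \(+\infty\) for an empty difference set makes these inequalities automatic. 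Consequently the events satisfy \(\{d_Z^{\mathrm{rel}}\le\delta n\}\subseteq\{d(\CZ)\le\delta n\}\) and \(\{d_X^{\mathrm{rel}}\le\delta n\}\subseteq\{d(\CX)\le\delta n\}\).

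Finally, Theorem~\ref{thm:HA-gv-jp} applies with threshold \(\h^{-1}(\alpha_Z)\) and gives \(\PP[d(\CZ)\le\delta n]\to0\). Theorem~\ref{thm:MN-gv-jp} applies with the same threshold and gives \(\PP[d(\CX)\le\delta n]\to0\). Monotonicity of probability under the event inclusions above then yields both claimed limits. The only step requiring any care is the threshold identification, in particular that the balanced condition makes the CSS-GV target coincide exactly with each classical GV target. This is a direct computation, so I expect no real obstacle; all the substantive work is contained in the two deferred theorems.
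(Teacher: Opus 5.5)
Your proposal is correct and follows the paper's own argument. The paper likewise observes that under the balanced condition \(\h^{-1}\bigl((1-R_Q^{\mathrm{des}})/2\bigr)=\h^{-1}(j_Z/k)\) coincides with both classical GV thresholds. It then combines Theorems~\ref{thm:HA-gv-jp} and~\ref{thm:MN-gv-jp} with the pointwise bounds \(d_Z^{\mathrm{rel}}\ge d(\CZ)\) and \(d_X^{\mathrm{rel}}\ge d(\CX)\).
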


\begin{proof}
This follows immediately from Theorems~\ref{thm:HA-gv-jp} and~\ref{thm:MN-gv-jp} together with \(d_Z^{\mathrm{rel}}\ge d(\CZ)\) and \(d_X^{\mathrm{rel}}\ge d(\CX)\).
\end{proof}

\section{Certified Balanced Regular Triples}

The set \(\mathcal T_{\mathrm{GV}}\) contains all 56 balanced triples
with even \(k\), \(10\le k\le30\), \(4\le j_Z\le10\), and
\(j_Z<k/2\). Of these, 32 have even \(j_Z\) and 24 have odd \(j_Z\).
For each triple, Theorem~\ref{thm:actual-rates-jp} and
Corollary~\ref{cor:finite-gv-jp} give the limiting rate and guaranteed
relative-distance threshold
\[
R_Q\longrightarrow 1-\frac{2j_Z}{k},\qquad
\delta_{\mathrm{GV}}=\h^{-1}\!\left(\frac{j_Z}{k}\right).
\]
The distance statement holds for every fixed
\(0<\delta<\delta_{\mathrm{GV}}\). It is a lower-bound assertion;
it does not identify the exact asymptotic minimum distance or exclude
GV attainment by parameters outside the certified set.

\begin{figure}[ht]
\centering
\includegraphics[width=\textwidth]{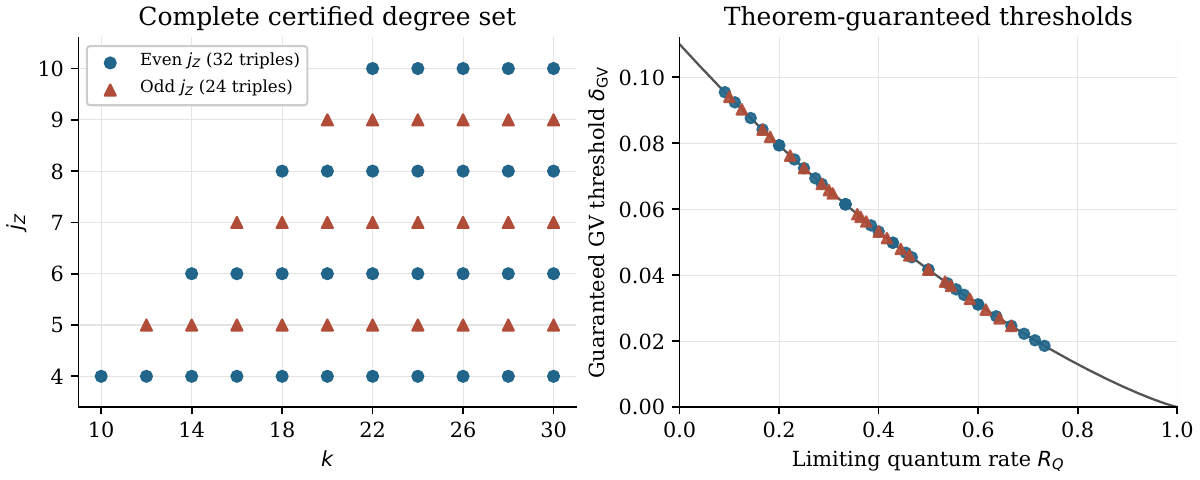}
\caption{The 56 triples covered by the finite-degree GV theorems.
Left: the complete degree set, with \(j_X=k-j_Z\).
Right: the theorem-guaranteed threshold
\(\delta_{\mathrm{GV}}=\h^{-1}(j_Z/k)\) plotted against the limiting
quantum rate \(1-2j_Z/k\). Distinct triples with the same ratio
\(j_Z/k\) overlap in the right panel. The markers distinguish the
32 even-\(j_Z\) and 24 odd-\(j_Z\) triples. The curve and marker heights
represent the GV threshold, rather than measured minimum distances or
zeros of numerically optimized enumerators.}
\label{fig:finite-gv-jp}
\end{figure}

The matrix data, degree tables, and verification programs are available
in the GitHub repository~\cite{KasaiGVData26}.
The file \path{certified_triples.csv} lists all 56 triples.
The script \path{plot_certified_region.py} generates the figure directly
from their integer defining conditions. Floating-point inversion of
binary entropy is used only to position the plotted markers; the
proofs in Appendices~\ref{app:finite-gv-ha-jp} and
\ref{app:finite-gv-mn-jp} use the separate exact-rational verifier
\path{verify_analytic_scalar_bounds.py}.

We also note that the boundary \(j_X=j_Z=k/2\), where the design quantum rate is zero, lies outside the scope of the main theorem. Nevertheless, if one looks only at the minimum distances of the classical constituent codes, then a separate finite-degree GV certification can still be carried out on this boundary. This is formalized in the next proposition. The triple \((3,3,6)\) is outside this certification because the one-block MN low-weight small-support estimate used here does not establish it. Since \(A_X=A_Z\) along this boundary, \(C_X=C_Z^\perp\) and the actual quantum dimension is exactly zero at every blocklength. Thus the proposition below should be read not as a nontrivial quantum relative-distance statement, but as a finite-degree classical constituent example. The accompanying script \path{verify_boundary_bounds.py} checks all 12 boundary triples and records exact rational interval endpoints and upper bounds.

\begin{proposition}[Classical GV certification on the zero-quantum-rate boundary]\label{prop:zero-rate-boundary-jp}
\[
\mathcal T_{\mathrm{GV}}^{(0)}
:=
\{(j_Z,j_X,k)=(j,j,2j): 4\le j\le 15\}
\]
and, for each \((j_Z,j_X,k)\in\mathcal T_{\mathrm{GV}}^{(0)}\), let \(\delta_{\mathrm{GV}}:=\h^{-1}(1/2)\). Then for every \(0<\delta<\delta_{\mathrm{GV}}\),
\[
\PP[d(\CZ)\le \delta n]\to 0,
\qquad
\PP[d(\CX)\le \delta n]\to 0
\qquad (n\to\infty).
\]
Moreover, \(R_Z\to1/2\) and \(R_X\to1/2\) in probability. Thus every triple in \(\mathcal T_{\mathrm{GV}}^{(0)}\) attains the classical GV point on both the HA side and the one-block MN side already at finite degree.
\end{proposition}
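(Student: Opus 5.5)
On the boundary \((j,j,2j)\) we have \(j_\Delta=0\), so \(A_X=A_Z\) is a single \((j,2j)\)-regular block. The plan is to rerun the first-moment machinery of Appendices~\ref{app:HA-proofs-jp} and~\ref{app:MN-proofs-jp} in this degenerate one-block case, and to reduce the GV claim to finitely many scalar exponent inequalities. Those inequalities are then certified by rational interval subdivision. The rate claim will be handled separately, through Proposition~\ref{prop:actual-rates-jp}.

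\emph{Rates.} Proposition~\ref{prop:actual-rates-jp} gives \(R_Z=(n-\rank A_Z-L_Z)/n\) and \(R_X=(\rank A_Z+L_Z)/n\); the second uses \(L_X=L_Z\) here. It therefore suffices to show that \(\rank A_Z=n/2-o(n)\) and \(L_Z\le\dim\Ker B=o(n)\) in probability. The bound on \(\dim\Ker B\) is Proposition~\ref{prop:B-rank-jp}, valid for \(k=2j\ge8\). For the rank, \(\rank A_Z\le m_Z=n/2\) is trivial. The matching lower bound, \(\dim\Ker A_Z=n/2+o(n)\), follows from the standard rank result for regular LDPC ensembles with \(j\ge3\) that the paper uses for Theorem~\ref{thm:actual-rates-jp}. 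The route there is a first-moment bound on kernel vectors after the all-ones direction is accounted for; it uses that the \((j,2j)\) weight spectrum has its only zero at weight \(1/2\) and that \(j\ge4\). I will simply invoke the block-rank lemma from Appendix~\ref{app:actual-rates-jp} with \(j_\Delta=0\).

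\emph{Distances.} On the HA side, \(\CZ=B(\Ker A_Z)\). I bound \(\EE\#\{\bv\in\CZ:0<\wt\bv\le\delta n\}\) by summing over the outer weight \(u\) and the visible weight \(v\). Each term is the outer spectrum of the \((j,2j)\) ensemble times the transition enumerator of \(B\), exactly as in Appendix~\ref{app:HA-proofs-jp}. The visible weight is removed by entropy conjugacy, which leaves an exponent \(F_j(u,v)\). I then split into two regimes.
\begin{itemize}
\item \emph{Low weight}, \(u,v\le\varepsilon n\). The pairing bound applies, since \(j\ge4\) and \(k\ge8\); it gives \(o(1)\) total probability.
\item \emph{Linear weight}. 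It suffices to show \(\sup_{u}F_j(u,v)<0\) for every \(v\in[\varepsilon,\delta]\) with \(\delta<\h^{-1}(1/2)\). Because \(k\) is even, the complement symmetry reduces the range to \(u\le1/2\).
\end{itemize}
On the MN side, \(\CX=\{\bv:B^T\bv\in\Row A_Z\}\), and I enumerate pairs \((\bw,\bv)\) with \(A_Z^T\bw=B^T\bv\). Since \(A_X\) is a single block, the stacked exact-configuration formula collapses to the one-block formula, and concavity reduces it to a single hidden weight variable. The proof then follows the same two-regime split. The low-weight regime uses the one-block small-support estimate, which is exactly what fails for \((3,3,6)\) and explains why \(j\ge4\) is required.

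\emph{Main obstacle.} The hard step is the linear-weight inequality near the GV endpoint \(\delta\to\h^{-1}(1/2)\). At zero quantum rate the design rate is exactly \(1/2\), so the exponent \(\max_u F_j(u,\delta_{\mathrm{GV}})\) approaches zero. The margin is thin there, and the trial-point/convexity reductions used for positive-rate triples do not close it analytically. I therefore expect to need a rational subdivision of \((u,v)\) over compact intervals bounded away from the endpoint, with explicit logarithm remainders as in~\cite{MKC09,Tucker11}. Close to the endpoint I would pass to a local argument: show that the maximizer in \(u\) is the balanced point and that \(F_j\) equals \(\h(v)-1/2\) plus a strictly negative correction. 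Each of the 12 values \(j=4,\dots,15\) is then checked by \path{verify_boundary_bounds.py} on finitely many intervals, and a union bound over the polynomially many weights completes the proof.
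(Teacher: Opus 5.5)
\textbf{Gap: odd $j$ on the MN side.} Your overall architecture matches the paper's: first moment, pairing bound for small hidden supports, complement folding, a Pinsker-type local bound at the balanced point, rationally certified scalar inequalities, and rates from $\rank A_Z$ and $\dim\Ker B$. The MN side, however, has a real hole for odd $j$, which is six of the twelve triples. Evenness of $k$ gives the HA-side fold $N_o(s)=N_o(n-s)$. It does not give the MN-side fold. The witness complement $\bp\mapsto\bp+\ones_{[m_Z]}$ preserves $A_Z^T\bp=B^T\bv$ only if $A_Z^T\ones_{[m_Z]}=(j\bmod 2)\ones_{[n]}$ vanishes, so Proposition~\ref{prop:fold-jp} is unavailable when $j$ is odd. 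Complementing an upper-half witness then lands in the coset $A_Z^T\bp'+B^T\bv=\ones_{[n]}$. Neither of your two regimes covers these terms; equivalently, without folding you would need a sign-aware trial bound on $1/2<a\le 1$.

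The paper handles these terms with the odd-parity enumerator built from $\bigl((1+u)^j(1+r)^k-(1-u)^j(1-r)^k\bigr)/2$. Its coefficients are nonnegative, so coefficient extraction still applies. The resulting exponent $\Phi_{-,0}$ contains $\log_2\bigl(1-(1-2\omega)^k(1-2a)^j\bigr)$, which is finite for $w\ge1$. On the linear-weight and large-support regions it is dominated by $\Phi_{+,0}$. On the remaining corner $a\le\beta/j$, $0<\omega\le\omega_*$, it is bounded by the separate certificate (F--3), using $(1-2\omega)^k\ge 1/4$ there. You need some such argument.

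\textbf{Further points to repair.} First, you cannot simply invoke Lemma~\ref{lem:A2-nullity-jp} with $j_\Delta=0$. Its hypotheses require $j_Z<k/2$, and its one-block inputs (Lemma~\ref{lem:scalar-zero-output-gap-jp} and the bound $q_Z<1$) are proved only for $k\ge10$, which excludes $(4,4,8)$. The one-block argument does not involve $A_\Delta$ and can be rechecked at $k=8$, but that has to be done. The paper instead reruns the $w=0$ witness count with its own boundary certificates (pairing bound, (F--2), (F--3)) to get $\EE|\Ker A_Z^T|=O(1)$.

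Second, split by hidden weight rather than visible weight. On the HA side the pairing bound disposes of $s\le\tau_0 n$ for \emph{all} visible weights, because $\sum_l M_k(s,l)=1$; the exponent is then needed only for $s\ge\tau_0 n$, $\omega\le\delta$. Your split leaves large outer weight with small visible weight unassigned. Your linear-weight criterion also fails with the explicit trial envelope of Appendix~\ref{app:HA-proofs-jp} at small outer weight. There $\h(\omega)+F_Z(\tau,\omega)=W_o^{\mathrm{tr}}(\tau)-D_2(\omega\|p(\tau))$, which equals $W_o^{\mathrm{tr}}(\tau)\approx 0.029>0$ at $k=8$, $\tau=0.0137$, $\omega=p(\tau)\approx0.0997$. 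On the MN side the analogous unassigned region, large witness weight with $\omega\le\omega_*$, is exactly what (F--2) covers.

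Third, you do not need to locate the maximizer in $u$. The paper's Pinsker step on $49/100\le\tau\le1/2$ gives exponent at most $\h(\delta)-1/2$ directly. On $[\tau_0,49/100]$, monotonicity of $-D_2(\omega\|p(\tau))$ for $\omega<p(\tau)$ lets the paper certify once at a rational $\bar d$ just above $\h^{-1}(1/2)$, with $p(\tau_0)>\bar d$. This makes the certificate one-dimensional and independent of $\delta$. A subdivision in $v$ that stops short of $\h^{-1}(1/2)$ would instead have to be redone for each $\delta$.
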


\begin{proof}
The proof is deferred to Appendix~\ref{app:zero-rate-boundary-jp}.
\end{proof}

We also note why the present finite-degree certification remains concentrated on even \(k\). On the HA side, Appendix~\ref{app:finite-gv-ha-jp} treats the small-input range \(1\le s\le \beta_Z n/k\) by a pairing bound and then uses the complement symmetry \(N_o(s)=N_o(n-s)\), valid for even \(k\), to control the high-input tail \(n-\beta_Z n/k\le s\le n-1\). For odd \(k\), however, the complement map no longer preserves the kernel and instead moves it to the all-one-syndrome coset. Thus the same reduction is unavailable as it stands. Extending the certification to odd \(k\) would therefore require an odd-syndrome coset enumerator together with a direct estimate of the corresponding high-input tail.

\begin{conjecture}[Balanced triples attaining finite-degree GV]\label{conj:balanced-gv-jp}
Let \((j_Z,j_X,k)\) be any balanced triple satisfying \(4\le j_Z<\frac{k}{2}\) and \(j_X=k-j_Z\).
Then the corresponding HA-side and MN-side classical constituent codes attain GV distance already at fixed finite degree, and hence the associated nested CSS family attains the CSS GV distance.
\end{conjecture}

\section{Conclusion}

We constructed a balanced nested regular CSS family entirely from regular LDPC matrices.
This family simultaneously has bounded graphical complexity, equal classical design rates,
and a positive design quantum rate.
The most important point is that the MN-side parity-check matrix is the stacked matrix \(A_X=\begin{bmatrix}A_Z\\ A_\Delta\end{bmatrix}\), and the proof must therefore be carried out on this stacked ensemble itself.

The final conclusions are twofold.
For fixed balanced triples with \(j_Z\ge4\) and even component degrees, the nested CSS pair has relative distance linear in \(n\) with probability tending to \(1\).
Moreover, within the search window \(\mathcal T_{\mathrm{scan}}\), the classical GV point, and hence by Corollary~\ref{cor:finite-gv-jp} the CSS GV distance, is rigorously certified already at finite degree for the whole set \(\mathcal T_{\mathrm{GV}}\) of 56 balanced triples with even \(k\) and \(j_Z\ge4\), including all 24 odd-\(j_Z\) cases.
Moreover, the actual classical and quantum rates converge in probability to the corresponding design rates.
Thus the balanced nested regular MN/HA construction simultaneously realizes bounded graphical complexity,
a positive design quantum rate, fixed-degree relative linear distance, and explicit finite-degree GV-certified examples.

Another important point is that in this family the three blocks \(A_Z\), \(A_\Delta\), and \(B\)
can be chosen independently, so finite-length design retains the freedom to enlarge the girth of the extended parity-check matrices.
This is significant because large-girth design for fixed-degree sparse graphs and Tanner graphs is a separate finite-length task, with sequential construction methods available~\cite{BayatiMS09,BayatiKMOS09} but only logarithmic girth growth allowed by Moore-bound constraints~\cite{AlonHL02}. The present nested construction is compatible with importing such methods, unlike more rigid CSS-LDPC subclasses where explicit short-cycle obstructions are known~\cite{AmirzadePS24}.

On the other hand, the uncoupled nested construction studied here is primarily an asymptotic guarantee on distance and rate,
and is not itself optimized as a direct target for standard BP decoding.
In fact, the compressed parity-check matrices \(H_Z,H_X\) that provide the visible-variable syndromes are generally dense,
so syndrome measurement is not yet handled as a sparse graph.
However, as seen in Section~2.4, once syndrome representatives \(\bt_Z,\bt_X\) are injected externally, the compressed syndrome equations themselves can be written as sparse affine systems in terms of the sparse extended matrices \(H_Z',H_X'\). What remains unresolved is that the step of forming \(\bt_Z,\bt_X\) from the compressed syndromes is generally dense, and that, on simple realizations in the certified degree range, each check node is adjacent to multiple punctured nodes, so this sparse representation does not by itself yield a directly usable BP rule of the kind employed for classical MN/HA codes. Accordingly, this paper still does not propose an efficient decoding method.

However, on the classical side, bounded-degree spatial coupling achieves capacity on the BEC for multi-edge type LDPC ensembles~\cite{ObataJKP13},
and for the SC-MN / SC-HA family it is known to improve the BP threshold on the BEC~\cite{KasaiSak11},
to preserve distance growth~\cite{Mitchell12},
and even to achieve universal symmetric information rate on generalized erasure channels with memory~\cite{FukushimaOK15}.

On the quantum side as well, spatially coupled quasi-cyclic quantum LDPC-CSS codes were proposed early on in~\cite{HagiwaraKIS11},
and there are now results showing sparse-structure decoding performance approaching
the coding-theoretical bound~\cite{KomotoKasai25}.
This is a plausible direction rather than a consequence of the present analysis.
It is therefore natural to expect that introducing a spatially coupled version of the nested family studied here
could alleviate the issues above, just as in the classical case, and possibly lead to BP-type decoding rules with near-optimal decoding performance.

\appendix
\renewcommand{\thesection}{\Alph{section}}

\engappendixsection{HA-Side Enumerators and Linear-Distance Proof}{app:HA-proofs-jp}

For integers \(N\ge1\), we use the endpoint-uniform binomial bounds
\begin{equation}\label{eq:binomial-uniform-jp}
\frac{2^{N\h(q/N)}}{N+1}\le\binom Nq\le2^{N\h(q/N)}
\qquad(0\le q\le N).
\end{equation}
For \(0<q<N\), the binomial probability at \(q\), with success
probability \(q/N\), is a maximum among its \(N+1\) values, so it is
at least \(1/(N+1)\) and at most \(1\). This proves both bounds;
the endpoints are immediate. We use \(\log_2 0=-\infty\) for a zero
enumerator and the convention \(0\log 0=0\).

For \(A_Z\) sampled from the unconditioned \((j_Z,k)\)-regular configuration-model ensemble of Definition~\ref{def:family-jp} with \(k_Z=k\), let \(m_Z:=j_Zn/k\) and define the average weight distribution of the outer code \(\Ker A_Z\) by
\begin{equation*}
\begin{split}
N_o(s)
&:=\EE\bigl|\{\bu\in\Ker A_Z:\wt(\bu)=s\}\bigr| \\
&=
\binom{n}{s}
\frac{\left[x^{j_Z s}\right]\left(\frac{(1+x)^k+(1-x)^k}{2}\right)^{m_Z}}
{\binom{n j_Z}{j_Z s}}.
\end{split}
\end{equation*}
For a fixed support \(U\subset[n]\) with \(|U|=s\), the \(j_Zs\) sockets attached to \(U\) map to a uniformly random \(j_Zs\)-subset of the \(n j_Z=km_Z\) row-side sockets in the configuration model.
The condition \(A_Z\ones_U=0\) is exactly that every row receives an even number of these sockets, whose one-row generating function is \(\bigl((1+x)^k+(1-x)^k\bigr)/2\).
Multiplying the resulting coefficient probability by the \(\binom ns\) choices of \(U\) gives the displayed form.

Set
\begin{equation*}
\alpha_Z:=\frac{j_Z}{k}.
\end{equation*}
For \(0<\tau<1\), define the upper exponent by a limsup over
admissible blocklengths and integer weights with \(s/n\to\tau\).
This definition includes irrational \(\tau\) and does not require
\(\tau n\) itself to be an integer. All finite-length estimates below
are stated directly for integer \(s\). Coefficient extraction gives
\begin{equation}\label{eq:Wo-jp}
W_o(\tau):=
\limsup_{\substack{n\to\infty\\s/n\to\tau}}\frac1n\log_2 N_o(s)
\le
\alpha_Z \inf_{x>0}\log_2 \frac{(1+x)^k+(1-x)^k}{2x^{\tau k}} - (j_Z-1)\h(\tau).
\end{equation}

To obtain this bound, apply the coefficient-extraction estimate
\([x^{j_Z s}]F(x)\le F(x)/x^{j_Z s}\) to the displayed definition of \(N_o(s)\) for any \(x>0\),
and then use Stirling's formula
\(\frac1n\log_2\binom{n}{s}=\h(s/n)+o(1)\),
\(\frac1n\log_2\binom{n j_Z}{j_Zs}=j_Z\h(s/n)+o(1)\).
For every \(x>0\),
\(\frac1n\log_2 N_o(s)\) is bounded above by
\(\h(\tau)+\alpha_Z\log_2\!\bigl(((1+x)^k+(1-x)^k)/2x^{\tau k}\bigr)-j_Z\h(\tau)+o(1)\).
Taking the limsup with \(s/n\to\tau\), using continuity for each fixed \(x>0\), and then optimizing over \(x>0\) gives \eqref{eq:Wo-jp}.
In particular, substituting \(x=\tau/(1-\tau)\) yields
\begin{equation}\label{eq:Wo-trial-jp}
W_o(\tau)\le W_o^{\mathrm{tr}}(\tau),
\qquad
W_o^{\mathrm{tr}}(\tau):=
\alpha_Z\log_2\!\bigl(1+(1-2\tau)^k\bigr)+\h(\tau)-\alpha_Z.
\end{equation}

Before taking the limsup, use the same trial point and
\eqref{eq:binomial-uniform-jp}. This gives the explicit finite-length bound
\[
N_o(s)\le (nj_Z+1)2^{nW_o^{\mathrm{tr}}(s/n)}
\qquad (1\le s\le n-1).
\]

Next define
\begin{equation*}
f_+(z,k):=\frac{(1+z)^k+(1-z)^k}{2},
\qquad
f_-(z,k):=\frac{(1+z)^k-(1-z)^k}{2}.
\end{equation*}
For a fixed outer matrix \(A_Z\), let
\begin{equation*}
A_o(s;A_Z):=\bigl|\{\bu\in\Ker A_Z:\wt(\bu)=s\}\bigr|.
\end{equation*}
Let \([z^m]f(z)\) denote the coefficient of \(z^m\) in \(f(z)\).
Define
\begin{equation*}
M_k(s,l)
:=
\frac{\binom{n}{l}[z^{ks}]\,f_-(z,k)^l f_+(z,k)^{n-l}}
{\binom{kn}{ks}}.
\end{equation*}
The HA inner transition kernel~\cite{HA05} for the present square \((k,k)\)-regular map is \(M_k(s,l)\): an outer codeword of weight \(s\) produces an output of weight \(l\) through \(B\).

\begin{lemma}[HA transition enumerator for the square regular map]\label{lem:HA-transition-jp}
Assume \(k\) is even, and let \(B\) be sampled from the square \((k,k)\)-regular
configuration-model ensemble of Definition~\ref{def:family-jp}, independently of the fixed matrix \(A_Z\). For \(1\le l\le n\),
\begin{equation*}
\EE_B[A_{\CZ}(l)\mid A_Z]
\le
\sum_{s=\lceil l/k\rceil}^{n-\lceil l/k\rceil} A_o(s;A_Z)\,M_k(s,l)
\end{equation*}
holds.
In particular, taking expectation also over \(A_Z\) gives
\begin{equation}\label{eq:HA-enum-jp}
\EE[A_{\CZ}(l)]
\le
N_Z^{\mathrm{ub}}(l)
:=
\sum_{s=\lceil l/k\rceil}^{n-\lceil l/k\rceil}
N_o(s)\,M_k(s,l).
\end{equation}
Moreover, for each \(s\),
\begin{equation*}
0\le M_k(s,l)\le 1,
\qquad
\sum_{l=0}^n M_k(s,l)=1.
\end{equation*}
\end{lemma}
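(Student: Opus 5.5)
The plan is to bound the visible weight distribution of \(\CZ=B(\Ker A_Z)\) by a union bound over preimages, and then evaluate the probability that \(B\) maps a fixed vector to a vector of weight \(l\) exactly in the configuration model. Fix \(A_Z\). Every \(\bv\in\CZ\) with \(\wt(\bv)=l\) has at least one preimage \(\bu\in\Ker A_Z\) with \(B\bu=\bv\). Hence
\[
A_{\CZ}(l)\le\sum_{\bu\in\Ker A_Z}\mathbf 1[\wt(B\bu)=l].
\]
Taking \(\EE_B\), which is legitimate because \(B\) is independent of \(A_Z\), gives
\[
\EE_B[A_{\CZ}(l)\mid A_Z]\le\sum_s A_o(s;A_Z)\,\PP_B[\wt(B\bu)=l],
\]
where \(\bu\) is any fixed vector of weight \(s\). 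By the symmetry of the configuration model under column permutations, this probability depends only on \(s\).

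The second step computes \(\PP_B[\wt(B\bu)=l]\) exactly. The \(ks\) column sockets attached to the support of \(\bu\) are matched to a uniformly random \(ks\)-subset of the \(kn\) row sockets. Coordinate \(i\) of \(B\bu\) is the parity of the number of these sockets that land in row \(i\), where multi-edges are counted with multiplicity; this matches the mod-two convention. Row \(i\) has \(k\) sockets. The generating function over subsets of its sockets, with the count tracked by \(z\), is \(f_+(z,k)\) for even counts and \(f_-(z,k)\) for odd counts. Choosing the set of \(l\) odd rows in \(\binom nl\) ways and extracting \([z^{ks}]\) gives the number of \(ks\)-subsets with exactly that odd pattern. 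Dividing by \(\binom{kn}{ks}\) yields exactly \(M_k(s,l)\). From this interpretation, \(M_k(s,l)\ge0\), and summing over \(l\) reassembles \([z^{ks}](f_++f_-)^n=[z^{ks}](1+z)^{kn}=\binom{kn}{ks}\). Therefore \(\sum_l M_k(s,l)=1\) and \(M_k(s,l)\le1\).

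The third step restricts the range of \(s\). Each column of \(B\) contributes to at most \(k\) rows, so \(\wt(B\bu)\le ks\). Thus \(M_k(s,l)=0\) unless \(s\ge l/k\), i.e. \(s\ge\lceil l/k\rceil\). For the upper end I would use that \(k\) is even. Each row of \(B\) has even socket degree \(k\), so \(B\ones=0\) mod 2, which gives \(B\bu=B(\bu+\ones)\). Hence the weight-\(l\) event for a weight-\(s\) input coincides with the one for its complement of weight \(n-s\). Concretely, the odd/even row pattern for the complementary socket set is the same because every row has even total degree. This gives \(M_k(s,l)=M_k(n-s,l)\), which vanishes unless \(n-s\ge\lceil l/k\rceil\).

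The terms \(s=0\) and, for \(l\ge1\), \(s\) with \(n-s<\lceil l/k\rceil\) contribute nothing, which justifies the stated summation range. Finally, since \(\EE_{A_Z}[A_o(s;A_Z)]=N_o(s)\), taking expectation over \(A_Z\) and using linearity yields \eqref{eq:HA-enum-jp}.

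The only delicate point is the exact socket-level computation in the second step: the parities must be taken with multiplicity, consistent with the mod-two convention of the unconditioned ensemble. Once that is set up carefully, the rest is bookkeeping, and the complement symmetry follows directly from the evenness of \(k\).
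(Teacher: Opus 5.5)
Your proposal is correct and follows the paper's proof essentially step for step: the union bound over preimages in $\Ker A_Z$, the exact socket count with $f_-$ on the $l$ odd rows and $f_+$ on the remaining rows, and the normalization $\sum_l M_k(s,l)=1$. The only variation is how you get the upper summation limit. The paper reads it off from $\deg f_-=k-1$ and $\deg f_+=k$ for even $k$, which forces $ks\le kn-l$. You instead use the complement symmetry $M_k(s,l)=M_k(n-s,l)$ coming from $B\ones_{[n]}=0$, and both routes are immediate consequences of $k$ being even.
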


\begin{proof}
Fix one outer word \(\bu\in\Ker A_Z\) of weight \(s\), and let its support be \(U\subset[n]\).
Then the active column-side sockets attached to \(U\) number exactly \(ks\),
and their images form a uniformly random \(ks\)-subset of the \(kn\) row-side sockets.

Next fix a set \(T\subset[n]\) of weight \(l\).
The condition that row \(i\in T\) receives an odd number of active sockets and row \(i\notin T\) an even number
is equivalent to the statement that the support of \(B\bu\) is exactly \(T\).
Therefore the one-row generating function is \(f_-(z,k)\) for \(i\in T\) and \(f_+(z,k)\) for \(i\notin T\),
and the total number of active row-side subsets satisfying the condition is
\begin{equation*}
[z^{ks}]\,f_-(z,k)^l f_+(z,k)^{n-l}.
\end{equation*}
Hence, for fixed \(T\), the corresponding probability is
\begin{equation*}
\frac{[z^{ks}]\,f_-(z,k)^l f_+(z,k)^{n-l}}{\binom{kn}{ks}},
\end{equation*}
and multiplying by the number \(\binom{n}{l}\) of choices of \(T\) shows that the probability of obtaining an output of weight \(l\)
is \(M_k(s,l)\).

Furthermore, for even \(k\), the degree of \(f_-(z,k)\) is \(k-1\) and the degree of \(f_+(z,k)\) is \(k\),
so the expression is automatically zero unless
\begin{equation*}
l\le ks\le nk-l.
\end{equation*}

Summing over all outer codewords of weight \(s\), we get
\begin{equation*}
\EE_B[A_{\CZ}(l)\mid A_Z]
\le
\sum_s A_o(s;A_Z) M_k(s,l).
\end{equation*}
The inequality, rather than equality, is needed because distinct \(\bu\in\Ker A_Z\) can map to the same \(\bv=B\bu\).
Since the events \(|T|=l\) are disjoint and exhaust all supports, \(\sum_{l=0}^n M_k(s,l)=1\).
Taking expectation also over \(A_Z\) yields \eqref{eq:HA-enum-jp}.
\end{proof}

\begin{lemma}[Half-interval entropy inequality]\label{lem:half-entropy-jp}
For every integer \(L\ge2\),
\begin{equation}\label{eq:scalar-half-ineq-jp}
\h(q)+\log_2\!\bigl(1+(1-2q)^L\bigr)\le1,
\qquad
\frac1L\le q\le\frac12.
\end{equation}
\end{lemma}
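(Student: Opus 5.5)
The plan is to change variables to \(t:=1-2q\), which turns \eqref{eq:scalar-half-ineq-jp} into a comparison between two power series in \(t\). The range \(\frac1L\le q\le\frac12\) becomes \(0\le t\le 1-\frac2L\). We use the standard identity
\[
1-\h\!\left(\tfrac{1-t}{2}\right)=g(t):=\tfrac12\bigl[(1+t)\log_2(1+t)+(1-t)\log_2(1-t)\bigr]
=\frac{1}{\ln 2}\sum_{m\ge1}\frac{t^{2m}}{2m(2m-1)}.
\]
It follows from expanding \(\ln(1\pm t)\) and pairing terms, and it is valid for \(0\le t<1\). With this identity the claim is equivalent to \(\log_2(1+t^L)\le g(t)\) on \(0\le t\le 1-2/L\).

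The case \(L=2\) is degenerate: the interval is the single point \(q=\frac12\), that is \(t=0\), where both sides vanish. For \(L\ge3\) we argue in three steps.

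\begin{enumerate}
\item Bound the left side above: \(\log_2(1+t^L)\le t^L/\ln2\), since \(\ln(1+y)\le y\).
\item Bound the right side below: since every coefficient of the series is nonnegative, keeping only the first term gives \(g(t)\ge t^2/(2\ln2)\).
\item Combine: it suffices to show \(t^{L-2}\le\frac12\) on the interval. Because \(t^{L-2}\) is increasing in \(t\), it is enough to check the right endpoint, i.e.\ \((1-2/L)^{L-2}\le\frac12\).
\end{enumerate}

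For the endpoint inequality in step 3, use \(1-x\le e^{-x}\):
\[
(1-2/L)^{L-2}\le e^{-2(L-2)/L}\le e^{-2/3}<\tfrac12
\qquad (L\ge3),
\]
where the middle inequality holds because \((L-2)/L\ge 1/3\). As a direct cross-check of the smallest cases, \(L=3\) gives \(1/3\) and \(L=4\) gives \(1/4\).

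The only step needing some care is justifying the series for \(g\) together with the nonnegativity of its coefficients, which is what makes the one-term truncation a valid lower bound. An alternative route avoids series altogether: observe that \(g(0)=g'(0)=0\) and \(g''(t)=\frac{1}{\ln 2}\frac{1}{1-t^2}\ge\frac{1}{\ln2}\), and integrate twice to get \(g(t)\ge t^2/(2\ln 2)\) directly. Either way, no numerical verification is required for this lemma.
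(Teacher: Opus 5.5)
Your reduction is sound. With $t=1-2q$ the claim is exactly $\log_2(1+t^L)\le g(t)$ on $0\le t\le 1-2/L$. For $L\ge3$, the bounds $\ln(1+y)\le y$ and $g(t)\ge t^2/(2\ln2)$ reduce it to $(1-2/L)^{L-2}\le\frac12$, which is true.

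\textbf{The gap.} Your justification of that endpoint bound rests on a false inequality. $\mathrm{e}^{-2/3}<\frac12$ is equivalent to $\frac23>\ln2$, but $\ln2\approx0.693$, so in fact $\mathrm{e}^{-2/3}\approx0.513>\frac12$. Your chain $(1-2/L)^{L-2}\le \mathrm{e}^{-2(L-2)/L}\le \mathrm{e}^{-2/3}<\frac12$ sends every $L$ through that false last step, so as written it proves the endpoint bound for no $L$ at all.

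The repair is one line. For $L\ge4$ you have $2(L-2)/L\ge1$, hence $(1-2/L)^{L-2}\le \mathrm{e}^{-1}<\frac12$. For $L=3$ the value is $\frac13$ directly, so your ``cross-check'' at $L=3$ must be part of the proof, not a side remark.

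\textbf{Comparison with the paper.} With this fix your proof is complete, and it takes a genuinely different route. The paper works with $J_L(z)=1-\h(\frac{1-z}{2})-\log_2(1+z^L)$ directly. It proves that $R_L(z)=Lz^{L-1}/\bigl((1+z^L)\operatorname{artanh}z\bigr)$ is increasing on $(0,1-2/L]$. Hence $J_L'$ changes sign at most once, from positive to negative, and $J_L$ has no interior minimum. It then checks the two endpoints. The nontrivial one, $q=1/L$, is handled by integer comparisons for $L=3,4$. For $L\ge5$ it uses $(\ln2)\h(1/L)\le\ln(\mathrm{e}L)/L$, $(1-2/L)^L\le\mathrm{e}^{-2}$, and rational logarithm bounds.

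You replace both the shape analysis and the endpoint evaluation with the Pinsker-type bound $1-\h(q)\ge(1-2q)^2/(2\ln2)$ combined with $\ln(1+y)\le y$. The paper itself uses this same device in the proof of Theorem~\ref{thm:MN-gv-jp}. Your route is shorter, uniform in $L$, and needs nothing numerical beyond $\mathrm{e}>2$. The paper's route gives extra structural information (unimodality of $J_L$ and the exact margin at $q=1/L$), but the lemma as stated does not need it.
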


\begin{proof}
Set \(z:=1-2q\) and
\[
J_L(z):=1-\h\!\left(\frac{1-z}{2}\right)-\log_2(1+z^L),
\qquad
0\le z\le 1-\frac2L.
\]
For \(L=2\) this interval consists only of \(z=0\), where \(J_2(0)=0\).
Hence assume \(L\ge3\). Then
\[
(\ln2)J_L'(z)
=
\operatorname{artanh}z-\frac{Lz^{L-1}}{1+z^L}.
\]
Define
\[
R_L(z):=
\frac{Lz^{L-1}}{(1+z^L)\operatorname{artanh}z}.
\]
To locate the minima of \(J_L\), first show that \(R_L\) is increasing on
\(0<z\le1-2/L\). Differentiating its logarithm gives
\[
\frac{L-1}{z}
-\frac{Lz^{L-1}}{1+z^L}
-\frac{1}{(1-z^2)\operatorname{artanh}z}.
\]
The estimates
\[
\operatorname{artanh}z\ge z,
\qquad
\frac{z^{L-1}}{1+z^L}\le\frac12,
\qquad
\frac{1}{1-z^2}\le \frac{L^2}{4(L-1)}
\]
hold on this interval. The middle inequality follows from
\(z^{L-1}(2-z)\le1\) on \(0\le z\le1\), and the last one from
\(z\le1-2/L\). Together with \(1/\operatorname{artanh}z\le1/z\), they give
the lower bound
\[
\frac{3L^2-8L+4}{4(L-1)z}-\frac{L}{2},
\]
whose minimum over \(0<z\le1-2/L\) is \(L^2/(4(L-1))>0\).
Thus \(R_L\) is increasing. Since \(R_L(z)\to0\) as \(z\downarrow0\),
the sign of \(J_L'\) is positive until \(R_L\) reaches \(1\), and negative
after that point if such a point exists. Therefore \(J_L\) has no interior
minimum, so its minimum is attained at an endpoint.

The endpoint \(z=0\) gives \(J_L(0)=0\). At the other endpoint
\(z=1-2/L\), equivalently \(q=1/L\), the values of \((\ln2)J_L\) for
\(L=2,3,4\), in that order, are
\[
0,
\qquad
\frac53\ln2-\ln\frac{28}{9},
\qquad
3\ln2+\frac34\ln3-\ln17,
\]
and the last two are strictly positive.  Indeed, after exponentiating
and clearing the fractional powers, the two claims are respectively
equivalent to
\[
2^5 9^3>28^3,
\qquad
8^4 3^3>17^4,
\]
which reduce to the integer comparisons \(23328>21952\) and
\(110592>83521\). For \(L\ge5\), writing
\(H(q):=(\ln2)\h(q)\), the bounds
\[
H(1/L)\le \frac1L\ln(\mathrm{e}L),
\qquad
\left(1-\frac2L\right)^L\le \mathrm{e}^{-2}
\]
give
\[
H(1/L)+\ln(1+(1-2/L)^L)
\le
\frac{\ln(\mathrm{e}L)}{L}+\ln(1+\mathrm{e}^{-2})
<\ln2.
\]
The last expression is decreasing for \(L\ge5\), because
\[
\frac{d}{dL}\frac{\ln(\mathrm{e}L)}{L}
=-\frac{\ln L}{L^2}<0,
\]
and at \(L=5\) it is strictly below \(\ln2\).  One completely rational
verification is
\[
\frac{\ln(5\mathrm e)}5+\ln(1+\mathrm e^{-2})
<\frac{21}{40}+\frac5{36}
=\frac{239}{360}<\frac23<\ln2.
\]
Here \(\ln5<13/8\) follows by retaining the terms through degree five
in the positive series for \(\exp(13/8)\), \(\mathrm e^{-2}<5/36\)
follows from the terms through degree five in the series for
\(\mathrm e^2\), and \(\ln2>2/3\) follows from the first term of the
positive \(\operatorname{artanh}(1/3)\) series. This proves
\eqref{eq:scalar-half-ineq-jp}.
\end{proof}

\begin{lemma}[HA-side analytic inputs]\label{lem:HA-inputs-jp}
For fixed \((j_Z,k)\)-regular outer degrees with \(4\le j_Z<k/2\) and even
\(k\), set
\[
\delta_o:=2\mathrm{e}\,\mathrm{e}^{-(k+j_Z)}.
\]
Then
\begin{equation*}
W_o(\tau)<0 \quad (0<\tau<\delta_o),
\qquad
\sum_{1\le s\le \delta_o n} N_o(s)=o(1).
\end{equation*}
In addition, with \(\alpha_Z=j_Z/k\),
\begin{equation*}
\h(\tau)-\alpha_Z
+(1+\alpha_Z)\log_2\!\left(1+(1-2\tau)^k\right)-1<0
\qquad (\delta_o\le\tau\le1-\delta_o).
\end{equation*}
\end{lemma}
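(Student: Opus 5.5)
The lemma has three assertions, and I would prove them in the order stated, using only the coefficient-extraction bound \eqref{eq:Wo-jp}, the trial-point bound \eqref{eq:Wo-trial-jp}, the binomial bounds \eqref{eq:binomial-uniform-jp}, and Lemma~\ref{lem:half-entropy-jp}.

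\emph{Negativity of \(W_o\) on \((0,\delta_o)\).} The trial point \(x=\tau/(1-\tau)\) is too weak for tiny \(\tau\), so I would use a Gallager-type choice of \(x\) in \eqref{eq:Wo-jp}. The one-row generating function satisfies \(f_+(x,k)=\sum_i\binom{k}{2i}x^{2i}\le\cosh(kx)\le \mathrm e^{k^2x^2/2}\). Work in nats and take \(x^2=2\tau/k\). Then the term \(\alpha_Z k^2x^2/2\) equals \(j_Z\tau\), and \(-\alpha_Z k\tau\ln x=(j_Z/2)\tau\ln(k/(2\tau))\). Using \(H(\tau)\ge\tau\ln(1/\tau)\), the exponent is at most
\[
\tau\Bigl[j_Z+\tfrac{j_Z}{2}\ln\tfrac k2-\bigl(\tfrac{j_Z}{2}-1\bigr)\ln\tfrac1\tau\Bigr].
\]
For \(j_Z\ge4\) the coefficient \(j_Z/2-1\) of \(\ln(1/\tau)\) is positive, so the bracket is negative as soon as
\[
\ln(1/\tau)>\frac{2j_Z+j_Z\ln(k/2)}{j_Z-2}.
\]
Since \(\ln(1/\delta_o)=k+j_Z-1-\ln2\), it remains to check the elementary inequality \(k+j_Z-1-\ln2>(2j_Z+j_Z\ln(k/2))/(j_Z-2)\) for \(4\le j_Z<k/2\). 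Its right side is largest at \(j_Z=4\), where it equals \(4+2\ln(k/2)\), and \(k+3-\ln2\) dominates this for \(k\ge10\).

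\emph{The sum \(\sum_{1\le s\le\delta_o n}N_o(s)=o(1)\).} I would redo the same choice of \(x\) at finite length, with no limsup. Apply \([x^{j_Zs}]F\le F(x)x^{-j_Zs}\) with \(x^2=2s/(kn)\), and bound the binomials by \(\binom ns\le(\mathrm e n/s)^s\) and \(\binom{nj_Z}{j_Zs}\ge(n/s)^{j_Zs}\). This yields a bound of the form
\[
N_o(s)\le\bigl(C_{k,j_Z}\,s/n\bigr)^{s(j_Z/2-1)}
\]
with an explicit constant \(C_{k,j_Z}\). I would then check that \(C_{k,j_Z}\delta_o<1\); this is the same numerical slack as in the first step. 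Given that, the terms decay geometrically in \(s\) on \(1\le s\le\delta_o n\), so the sum is dominated by its \(s=1\) term, which is \(O(n^{-(j_Z/2-1)})=O(n^{-1})\).

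\emph{The third inequality.} Write \(z=1-2\tau\) and \(L(\tau)=\log_2(1+z^k)\), and let \(g(\tau)\) denote the left side. Because \(k\) is even, \(g(\tau)=g(1-\tau)\), so it suffices to treat \(\delta_o\le\tau\le1/2\). On \([1/k,1/2]\), Lemma~\ref{lem:half-entropy-jp} with \(L=k\) gives \(\h(\tau)+L(\tau)-1\le0\). Hence
\[
g(\tau)\le\alpha_Z\bigl(L(\tau)-1\bigr)<0,
\]
because \(z^k<1\) forces \(L(\tau)<1\). On \([\delta_o,1/k]\) I would rewrite
\[
g(\tau)=\h(\tau)-(1+\alpha_Z)\bigl(1-L(\tau)\bigr).
\]
By concavity of \(\log_2\), \(1-L(\tau)\ge(1-z^k)/(2\ln2)\). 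Together with \(1-z^k\ge1-\mathrm e^{-2k\tau}\), this gives \(1-L\approx k\tau/\ln2\) to first order. Against this, \(\h(\tau)\le\tau\log_2(\mathrm e/\tau)\).

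This last subinterval is the main obstacle. At \(\tau=\delta_o\), the entropy term is about \(\tau(k+j_Z-\ln2)/\ln2\) while the rate term is about \(\tau(k+j_Z)/\ln2\), so the margin is only the constant \(\ln2\). This is exactly where the definition of \(\delta_o\) is tight, so second-order corrections in \(1-z^k\) must be controlled explicitly. I would try first to show that \(\tau\mapsto g(\tau)/\tau\) is monotone or unimodal on \([\delta_o,1/k]\), for instance by using the concavity of \(\h\) against the convexity of \(\tau\mapsto-\log_2(1+(1-2\tau)^k)\) on this range. That would reduce the claim to the two endpoints. At \(\tau=\delta_o\) I would use a second-order expansion with explicit remainders; at \(\tau=1/k\), Lemma~\ref{lem:half-entropy-jp} already applies. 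If monotonicity fails, the fallback is a rational subdivision of \([\delta_o,1/k]\) with explicit logarithm remainders, in the same style as the verifier-based arguments used elsewhere in the paper.
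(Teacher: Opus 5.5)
\textbf{Verdict.} Your first two assertions are correct, by a route different from the paper's. The third assertion has a genuine gap on the subinterval $\delta_o\le\tau\le 1/k$.

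\textbf{First two assertions.} The paper bounds $N_o(s)$ by socket pairing: $N_o(s)\le\binom ns(j_Zs-1)!!\,m_Z^{-j_Zs/2}\le\sqrt2\,\lambda_o^s$, with $\lambda_o=\mathrm e^{1-j_Z/2}k^{j_Z/2}\delta_o^{j_Z/2-1}$. You instead extract coefficients at $x=\sqrt{2s/(kn)}$ using $f_+(x,k)\le\mathrm e^{k^2x^2/2}$. This gives $N_o(s)\le[\mathrm e^{1+j_Z}(k/2)^{j_Z/2}(s/n)^{j_Z/2-1}]^s$, which has the same shape with a cruder constant. Because $\delta_o$ is exponentially small in $k+j_Z$, the cruder constant is harmless here. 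It would fail, however, at the split scale $\tau_0=5/(2k^2)$ where the paper later reuses the pairing bound: at $j_Z=4$ your base there is $5\mathrm e^5/8>1$, while the pairing base is $5/(2\mathrm e)<1$. One small repair is also needed. The condition $C_{k,j_Z}\delta_o<1$ by itself only gives $N_o(s)\le\lambda^s$, hence an $O(1)$ sum. To get $o(1)$ you need one of two extra steps:
\begin{itemize}
\item the ratio bound $a_{s+1}/a_s\le\mathrm e^{\beta}C'((s+1)/n)^{\beta}<1$ for $a_s=[C'(s/n)^\beta]^s$, $\beta=j_Z/2-1$, which does hold here; or
\item the paper's split into a fixed initial segment, whose terms are each $O(n^{-s(j_Z/2-1)})$, plus a geometric tail.
\end{itemize}

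\textbf{The gap in the third assertion.} Your handling of $[1/k,1/2]$ via Lemma~\ref{lem:half-entropy-jp}, and of the symmetry $g(\tau)=g(1-\tau)$, matches the paper. On $\delta_o\le\tau\le1/k$, however, you give only a plan, and neither branch closes.
\begin{itemize}
\item The monotonicity mechanism is not available. The function $\tau\mapsto\log_2(1+(1-2\tau)^k)$ is convex: with $v=1-2\tau$, the second derivative of $\ln(1+v^k)$ is $4kv^{k-2}(k-1-v^k)/(1+v^k)^2\ge0$. So its negative is concave, not convex. In any case $g$ is concave plus convex, and nothing about the monotonicity of $g/\tau$ follows from that.
\item The rational-subdivision fallback cannot prove this lemma. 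The statement covers all infinitely many pairs $(j_Z,k)$ with $4\le j_Z<k/2$ and $k$ even, not a finite window.
\item Your diagnosis of tightness is also off. At $\tau=\delta_o$ the first-order margin is $\delta_o\ln2$ in nats. The second-order error is $O(k^2\delta_o^2)$, smaller by a factor of order $k^2\delta_o$, which is exponentially small. So crude second-order bounds suffice.
\end{itemize}

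\textbf{The missing step.} Factor out $x:=k\tau\in[x_0,1]$, where $x_0:=k\delta_o=2\mathrm e k\,\mathrm e^{-(k+j_Z)}$. Write $H:=(\ln2)\h$ and use three bounds:
\begin{itemize}
\item $H(\tau)\le\tau\ln(\mathrm e/\tau)$;
\item $(1-2\tau)^k\le\mathrm e^{-2x}$;
\item $\ln(1+\mathrm e^{-2x})=\ln2-x+\ln\cosh x\le\ln2-x+x^2/2$.
\end{itemize}
Together they give $(\ln2)g(\tau)\le xB(x)$, where $B(x)=\frac1k\ln\frac{\mathrm ek}{x}-(1+\alpha_Z)+\frac{1+\alpha_Z}{2}x$. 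Since $B''(x)=1/(kx^2)>0$, $B$ is convex and only the endpoints matter:
\begin{itemize}
\item $B(x_0)=-\frac{\ln2}{k}+\frac{1+\alpha_Z}{2}x_0<0$, because $x_0$ is exponentially small;
\item $B(1)=\frac{\ln(\mathrm ek)}{k}-\frac{1+\alpha_Z}{2}<0$ for $k\ge10$.
\end{itemize}
This is your ``$g/\tau$ unimodal'' idea, applied to the upper bound $kB(k\tau)$ rather than to $g$ itself.

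Your own chord bound $1-L(\tau)\ge(1-z^k)/(2\ln2)$ is too lossy for this. Combined with $1-\mathrm e^{-2x}\ge2x-2x^2$, it yields the bracket $\frac1k\ln\frac{\mathrm ek}{x}-(1+\alpha_Z)(1-x)$, which is positive at $x=1$. The $\ln\cosh$ identity is what keeps the convex bracket negative all the way up to $x=1$.
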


\begin{proof}
The stated value satisfies \(0<\delta_o<1/k\). It also gives
\[
\lambda_o:=
\mathrm{e}^{1-j_Z/2}k^{j_Z/2}\delta_o^{j_Z/2-1}<1.
\]
Indeed,
\[
\ln\lambda_o
=1-\frac{j_Z}{2}+\frac{j_Z}{2}\ln k
+\left(\frac{j_Z}{2}-1\right)
 \bigl(\ln(2\mathrm{e})-k-j_Z\bigr).
\]
Its two partial derivatives are
\[
\frac{\partial}{\partial k}\ln\lambda_o
=\frac{j_Z}{2k}-\left(\frac{j_Z}{2}-1\right),
\qquad
\frac{\partial}{\partial j_Z}\ln\lambda_o
=\frac{1+\ln k+\ln(2\mathrm e)-k-2j_Z}{2}.
\]
The first is at most \(2/10-1<0\) on \(k\ge10\), \(j_Z\ge4\).
For the second, \(\ln k<k-1\), \(\ln(2\mathrm e)<2\), and \(j_Z\ge4\)
give a strict negative upper bound.  Hence \(\ln\lambda_o\) decreases
in both variables.  At the corner,
\[
\ln\lambda_o\big|_{(k,j_Z)=(10,4)}
=2\ln10+\ln(2\mathrm e)-15<6+2-15=-7,
\]
where \(\ln10<3\) and \(\ln(2\mathrm e)<2\).  Thus \(\lambda_o<1\).
For \(1\le s\le\delta_o n\), if \(j_Zs\) is odd then \(N_o(s)=0\). If \(j_Zs\) is even, pair the \(j_Zs\) active sockets. Since \(s\le n/k\), before the second endpoint of any pair is exposed, at most \(j_Zs-1\le m_Z-1\) active row-side sockets have already been exposed. Hence at least \(km_Z-(m_Z-1)\) row-side sockets remain.

Among those remaining sockets, at most \(k-1\) can complete the pair inside the same row. The conditional probability for a prescribed pair to land in one row is therefore at most
\[
\frac{k-1}{km_Z-(m_Z-1)}\le \frac1{m_Z}.
\]

Summing over supports and pairings gives
\[
N_o(s)\le \binom ns (j_Zs-1)!!\,m_Z^{-j_Zs/2}
\]
and the inequalities \(\binom ns\le(\mathrm{e}n/s)^s\) and
\((j_Zs-1)!!\le\sqrt2(j_Zs/\mathrm{e})^{j_Zs/2}\) yield
\[
N_o(s)\le
\sqrt2\left[
\mathrm{e}^{1-j_Z/2}k^{j_Z/2}\left(\frac{s}{n}\right)^{j_Z/2-1}
\right]^s
\le \sqrt2\,\lambda_o^s.
\]
For every fixed \(s\), the bound
\(\binom ns (j_Zs-1)!!\,m_Z^{-j_Zs/2}\), with \(m_Z=j_Zn/k\), has \(n\)-dependence
\(n^s n^{-j_Zs/2}\) up to an \(s\)-dependent constant. Hence
\[
N_o(s)=O\!\left(n^{-s(j_Z/2-1)}\right).
\]
Thus \(W_o(\tau)\le \tau\log_2\lambda_o<0\) for \(0<\tau<\delta_o\). Given \(\varepsilon>0\), choose \(U\) with \(\sum_{s>U}\sqrt2\lambda_o^s<\varepsilon/2\), and then take \(n\) large enough that \(\sum_{s=1}^{U}N_o(s)<\varepsilon/2\). Therefore the small-input sum satisfies
\[
\sum_{1\le s\le\delta_o n}N_o(s)=o(1).
\]
It remains to prove the displayed compact-interval inequality. By evenness
of \(k\) and symmetry of \(\h\), it suffices to consider
\(\delta_o\le\tau\le1/2\). Use natural logarithms, write
\(H(\tau):=(\ln2)\h(\tau)\), put \(x:=k\tau\),
\(x_0:=k\delta_o=2\mathrm{e}k\mathrm{e}^{-(k+j_Z)}\), and set
\(\alpha:=j_Z/k\). When \(x\le1\), the inequalities
\[
H(\tau)\le\tau\ln\frac{\mathrm{e}}{\tau},
\qquad
(1-2\tau)^k\le\mathrm{e}^{-2x},
\qquad
\ln\cosh x\le\frac{x^2}{2}
\]
bound \(\ln2\) times the displayed expression by
\[
x\left[
\frac1k\ln\frac{\mathrm{e}k}{x}
-(1+\alpha)+\frac{1+\alpha}{2}x
\right]
=:xB(x).
\]
The function \(B\) is convex on \([x_0,1]\), so its maximum is attained
at an endpoint. At the lower endpoint,
\[
B(x_0)=-\frac{\ln2}{k}+\frac{1+\alpha}{2}x_0<0.
\]
For the last inequality, the positive term is largest under the stated
degree assumptions at \((k,j_Z)=(10,4)\).  Indeed,
\[
k(1+\alpha)x_0=2\mathrm e\,k(k+j_Z)\mathrm e^{-(k+j_Z)},
\]
and the logarithmic partial derivatives of the expression on the right
(apart from its constant factor) are
\(1/k+1/(k+j_Z)-1<0\) and \(1/(k+j_Z)-1<0\).
At the corner its value is \(280\mathrm e^{-13}<1<2\ln2\), proving
\(B(x_0)<0\) without a decimal approximation. At the upper endpoint,
\[
B(1)
=\frac{\ln(\mathrm{e}k)}{k}-\frac{1+\alpha}{2}
\le \frac{\ln k-1}{k}-\frac12<0
\]
for \(k\ge10\), where \(j_Z\ge4\) was used. Hence the required
expression is strictly negative for \(x_0\le x\le1\).

When \(x\ge1\), Lemma~\ref{lem:half-entropy-jp}, used with \(L=k\), gives
\[
\h(\tau)+\log_2\!\left(1+(1-2\tau)^k\right)\le1.
\]
Because \(\alpha>0\) and
\(\log_2(1+(1-2\tau)^k)<1\) for \(\tau>0\), the displayed expression in
the lemma is again strictly negative. Symmetry completes the proof.
\end{proof}

Write \(\omega=l/n\), \(\tau=s/n\), and \(T=(1-2\tau)^k\). Apply the coefficient-extraction bound to \(M_k(s,l)\) at the trial point \(z=\tau/(1-\tau)\), interpreted by limits at \(\tau=0,1\).

Combine this coefficient bound with the trial upper envelope
\(W_o^{\mathrm{tr}}\) from \eqref{eq:Wo-trial-jp}. The binomial bounds
\eqref{eq:binomial-uniform-jp} give, uniformly for \(1\le s\le n-1\)
and \(0\le l\le n\),
\[
N_o(s)M_k(s,l)\le(nj_Z+1)(nk+1)
2^{n[\h(l/n)+F_Z(s/n,l/n)]},
\]
where \(F_Z\) is defined below. Thus the discarded logarithmic error is
uniformly \(O(\log n)\), and
\begin{align}
\limsup_{\substack{n\to\infty\\l/n\to\omega}}\frac1n\log_2 N_Z^{\mathrm{ub}}(l)
&\le
W_Z^{\mathrm{ub}}(\omega),\\
W_Z^{\mathrm{ub}}(\omega)
&:=
\h(\omega)+\max_{\omega/k\le \tau\le 1-\omega/k} F_Z(\tau,\omega).
\end{align}
Here
\begin{equation}\label{eq:FZ-jp}
F_Z(\tau,\omega)
:=
W_o^{\mathrm{tr}}(\tau)
+
\omega\log_2\frac{1-T}{2}
+
(1-\omega)\log_2\frac{1+T}{2}.
\end{equation}
Equivalently, \(F_Z\) is the explicit finite-length upper exponent obtained
by multiplying the trial bound \eqref{eq:Wo-trial-jp} for \(N_o(s)\) with the coefficient
bound for \(M_k(s,l)\), before the polynomial Stirling factors are discarded.
Moreover,
\begin{equation*}
\omega\log_2\frac{1-T}{2}
+
(1-\omega)\log_2\frac{1+T}{2}
=
\omega\log_2 p+(1-\omega)\log_2(1-p)
\le -\h(\omega),
\qquad
p:=\frac{1-T}{2},
\end{equation*}
so that
\begin{equation*}
F_Z(\tau,\omega)\le W_o^{\mathrm{tr}}(\tau)-\h(\omega)
\end{equation*}
holds.

The inequality is the Bernoulli log-likelihood bound with the continuous endpoint convention:
\(\omega\log_2 p+(1-\omega)\log_2(1-p)\) is maximized over \(0\le p\le1\)
at \(p=\omega\), with value \(-\h(\omega)\).

\begin{remark*}[Complement symmetry of the outer code for even \(k\)]
When \(k\) is even, the sum of every binary row of \(A_Z\) is the
socket degree \(k\) modulo two, and therefore
\begin{equation*}
A_Z\ones_{[n]}=0
\end{equation*}
holds.
Hence \(\ones_{[n]}\in\Ker A_Z\), and the map \(\bu\mapsto \bu+\ones_{[n]}\) gives a bijection
between words of weight \(s\) and words of weight \(n-s\) in \(\Ker A_Z\).
In particular,
\begin{equation*}
N_o(s)=N_o(n-s),
\qquad
W_o(\tau)=W_o(1-\tau)
\end{equation*}
holds.
\end{remark*}

\subsection{Proof of Theorem~\ref{thm:HA-lin-jp}}

\begin{proof}
The first-moment bound \eqref{eq:HA-enum-jp} and the exponent \(F_Z\) in \eqref{eq:FZ-jp} control the low visible-weight range.
Fix the value \(\delta_o>0\) supplied by Lemma~\ref{lem:HA-inputs-jp}.
On the compact interval
\(\delta_o\le\tau\le1-\delta_o\), the exponent \(F_Z\) in \eqref{eq:FZ-jp} is continuous because it is built from the explicit trial upper envelope \(W_o^{\mathrm{tr}}\).
At \(\omega=0\), its explicit form gives
\[
F_Z(\tau,0)
=
\h(\tau)-\alpha_Z
+(1+\alpha_Z)\log_2\!\left(1+(1-2\tau)^k\right)-1.
\]
The compact-interval estimate in Lemma~\ref{lem:HA-inputs-jp} says directly
that this expression is strictly negative throughout
\(\delta_o\le\tau\le1-\delta_o\). Hence there is a margin
\(\gamma_o>0\) such that \(F_Z(\tau,0)\le-2\gamma_o\) throughout
\(\delta_o\le\tau\le1-\delta_o\). Since \(F_Z(\tau,\omega)+\h(\omega)\) is uniformly continuous on
\([\delta_o,1-\delta_o]\times[0,1/2]\), choose
\(\delta_Z^{\mathrm{lin}}>0\) such that
\begin{equation*}
F_Z(\tau,\omega)+\h(\omega)\le-\gamma_o
\qquad
(\delta_o\le\tau\le1-\delta_o,\ 0\le\omega\le\delta_Z^{\mathrm{lin}}).
\end{equation*}
Fix any \(0<\delta<\delta_Z^{\mathrm{lin}}\).

After summing over the output weight, \eqref{eq:HA-enum-jp} gives
\begin{equation*}
\sum_{1\le l\le \delta n}\EE[A_{\CZ}(l)]
\le
\sum_{1\le l\le \delta n}
\sum_{s=\lceil l/k\rceil}^{n-\lceil l/k\rceil}
N_o(s)M_k(s,l).
\end{equation*}
Split the input-weight sum into the low-input ranges
\[
1\le s\le \delta_o n,
\qquad
n-\delta_o n\le s\le n-1,
\]
and the complementary range. For \(1\le s\le\delta_o n\), the identity
\(\sum_{l=0}^n M_k(s,l)=1\) and Lemma~\ref{lem:HA-inputs-jp} give a total contribution at most
\(\sum_{1\le s\le\delta_o n}N_o(s)=o(1)\).

For \(n-\delta_o n\le s\le n-1\), set \(s':=n-s\). The even-\(k\) complement symmetry gives \(N_o(s)=N_o(s')\), and the normalization \(\sum_{l=0}^n M_k(s,l)=1\) holds for every \(s\). Hence, after summing over \(l\), this high-input range is bounded by
\(\sum_{1\le s'\le\delta_o n}N_o(s')=o(1)\). Thus the two edge ranges contribute \(o(1)\) to
\(\sum_{1\le l\le \delta n}\EE[A_{\CZ}(l)]\).

For the complementary range, the coefficient-extraction estimate used in deriving \eqref{eq:FZ-jp}, together with Stirling's formula with endpoint-uniform \(O(\log n)\) errors, gives a constant \(C>0\) such that every summand with \(\delta_o\le\tau=s/n\le1-\delta_o\) and \(1\le l\le\delta n\) satisfies
\[
N_o(s)M_k(s,l)
\le
n^C2^{n(F_Z(\tau,\omega)+\h(\omega)+o_n(1))},
\qquad
\omega:=l/n,
\]
where \(o_n(1)\) is uniform for \(\delta_o\le\tau\le1-\delta_o\) and \(0\le\omega\le\delta\). Since \(F_Z(\tau,\omega)+\h(\omega)\le-\gamma_o\), take \(n\) large so that \(o_n(1)\le\gamma_o/4\) uniformly and \((n+1)n^C\le2^{\gamma_o n/4}\).

Then summing over the admissible \(s\) gives
\begin{equation*}
\sum_{\substack{\lceil l/k\rceil\le s\le n-\lceil l/k\rceil\\ \delta_o n< s< n-\delta_o n}}
N_o(s)\,M_k(s,l)
\le
2^{-\gamma_o n/2}
\qquad
(1\le l\le \delta n)
\end{equation*}
for all sufficiently large \(n\).

The two edge ranges and the complementary range bound give
\begin{equation*}
\sum_{1\le l\le \delta n}\EE[A_{\CZ}(l)]
\le
o(1)+
\sum_{1\le l\le \delta n}2^{-\gamma_o n/2}
=o(1).
\end{equation*}

Set
\begin{equation*}
X_n(\delta):=\sum_{1\le l\le \delta n}A_{\CZ}(l).
\end{equation*}
The event \(d(\CZ)\le \delta n\) is equivalent to \(X_n(\delta)\ge 1\).
Thus Markov's inequality gives
\begin{equation*}
\PP[d(\CZ)\le \delta n]
=
\PP[X_n(\delta)\ge 1]
\le
\EE[X_n(\delta)]
=o(1).
\end{equation*}
\end{proof}

\engappendixsection{MN-Side Enumerators and Linear-Distance Proof}{app:MN-proofs-jp}

The ensemble-level results in this appendix use the homogeneous balanced specialization \(k_Z=k_\Delta=k\) and \(j_X+j_Z=k\); scalar auxiliary lemmas state their own parameters.
Set \(j_\Delta:=j_X-j_Z\) and
\[
\alpha_Z:=\frac{j_Z}{k},\qquad
\alpha_\Delta:=\frac{j_\Delta}{k},\qquad
\alpha_X:=\frac{j_X}{k}=\alpha_Z+\alpha_\Delta=1-\alpha_Z.
\]
Then \(m_Z=\alpha_Z n\), \(m_\Delta=\alpha_\Delta n\), and \(m_X=m_Z+m_\Delta=\alpha_X n\).
For \(\bx=(\bp_Z,\bp_\Delta)\in \F_2^{m_Z}\times \F_2^{m_\Delta}\), the stacked equation is
\[
A_X^T \bx=A_Z^T \bp_Z + A_\Delta^T \bp_\Delta,
\]
where \(\bp_Z\) and \(\bp_\Delta\) are the auxiliary variables corresponding to the two row blocks of \(A_X=[A_Z;A_\Delta]\).

For integers \(t_1,t_\Delta,w\), define
\begin{equation*}
\NX(t_1,t_\Delta,w)
:=
\left|
\left\{
(\bp_Z,\bp_\Delta,\bv):
\begin{array}{l}
\wt(\bp_Z)=t_1,\ \wt(\bp_\Delta)=t_\Delta,\ \wt(\bv)=w,\\
A_Z^T \bp_Z+A_\Delta^T \bp_\Delta+B^T \bv=0
\end{array}
\right\}
\right|
\end{equation*}
The refined enumerator counts each weight-\(w\) candidate \(\bv\) together with a witness \((\bp_Z,\bp_\Delta)\). A single \(\bv\in\CX\) can have several witnesses, so summing over \((t_1,t_\Delta)\) gives the upper bound
\[
A_{\CX}(w)\le \sum_{t_1=0}^{m_Z}\sum_{t_\Delta=0}^{m_\Delta}\NX(t_1,t_\Delta,w).
\]

Lemma~\ref{lem:exact-jp} treats the three socket pools separately by edge type, using the configuration model of~\cite[Secs.~3.3--3.4]{RU08}. Let \(u\), \(v\), and \(r\) mark active sockets in the \(A_Z\)-, \(A_\Delta\)-, and \(B\)-pools, respectively, and define the one-column even-parity generating function
\begin{equation}\label{eq:onecol-gen-jp}
g_{j_Z,j_\Delta,k}(u,v,r)
:=
\frac{(1+u)^{j_Z}(1+v)^{j_\Delta}(1+r)^k+(1-u)^{j_Z}(1-v)^{j_\Delta}(1-r)^k}{2}.
\end{equation}

\begin{lemma}[Exact stacked configuration formula]\label{lem:exact-jp}
For independent configuration-model draws of
\(A_Z\), \(A_\Delta\), and \(B\) from Definition~\ref{def:family-jp}, specialized to
\(k_Z=k_\Delta=k\), for integers
\(0\le t_1\le m_Z\), \(0\le t_\Delta\le m_\Delta\), and \(0\le w\le n\),
\begin{equation}\label{eq:exact-stacked-jp}
\EE[\NX(t_1,t_\Delta,w)]
=
\binom{m_Z}{t_1}\binom{m_\Delta}{t_\Delta}\binom{n}{w}
\frac{[u^{kt_1}v^{kt_\Delta}r^{kw}]\,g_{j_Z,j_\Delta,k}(u,v,r)^n}
{\binom{n j_Z}{kt_1}\binom{n j_\Delta}{kt_\Delta}\binom{nk}{kw}}
\end{equation}
holds.
\end{lemma}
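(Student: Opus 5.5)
The plan is to compute the expectation by linearity, as a sum over prescribed supports of the probability that a fixed triple of supports satisfies the stacked equation. Fix subsets \(P_Z\subseteq[m_Z]\), \(P_\Delta\subseteq[m_\Delta]\), and \(V\subseteq[n]\) with \(|P_Z|=t_1\), \(|P_\Delta|=t_\Delta\), \(|V|=w\). There are \(\binom{m_Z}{t_1}\binom{m_\Delta}{t_\Delta}\binom nw\) such triples, and by exchangeability of the configuration model each has the same probability. It therefore suffices to show that
\[
\PP\bigl[A_Z^T\ones_{P_Z}+A_\Delta^T\ones_{P_\Delta}+B^T\ones_V=0\bigr]
\]
equals the displayed ratio of the coefficient to the three binomials.

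Each block is a uniformly random perfect matching between its row-side and column-side socket pools, and the three matchings are independent. In the \(A_Z\)-pool, the \(kt_1\) row-side sockets attached to \(P_Z\) are matched to a uniformly random \(kt_1\)-subset \(S_Z\) of the \(nj_Z\) column-side sockets of \(A_Z\). Likewise \(P_\Delta\) yields a uniform \(kt_\Delta\)-subset \(S_\Delta\) of the \(nj_\Delta\) column-side sockets of \(A_\Delta\). For \(B\), the \(kw\) sockets of the rows in \(V\) (rows of \(B\) are the columns of \(B^T\)) map to a uniform \(kw\)-subset \(S_B\) of the \(nk\) column-side sockets. This is the same observation already used for \(N_o(s)\) and in Lemma~\ref{lem:HA-transition-jp}. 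The multigraph convention, with entries taken modulo two, gives the key identity: coordinate \(i\in[n]\) of the left-hand side equals the parity of the number of sockets of column \(i\) lying in \(S_Z\cup S_\Delta\cup S_B\). Hence the event holds exactly when every column \(i\) receives an even total number of active sockets across its \(j_Z+j_\Delta+k\) sockets.

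Next we count the triples \((S_Z,S_\Delta,S_B)\) of the prescribed sizes satisfying this per-column parity condition. Column \(i\) contributes a choice of active subsets among its \(j_Z\), \(j_\Delta\), and \(k\) sockets in the three pools, marked by \(u\), \(v\), and \(r\). The generating function of the choices with even total size is exactly \(g_{j_Z,j_\Delta,k}(u,v,r)\) of \eqref{eq:onecol-gen-jp}, since averaging \(F(u,v,r)\) and \(F(-u,-v,-r)\) extracts the even-total-degree part of \((1+u)^{j_Z}(1+v)^{j_\Delta}(1+r)^k\). Columns choose independently, so the number of admissible triples is \([u^{kt_1}v^{kt_\Delta}r^{kw}]\,g^n\). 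The three uniform subsets are independent, with \(\binom{nj_Z}{kt_1}\binom{nj_\Delta}{kt_\Delta}\binom{nk}{kw}\) equally likely outcomes, so dividing gives the probability. Multiplying by the support count yields \eqref{eq:exact-stacked-jp}.

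The main point of care is the reduction to uniform random subsets with the correct socket bookkeeping. Two things must be checked. First, the row-side degree of both \(A_Z\) and \(A_\Delta\) is \(k\) (the homogeneous specialization), which is why \(kt_1\) and \(kt_\Delta\) appear. Second, independence across the three pools holds even though they share the column index set. The parity identity for multi-edges also needs a remark: a doubled edge contributes two active sockets and so cancels modulo two, consistent with the modulo-two definition of the matrix entries. Once these are in place, the rest is routine generating-function counting.
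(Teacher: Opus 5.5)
Your proposal is correct and follows the same route as the paper. Both arguments sum over fixed supports, and both reduce each of the three independent configuration-model matchings to a uniformly random subset of column-side sockets. Both then encode the stacked equation as the per-column even-parity condition counted by \(g_{j_Z,j_\Delta,k}(u,v,r)^n\), and divide by \(\binom{nj_Z}{kt_1}\binom{nj_\Delta}{kt_\Delta}\binom{nk}{kw}\).

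Your extra remarks fill in bookkeeping that the paper's short proof leaves implicit: parallel edges cancel modulo two, the common row degree \(k\) produces \(kt_1\) and \(kt_\Delta\), and the three pools stay independent even though they share the column index.
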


\begin{proof}
Fix supports \(S_1\subset[m_Z]\), \(S_\Delta\subset[m_\Delta]\), and \(T\subset[n]\) of sizes \(t_1,t_\Delta,w\), and let \(\ones_{S_1}, \ones_{S_\Delta}, \ones_T\) denote their indicator vectors. The stacked socket-based configuration model~\cite[Secs.~3.3--3.4]{RU08} has \(kt_1\) active row-side sockets in the \(A_Z\)-pool, \(kt_\Delta\) in the \(A_\Delta\)-pool, and \(kw\) in the \(B\)-pool.

Because \(A_Z\), \(A_\Delta\), and \(B\) are sampled independently, these active row-side sockets induce independent uniformly random active subsets in the column-side socket pools of sizes \(n j_Z\), \(n j_\Delta\), and \(nk\).

For a fixed column, let \(r_Z\), \(r_\Delta\), and \(r_B\) be the active-socket counts in its \(A_Z\)-, \(A_\Delta\)-, and \(B\)-pools. The equation
\[
A_Z^T\ones_{S_1}+A_\Delta^T\ones_{S_\Delta}+B^T\ones_T=0
\]
holds if and only if \(r_Z+r_\Delta+r_B\equiv 0\pmod 2\) in every column. Hence the one-column generating function is \(g_{j_Z,j_\Delta,k}(u,v,r)\), and the number of desired active column-side subset triples is
\([u^{kt_1}v^{kt_\Delta}r^{kw}]\,g_{j_Z,j_\Delta,k}(u,v,r)^n\).
Dividing by the total number of independent subset triples and multiplying by the number of support choices yields the displayed formula for \(\EE[\NX(t_1,t_\Delta,w)]\).
\end{proof}

\begin{proposition}[Blockwise complement symmetry]\label{prop:fold-jp}
Assume \(j_Z\equiv 0\pmod 2\) and \(j_\Delta\equiv 0\pmod 2\). Then \(A_Z^T\ones_{[m_Z]}=0\) and \(A_\Delta^T\ones_{[m_\Delta]}=0\) hold. Consequently \(\NX(t_1,t_\Delta,w)=\NX(m_Z-t_1,t_\Delta,w)=\NX(t_1,m_\Delta-t_\Delta,w)\), and
\begin{equation}\label{eq:fold-bound-jp}
A_{\CX}(w)
\le
4\sum_{0\le t_1\le m_Z/2}\sum_{0\le t_\Delta\le m_\Delta/2}\NX(t_1,t_\Delta,w)
\end{equation}
also holds. The reduced region \(0\le t_1\le m_Z/2\), \(0\le t_\Delta\le m_\Delta/2\) is the folded domain.
\end{proposition}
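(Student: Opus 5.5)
The argument has three parts: the column-sum identities, the two symmetries of \(\NX\) obtained by translating witnesses, and then folding the sum over \((t_1,t_\Delta)\).

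\emph{Column-sum identities.} Each column of \(A_Z\) has socket degree \(j_Z\). Over \(\F_2\), the sum of the entries in any column of \(A_Z\) is \(j_Z \bmod 2\), because multiple edges are recorded modulo two and this does not change the parity of the column sum. Hence \(A_Z^T\ones_{[m_Z]}=j_Z\ones_{[n]}=0\) when \(j_Z\) is even. The same argument with column degree \(j_\Delta\) gives \(A_\Delta^T\ones_{[m_\Delta]}=0\) when \(j_\Delta\) is even. This is the transpose analogue of the Remark on complement symmetry for even \(k\).

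\emph{Symmetries of \(\NX\).} Consider the map \((\bp_Z,\bp_\Delta,\bv)\mapsto(\bp_Z+\ones_{[m_Z]},\bp_\Delta,\bv)\). It is an involution on \(\F_2^{m_Z}\times\F_2^{m_\Delta}\times\F_2^n\). By the first identity it preserves the equation \(A_Z^T\bp_Z+A_\Delta^T\bp_\Delta+B^T\bv=0\). It sends \(\wt(\bp_Z)=t_1\) to \(m_Z-t_1\) and leaves the other two weights unchanged. It is therefore a bijection between the triples counted by \(\NX(t_1,t_\Delta,w)\) and those counted by \(\NX(m_Z-t_1,t_\Delta,w)\). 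The same involution applied to the \(\bp_\Delta\) block gives \(\NX(t_1,t_\Delta,w)=\NX(t_1,m_\Delta-t_\Delta,w)\). Both identities hold pointwise for every realization of \(A_Z,A_\Delta,B\), and hence also in expectation.

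\emph{Folding.} Start from the witness bound \(A_{\CX}(w)\le\sum_{t_1=0}^{m_Z}\sum_{t_\Delta=0}^{m_\Delta}\NX(t_1,t_\Delta,w)\). Split the \(t_1\)-range into \(t_1\le m_Z/2\) and \(t_1>m_Z/2\). The map \(t_1\mapsto m_Z-t_1\) sends the second part injectively into the first, so
\[
\sum_{t_1=0}^{m_Z}\NX(t_1,t_\Delta,w)\le 2\sum_{0\le t_1\le m_Z/2}\NX(t_1,t_\Delta,w).
\]
The factor \(2\) only overcounts the middle index \(t_1=m_Z/2\) when \(m_Z\) is even, and this is harmless since all terms are nonnegative. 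Repeating the argument in \(t_\Delta\) gives another factor \(2\), which yields \eqref{eq:fold-bound-jp}.

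None of the steps is a real obstacle. The only point needing care is the first identity in the configuration model with multi-edges. There the binary column sum equals the number of sockets modulo two, because each edge multiplicity is reduced modulo two before summing, so the parity equals \(j_Z\) (respectively \(j_\Delta\)) modulo two.
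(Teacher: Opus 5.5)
Your proposal is correct and follows essentially the same argument as the paper: the parity-of-socket-degree identity for the column sums, complementing one witness block at a time to get the two symmetries of \(\NX\), and then folding the witness bound. The only cosmetic difference is that the paper groups the index pairs \((t_1,t_\Delta)\) into orbits of size at most \(4\) under the two involutions, while you fold one coordinate at a time. Both give the same factor \(4\).
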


\begin{proof}
The parity of every binary column sum is the parity of its prescribed
socket degree, even if parallel edges cancel.  Hence
\(A_Z^T\ones_{[m_Z]}=(j_Z\bmod2)\ones_{[n]}=0\) and
\(A_\Delta^T\ones_{[m_\Delta]}=(j_\Delta\bmod2)\ones_{[n]}=0\).
If
\[
A_Z^T \bp_Z + A_\Delta^T \bp_\Delta + B^T \bv = 0,
\]
then complementing either block witness preserves the equation:
\[
A_Z^T(\bp_Z+\ones_{[m_Z]}) + A_\Delta^T \bp_\Delta + B^T \bv = 0,
\qquad
A_Z^T \bp_Z + A_\Delta^T(\bp_\Delta+\ones_{[m_\Delta]}) + B^T \bv = 0.
\]
Complementing \(\bp_Z\) changes the witness weight to \((m_Z-t_1,t_\Delta,w)\), while complementing \(\bp_\Delta\) changes it to \((t_1,m_\Delta-t_\Delta,w)\). Hence \(\NX(t_1,t_\Delta,w)=\NX(m_Z-t_1,t_\Delta,w)=\NX(t_1,m_\Delta-t_\Delta,w)\).

The witness definition of \(\NX\) gives
\[
A_{\CX}(w)\le \sum_{0\le t_1\le m_Z}\sum_{0\le t_\Delta\le m_\Delta}\NX(t_1,t_\Delta,w),
\]

Partition \([0,m_Z]\times[0,m_\Delta]\) by the two involutions
\[
(t_1,t_\Delta)\mapsto (m_Z-t_1,t_\Delta),
\qquad
(t_1,t_\Delta)\mapsto (t_1,m_\Delta-t_\Delta).
\]
Each orbit has size at most \(4\), and the two complement equalities make \(\NX\) constant on each orbit. Taking the representative in
\(0\le t_1\le m_Z/2\), \(0\le t_\Delta\le m_\Delta/2\), the contribution of each orbit is bounded by at most four times its representative value. Thus
\[
A_{\CX}(w)
\le
4\sum_{0\le t_1\le m_Z/2}\sum_{0\le t_\Delta\le m_\Delta/2}\NX(t_1,t_\Delta,w),
\]
which is \eqref{eq:fold-bound-jp}.
\end{proof}

\subsection{Trial-Point Bound Governing the Linear-Weight Regime}

\begin{lemma}[Trial-point coefficient bound]\label{lem:trial-coeff-jp}
In Lemma~\ref{lem:exact-jp}, write
\[
a:=\frac{t_1}{m_Z},\qquad
b:=\frac{t_\Delta}{m_\Delta},\qquad
\omega:=\frac{w}{n}.
\]
Assume first that \(0<a,b,\omega<1\) and set \(s:=a/(1-a)\), \(t:=b/(1-b)\), and \(r:=\omega/(1-\omega)\). Then
\begin{equation}\label{eq:trial-coeff-bound-jp}
\EE[\NX(t_1,t_\Delta,w)]
\le
\binom{m_Z}{t_1}\binom{m_\Delta}{t_\Delta}\binom{n}{w}
\frac{g_{j_Z,j_\Delta,k}(s,t,r)^n}
{\binom{n j_Z}{kt_1}\binom{n j_\Delta}{kt_\Delta}\binom{nk}{kw}s^{kt_1}t^{kt_\Delta}r^{kw}}
\end{equation}
holds.
\end{lemma}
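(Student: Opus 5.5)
This is the standard saddle-point (Chernoff-type) coefficient bound applied to the exact formula of Lemma~\ref{lem:exact-jp}; the plan is to bound the coefficient and leave the prefactors untouched. Write \(G(u,v,r):=g_{j_Z,j_\Delta,k}(u,v,r)^n\). Expanding \(g_{j_Z,j_\Delta,k}\) as in \eqref{eq:onecol-gen-jp},
\[
g_{j_Z,j_\Delta,k}(u,v,r)=\sum_{\substack{0\le a'\le j_Z,\ 0\le b'\le j_\Delta,\ 0\le c'\le k\\ a'+b'+c'\ \mathrm{even}}}\binom{j_Z}{a'}\binom{j_\Delta}{b'}\binom{k}{c'}u^{a'}v^{b'}r^{c'}.
\]
Every monomial therefore has a nonnegative coefficient, and so does every monomial of \(G\).

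The key elementary step is the following. For any polynomial \(P\) in several variables with nonnegative coefficients, any exponent vector \((\alpha,\beta,\gamma)\), and any positive \(s,t,r\),
\[
[u^\alpha v^\beta r^\gamma]P(u,v,r)\,s^\alpha t^\beta r^\gamma\le P(s,t,r).
\]
The left side is a single term in the nonnegative sum \(P(s,t,r)=\sum c_{\alpha'\beta'\gamma'}s^{\alpha'}t^{\beta'}r^{\gamma'}\). Dividing by \(s^\alpha t^\beta r^\gamma\) gives
\[
[u^{kt_1}v^{kt_\Delta}r^{kw}]G\le \frac{g_{j_Z,j_\Delta,k}(s,t,r)^n}{s^{kt_1}t^{kt_\Delta}r^{kw}}.
\]
Next I substitute this into \eqref{eq:exact-stacked-jp}. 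The binomial prefactors and the denominator \(\binom{nj_Z}{kt_1}\binom{nj_\Delta}{kt_\Delta}\binom{nk}{kw}\) are positive and independent of the trial point, so \eqref{eq:trial-coeff-bound-jp} follows at once.

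The hypotheses \(0<a,b,\omega<1\) are used only to guarantee \(s,t,r\in(0,\infty)\). This makes the division legitimate and every power well defined. The specific choice \(s=a/(1-a)\) and so on plays no role in the validity of the inequality, since any positive trial point works. The choice matters only for the later exponent computations, where it is the natural saddle-point heuristic matching \(\binom{m}{t}\) against the binomial normalization.

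There is no real obstacle here. The only points to state explicitly are the positivity of the coefficients of \(g\), which holds because the odd-parity terms cancel rather than appearing with negative sign, and the fact that boundary cases \(a,b,\omega\in\{0,1\}\) are excluded or handled later by limits. Those boundary cases would follow the same reasoning with the corresponding variable set to \(0\) or treated by the endpoint convention, as was done for \(M_k\) on the HA side.
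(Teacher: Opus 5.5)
Your proposal is correct and follows the paper's proof essentially verbatim. Nonnegativity of the coefficients of \(g_{j_Z,j_\Delta,k}^n\) gives the single-term bound \([u^{kt_1}v^{kt_\Delta}r^{kw}]\,g_{j_Z,j_\Delta,k}^n\cdot s^{kt_1}t^{kt_\Delta}r^{kw}\le g_{j_Z,j_\Delta,k}(s,t,r)^n\), which is then substituted into the exact formula of Lemma~\ref{lem:exact-jp}. Your explicit even-parity expansion of \(g_{j_Z,j_\Delta,k}\) justifies that nonnegativity slightly more carefully than the paper, which simply asserts it.
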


\begin{proof}
Set
\[
P(u,v,r):=g_{j_Z,j_\Delta,k}(u,v,r)^n
=\sum_{p,q,\ell\ge 0} c_{p,q,\ell}u^p v^q r^\ell,
\qquad c_{p,q,\ell}\ge 0.
\]
Then
\[
[u^{kt_1}v^{kt_\Delta}r^{kw}]\,g_{j_Z,j_\Delta,k}(u,v,r)^n
=c_{kt_1,kt_\Delta,kw}.
\]
For \(s,t,r>0\), nonnegativity of the coefficients gives
\(P(s,t,r)\ge c_{kt_1,kt_\Delta,kw}s^{kt_1}t^{kt_\Delta}r^{kw}\), and hence
\(c_{kt_1,kt_\Delta,kw}\le P(s,t,r)/(s^{kt_1}t^{kt_\Delta}r^{kw})\).
Since \(P(s,t,r)=g_{j_Z,j_\Delta,k}(s,t,r)^n\), this coefficient bound gives
\[
[u^{kt_1}v^{kt_\Delta}r^{kw}]\,g_{j_Z,j_\Delta,k}(u,v,r)^n\le \frac{g_{j_Z,j_\Delta,k}(s,t,r)^n}{s^{kt_1}t^{kt_\Delta}r^{kw}},
\]
which is the coefficient estimate used in \eqref{eq:trial-coeff-bound-jp}. Substituting it into Lemma~\ref{lem:exact-jp} gives the displayed bound for \(\EE[\NX(t_1,t_\Delta,w)]\).
\end{proof}

For endpoint values \(a=0\), \(b=0\), or \(\omega=0\) in the half-domain, interpret \(s=a/(1-a)\), \(t=b/(1-b)\), and \(r=\omega/(1-\omega)\) as limits from positive values. The coefficient-extraction bound is valid for positive trial points, and the right-hand side extends continuously to these limits.

The uniform Stirling estimate in Lemma~\ref{lem:trial-exp-jp} includes \(q=0,N\). With \(\h(0)=0\), the exponent bound remains valid at the endpoints, including the zero-output case \(w=0\).

\begin{lemma}[Trial-point exponent estimate]\label{lem:trial-exp-jp}
With \(a=t_1/m_Z\), \(b=t_\Delta/m_\Delta\), \(\omega=w/n\), \(y_1=|1-2a|\), \(y_\Delta=|1-2b|\), and \(\mu=|1-2\omega|^k\),
\begin{equation}\label{eq:trial-jp}
\frac1n\log_2 \EE[\NX(t_1,t_\Delta,w)]
\le
\alpha_Z\h(a)+\alpha_\Delta\h(b)+\h(\omega)-1
+\log_2\!\bigl(1+\mu\, y_1^{j_Z}y_\Delta^{j_\Delta}\bigr)+o(1)
\end{equation}
holds on the half-domain \(0\le a,b,\omega\le 1/2\). More precisely,
the \(o(1)\) term can be replaced by the explicit nonnegative quantity
\[
\varepsilon_n:=\frac1n\log_2\bigl[(nj_Z+1)(nj_\Delta+1)(nk+1)\bigr]
=O\!\left(\frac{\log n}{n}\right),
\]
uniformly on that half-domain, including zero-output and zero-witness
endpoints. If the expectation vanishes, the logarithm on the left is
\(-\infty\).
\end{lemma}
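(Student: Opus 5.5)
We plug the trial point of Lemma~\ref{lem:trial-coeff-jp} into \eqref{eq:trial-coeff-bound-jp} and bound the three binomial ratios and the generating function at that point separately. Use \eqref{eq:binomial-uniform-jp} on the numerators \(\binom{m_Z}{t_1}\le 2^{m_Z\h(a)}\), \(\binom{m_\Delta}{t_\Delta}\le 2^{m_\Delta\h(b)}\), \(\binom nw\le 2^{n\h(\omega)}\), and its lower half on the denominators. The relevant sizes are \(nj_Z=km_Z\), \(nj_\Delta=km_\Delta\), \(nk\), and the relevant ratios are \(kt_1/(nj_Z)=a\), \(kt_\Delta/(nj_\Delta)=b\), \(kw/(nk)=\omega\). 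This gives
\[
\binom{nj_Z}{kt_1}^{-1}\binom{nj_\Delta}{kt_\Delta}^{-1}\binom{nk}{kw}^{-1}
\le (nj_Z+1)(nj_\Delta+1)(nk+1)\,2^{-n[j_Z\h(a)+j_\Delta\h(b)+k\h(\omega)]}.
\]
The polynomial prefactor is exactly \(2^{n\varepsilon_n}\), and these bounds hold uniformly, including at \(q=0\) and \(q=N\).

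Next we evaluate the trial-point factor. With \(s=a/(1-a)\), factor out \((1-a)^{-1}\) per \(u\)-socket, and similarly for \(v\) and \(r\). This gives \(1+s=1/(1-a)\) and \(1-s=(1-2a)/(1-a)\), so
\[
g(s,t,r)=\frac{1+(1-2a)^{j_Z}(1-2b)^{j_\Delta}(1-2\omega)^k}{2(1-a)^{j_Z}(1-b)^{j_\Delta}(1-\omega)^k}.
\]
For the denominator, \(s^{kt_1}=a^{kt_1}(1-a)^{-kt_1}\), and \(kt_1=j_Zn a\) because \(m_Z k=nj_Z\). Combining the powers of \((1-a)\) and \(a\) gives
\[
a^{-j_Zna}(1-a)^{-j_Zn(1-a)}=2^{nj_Z\h(a)}.
\]
The same holds for the \(b\) and \(\omega\) pieces, giving \(2^{nj_\Delta\h(b)}\) and \(2^{nk\h(\omega)}\). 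These exactly cancel the denominator exponents from the first step.

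What remains is \(\alpha_Z\h(a)+\alpha_\Delta\h(b)+\h(\omega)\), coming from \(m_Z=\alpha_Zn\) and \(m_\Delta=\alpha_\Delta n\), plus \(\log_2\bigl[(1+(1-2a)^{j_Z}(1-2b)^{j_\Delta}(1-2\omega)^k)/2\bigr]\). On the half-domain all three bases are nonnegative, so this product equals \(\mu\,y_1^{j_Z}y_\Delta^{j_\Delta}\). That is \eqref{eq:trial-jp} with \(o(1)=\varepsilon_n\).

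The main obstacle is the endpoints \(a=0\), \(b=0\), or \(\omega=0\), where the trial point vanishes. There we avoid limits by arguing directly. If, say, \(t_1=0\), the coefficient extraction in \(u\) is \([u^0]\), so we may set \(u=0\) exactly. The bound \(c\le P(0,t,r)/(t^{\cdot}r^{\cdot})\) still holds because all coefficients are nonnegative, and the formula above at \(a=0\) gives the same expression, with \(0\log0=0\) and \(\h(0)=0\). Since the binomial bounds are endpoint-uniform, the estimate holds on the whole closed half-domain. If the expectation is zero, the left side is \(-\infty\) and the inequality is trivial.
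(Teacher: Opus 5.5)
Your proof is correct and follows the paper's argument. It uses the same trial point $s=a/(1-a)$, $t=b/(1-b)$, $r=\omega/(1-\omega)$, and the same endpoint-uniform binomial bounds, which produce exactly the prefactor $2^{n\varepsilon_n}$. It also uses the same exact cancellation $a^{-j_Zna}(1-a)^{-j_Zn(1-a)}=2^{nj_Z\h(a)}$ (and its $b,\omega$ analogues) against the denominator entropies.

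The only deviation is at the endpoints $a=0$, $b=0$ or $\omega=0$. There you set the corresponding trial variable to zero directly, which is legitimate because the extracted exponent is then zero and all coefficients are nonnegative. The paper instead takes limits from positive trial points; your version is, if anything, slightly cleaner.
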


\begin{proof}
Applying \eqref{eq:binomial-uniform-jp} to the three numerator
binomial coefficients and the three denominator coefficients yields
\(\varepsilon_n\) exactly. To display the entropy cancellation, apply
Lemma~\ref{lem:trial-coeff-jp} with
\(s=a/(1-a)\), \(t=b/(1-b)\), and \(r=\omega/(1-\omega)\), and then apply Stirling's formula. Then
\(\tau_1:=t_1/n=\alpha_Za\), \(\tau_\Delta:=t_\Delta/n=\alpha_\Delta b\), and
\(\frac1n\log_2\binom{m_Z}{t_1}=\alpha_Z\h(a)+o(1)\),
\(\frac1n\log_2\binom{m_\Delta}{t_\Delta}=\alpha_\Delta\h(b)+o(1)\),
\(\frac1n\log_2\binom{n}{w}=\h(\omega)+o(1)\),
\(\frac1n\log_2\binom{n j_Z}{kt_1}=j_Z\h(a)+o(1)\),
\(\frac1n\log_2\binom{n j_\Delta}{kt_\Delta}=j_\Delta\h(b)+o(1)\), and
\(\frac1n\log_2\binom{nk}{kw}=k\h(\omega)+o(1)\).
Also, \(1+s=1/(1-a)\), \(1+t=1/(1-b)\), \(1+r=1/(1-\omega)\), and \((1-s)/(1+s)=1-2a\), \((1-t)/(1+t)=1-2b\), \((1-r)/(1+r)=1-2\omega\), so the right-hand side of \eqref{eq:onecol-gen-jp} is bounded above by
\(g_{j_Z,j_\Delta,k}(s,t,r)\le \{1+\mu\, y_1^{j_Z}y_\Delta^{j_\Delta}\}/\{2(1-a)^{j_Z}(1-b)^{j_\Delta}(1-\omega)^k\}\).
Substituting the coefficient-extraction bound and these Stirling estimates gives
\begin{equation*}
\begin{aligned}
\frac1n\log_2 \EE[\NX(t_1,t_\Delta,w)]
\le\;&
\alpha_Z\h(a)+\alpha_\Delta\h(b)+\h(\omega) \\
&-j_Z\h(a)-j_\Delta\h(b)-k\h(\omega) \\
&-k\tau_1\log_2 s-k\tau_\Delta\log_2 t-k\omega\log_2 r \\
&+\log_2 g_{j_Z,j_\Delta,k}(s,t,r)+o(1)
\end{aligned}
\end{equation*}
Use \(k\tau_1=j_Za\), \(k\tau_\Delta=j_\Delta b\), together with
\(-j_Z\h(a)-k\tau_1\log_2 s-j_Z\log_2(1-a)=0\),
\(-j_\Delta\h(b)-k\tau_\Delta\log_2 t-j_\Delta\log_2(1-b)=0\), and
\(-k\h(\omega)-k\omega\log_2 r-k\log_2(1-\omega)=0\).
After these cancellations, the remaining terms are the support-selection contribution \(\alpha_Z\h(a)+\alpha_\Delta\h(b)+\h(\omega)\), the term \(-1\) from the leading factor \(1/2\), and the residual term \(\log_2(1+\mu\, y_1^{j_Z}y_\Delta^{j_\Delta})\), giving \eqref{eq:trial-jp}.

For the uniform estimate, the coefficient-extraction step itself is the exact upper bound \([u^m]f(u)\le f(s)/s^m\ (s>0)\). The only approximation on the right-hand side of Lemma~\ref{lem:trial-coeff-jp} is Stirling's formula.

Use the endpoint-uniform form
\[
\log_2\binom{N}{q}=N\h(q/N)+\varepsilon_N(q),
\qquad
\sup_{0\le q\le N}|\varepsilon_N(q)|\le C\log N.
\]
Applying it to \(N=m_Z,m_\Delta,n,nj_Z,nj_\Delta,nk\), we obtain a constant \(C_1>0\) such that
\[
\frac1n\log_2 \EE[\NX(t_1,t_\Delta,w)]
\le
\Phi_{\mathrm{trial}}(t_1,t_\Delta,w)+
C_1\frac{\log n}{n}
\]
holds throughout \(0\le a,b,\omega\le 1/2\). Here \(\Phi_{\mathrm{trial}}:=\alpha_Z\h(a)+\alpha_\Delta\h(b)+\h(\omega)-1+\log_2(1+\mu\, y_1^{j_Z}y_\Delta^{j_\Delta})\) is the right-hand side of \eqref{eq:trial-jp} with the \(o(1)\) removed. Hence the \(o(1)\) term in \eqref{eq:trial-jp} is uniformly \(O((\log n)/n)\) on the half-domain.
\end{proof}

\begin{lemma}[Scalar zero-output gap at the split scale]\label{lem:scalar-zero-output-gap-jp}
Let \(k\ge10\), \(2\le L<k\), and \(\alpha:=L/k\). Set \(q_0:=2\mathrm{e}\,\mathrm{e}^{-k}\). Then
\[
\Psi_{L,k}(q):=
\alpha\h(q)-1+\log_2\!\bigl(1+(1-2q)^L\bigr)
\]
is strictly negative on \(q_0\le q\le1/2\).
\end{lemma}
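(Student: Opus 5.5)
The plan is to split the interval at \(q=1/L\), mirroring the compact-interval argument in the proof of Lemma~\ref{lem:HA-inputs-jp}. On the upper range \(1/L\le q\le 1/2\) I will use Lemma~\ref{lem:half-entropy-jp}, with this \(L\ge2\), in the form \(\log_2(1+(1-2q)^L)\le 1-\h(q)\). This gives
\[
\Psi_{L,k}(q)\le \alpha\h(q)-\h(q)=-(1-\alpha)\h(q).
\]
The right-hand side is strictly negative because \(\alpha=L/k<1\) and \(\h(q)>0\) for \(0<q\le1/2\). So the upper range needs no further work.

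On the lower range \(q_0\le q\le 1/L\), which is nonempty since \(q_0<1/k<1/L\), I will work in natural logarithms and put \(x:=Lq\in(0,1]\). I will use three standard estimates.
\begin{itemize}
\item The entropy bound \(H(q)\le q\ln(\mathrm e/q)\), where \(H:=(\ln2)\h\).
\item The bound \((1-2q)^L\le \mathrm e^{-2x}\).
\item The identity \(\ln\frac{1+\mathrm e^{-2x}}{2}=-x+\ln\cosh x\), combined with \(\ln\cosh x\le x^2/2\).
\end{itemize}
Together these give
\[
(\ln2)\Psi_{L,k}(q)\le x\Bigl[\tfrac1k\ln\tfrac{\mathrm e}{q}-1+\tfrac{x}{2}\Bigr]
= x\,\varphi(q),
\qquad
\varphi(q):=\tfrac1k\ln\tfrac{\mathrm e}{q}-1+\tfrac{Lq}{2}.
\]
Here I have used \(\alpha q=x/k\). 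Since \(x>0\), it suffices to show \(\varphi<0\) on \([q_0,1/L]\).

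The function \(\varphi\) is convex in \(q\), because \(-\ln q\) is convex and the remaining term is linear. So it is enough to check the two endpoints.
\begin{itemize}
\item At \(q=q_0\), we have \(\ln(\mathrm e/q_0)=k-\ln2\), so \(\varphi(q_0)=-\ln2/k+Lq_0/2\). This is negative iff \(kLq_0<2\ln2\). Now \(kLq_0< 2\mathrm e\,k^2\mathrm e^{-k}\), and \(2\mathrm e\,k^2\mathrm e^{-k}\) is decreasing for \(k\ge10\). At \(k=10\) it is \(200\,\mathrm e^{-9}<1<2\ln2\); this can be certified rationally using \(\mathrm e^{9}>8000\), as the paper does elsewhere.
\item At \(q=1/L\), we have \(\varphi(1/L)=\frac{\ln(\mathrm eL)}{k}-\frac12\le\frac{1+\ln k}{k}-\frac12\), using \(L<k\). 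This is negative for \(k\ge10\), since \((1+\ln k)/k\) is decreasing and \(\ln 10<3\) at \(k=10\).
\end{itemize}

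The main point needing care is the lower endpoint \(q_0\). There the crude entropy bound is nearly tight, with \(\frac1k\ln(\mathrm e/q_0)=1-\ln2/k\), so the whole gap comes from the \(\ln 2/k\) term. The choice \(q_0=2\mathrm e\,\mathrm e^{-k}\) is designed exactly to create this slack, so I expect the comparison to go through as computed. Strict negativity holds throughout because all the inequalities used are either strict or multiplied by a strictly negative factor.
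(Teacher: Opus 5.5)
Your proof is correct and follows essentially the same route as the paper. The paper splits at \(x=Lq=1\) and uses Lemma~\ref{lem:half-entropy-jp} to get \(-(1-\alpha)\h(q)\) on \([1/L,1/2]\). Below that, it applies the same three estimates to reach \(x\bigl[\frac1k\ln\frac{\mathrm eL}{x}-1+\frac x2\bigr]\), which is identical to your \(x\,\varphi(q)\), and checks the convex bracket at the endpoints \(q_0\) and \(1/L\) exactly as you do; phrasing convexity in \(q\) rather than in \(x\) is only cosmetic.
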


\begin{proof}
Use natural logarithms in this proof and write \(H(q):=(\ln2)\h(q)\). Put \(x:=Lq\) and \(x_0:=Lq_0=2\mathrm{e} L\mathrm{e}^{-k}\).
First assume \(x\le1\). Using \(H(q)\le q\ln(\mathrm{e}/q)\), \((1-2q)^L\le \mathrm{e}^{-2Lq}\), and
\(\ln\cosh x\le x^2/2\), we obtain
\[
\begin{aligned}
(\ln2)\Psi_{L,k}(q)
&\le
\frac{x}{k}\ln\frac{\mathrm{e}L}{x}
-x+\frac{x^2}{2}  \\
&=
x\left(\frac1k\ln\frac{\mathrm{e}L}{x}-1+\frac{x}{2}\right).
\end{aligned}
\]
The bracketed function
\[
B(x):=\frac1k\ln\frac{\mathrm{e}L}{x}-1+\frac{x}{2}
\]
is convex on \([x_0,1]\). Hence its maximum on that interval occurs at an endpoint. At the lower endpoint,
\[
B(x_0)
=-\frac{\ln2}{k}+\frac{x_0}{2}<0,
\]
because \(x_0\le2\mathrm{e} k \mathrm{e}^{-k}<2\ln2/k\) for \(k\ge10\). At the upper endpoint,
\[
B(1)\le \frac1k\ln(\mathrm{e}k)-\frac12<0
\]
for \(k\ge10\). Therefore \(\Psi_{L,k}(q)<0\) also when \(x\le1\).

For \(x\ge1\), equivalently \(q\in[1/L,1/2]\),
Lemma~\ref{lem:half-entropy-jp} and \(\alpha<1\) give
\[
\Psi_{L,k}(q)
=
\alpha\h(q)-1+\log_2\!\bigl(1+(1-2q)^L\bigr)
\le
-(1-\alpha)\h(q)<0
\]
for \(q\in[1/L,1/2]\).
\end{proof}

\begin{lemma}[Stacked zero-output gap at the MN split scale]\label{lem:stacked-zero-output-gap-jp}
Assume a fixed homogeneous balanced triple satisfying \(4\le j_Z<k/2\) and \(j_\Delta:=k-2j_Z\ge2\).
Set
\[
\tau_{\mathrm{MN}}:=2\mathrm{e}\,\alpha_X\,2^{-k/\ln 2}.
\]
Define the zero-output stacked trial exponent on the half-domain \(0\le a,b\le1/2\) by
\[
\Phi_0(a,b)
:=
\alpha_Z\h(a)+\alpha_\Delta\h(b)-1
+\log_2\!\bigl(1+(1-2a)^{j_Z}(1-2b)^{j_\Delta}\bigr).
\]
Set
\[
\mathcal K_{\mathrm{MN}}
:=
\{(a,b):0\le a,b\le1/2,\ \max\{\alpha_Za,\alpha_\Delta b\}\ge\tau_{\mathrm{MN}}\}.
\]
There is \(\gamma_{\mathrm{MN}}>0\), depending only on the triple, such that
\[
\Phi_0(a,b)\le -2\gamma_{\mathrm{MN}}
\]
whenever \((a,b)\in\mathcal K_{\mathrm{MN}}\).
Moreover, with \(y_1:=|1-2a|\), \(y_\Delta:=|1-2b|\), and \(\mu:=|1-2\omega|^k\), there is \(0<\delta_*\le1/2\) such that the full trial exponent in \eqref{eq:trial-jp} satisfies
\[
\alpha_Z\h(a)+\alpha_\Delta\h(b)+\h(\omega)-1
+\log_2\!\bigl(1+\mu\,y_1^{j_Z}y_\Delta^{j_\Delta}\bigr)
\le-\gamma_{\mathrm{MN}}
\]
throughout \((a,b)\in\mathcal K_{\mathrm{MN}}\) and \(0\le\omega\le\delta_*\).
\end{lemma}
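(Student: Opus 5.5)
The plan is to reduce the two-variable exponent \(\Phi_0\) to the scalar function \(\Psi_{L,k}\) of Lemma~\ref{lem:scalar-zero-output-gap-jp} with \(L=j_X\), by a convexity (Hölder-type) splitting of the product term. Write \(y_1=1-2a\), \(y_\Delta=1-2b\in[0,1]\) and \(\lambda:=j_Z/j_X\), so \(1-\lambda=j_\Delta/j_X\). Then
\[
y_1^{j_Z}y_\Delta^{j_\Delta}=\bigl(y_1^{j_X}\bigr)^{\lambda}\bigl(y_\Delta^{j_X}\bigr)^{1-\lambda}.
\]
The map \(t\mapsto\log_2(1+e^{t})\) is convex. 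Applying it with \(t\)-values \(\ln y_1^{j_X}\) and \(\ln y_\Delta^{j_X}\) (the degenerate values \(y=0\) are handled by a limit) gives
\[
\log_2\bigl(1+y_1^{j_Z}y_\Delta^{j_\Delta}\bigr)\le\lambda\log_2(1+y_1^{j_X})+(1-\lambda)\log_2(1+y_\Delta^{j_X}).
\]
Since \(\alpha_Z/\lambda=\alpha_\Delta/(1-\lambda)=j_X/k=\alpha_X\), the entropy terms split the same way, and \(-1=-\lambda-(1-\lambda)\). This yields
\[
\Phi_0(a,b)\le\lambda\,\Psi_{j_X,k}(a)+(1-\lambda)\,\Psi_{j_X,k}(b).
\]
Here \(2\le j_X<k\), and \(k\ge10\) because \(k>2j_Z\ge8\) with \(k\) even. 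So Lemma~\ref{lem:scalar-zero-output-gap-jp} applies, with \(q_0=2\mathrm e\,\mathrm e^{-k}\).

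Next, I translate the defining condition of \(\mathcal K_{\mathrm{MN}}\). Note \(2^{-k/\ln2}=\mathrm e^{-k}\), so \(\tau_{\mathrm{MN}}=\alpha_X q_0\). If \(\alpha_Z a\ge\tau_{\mathrm{MN}}\), then \(a\ge(\alpha_X/\alpha_Z)q_0>q_0\). If \(\alpha_\Delta b\ge\tau_{\mathrm{MN}}\), then \(b\ge(\alpha_X/\alpha_\Delta)q_0>q_0\). Hence on \(\mathcal K_{\mathrm{MN}}\) at least one of \(a,b\) lies in \([q_0,1/2]\). There \(\Psi_{j_X,k}\) is continuous and strictly negative, so by compactness
\[
\Psi_{j_X,k}\le-c\quad\text{on }[q_0,1/2]
\]
for some \(c>0\).

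The main obstacle is the other coordinate. \(\Psi_{j_X,k}\) is not nonpositive near \(0\): near \(q=0\) it behaves like \(q\bigl(\alpha_X\log_2(\mathrm e/q)-2j_X/\ln2\bigr)\), which is positive for extremely small \(q\). Set \(M:=\max_{[0,1/2]}\Psi_{j_X,k}^+\). This maximum is attained at \(q\) of order \(\mathrm e^{-2k}\), so \(M\) is of order \(\alpha_X\mathrm e^{-2k}\), while \(c\) should be of order \(q_0\sim\mathrm e^{-k}\). I would make both estimates explicit with the same elementary bounds as in Lemma~\ref{lem:scalar-zero-output-gap-jp}. For the upper bound on \(\Psi\) near \(0\), use \(H(q)\le q\ln(\mathrm e/q)\) and \((1-2q)^{L}\le\mathrm e^{-2Lq}\). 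For the lower bound on \(c\), use the value \(xB(x)\) at \(x=x_0\) together with \(B(1)\) and the convexity of \(B\). The goal is the two-sided comparison
\[
\min(\lambda,1-\lambda)\,c>\max(\lambda,1-\lambda)\,M .
\]
If that margin is too thin for some weights, the fallback is to avoid the splitting in the small coordinate altogether. In that case I would bound \(\alpha_\Delta\h(b)+\log_2(1+y_1^{j_Z}y_\Delta^{j_\Delta})\) directly for \(b\le q_0\), using \((1-2b)^{j_\Delta}\le1\) and \(\h(b)\le\h(q_0)\), and check that \(\alpha_\Delta\h(q_0)\) is absorbed by the scalar gap in \(a\). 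Either route gives strict negativity of \(\Phi_0\) on \(\mathcal K_{\mathrm{MN}}\). Since \(\mathcal K_{\mathrm{MN}}\) is compact and \(\Phi_0\) is continuous, its maximum there is some \(-2\gamma_{\mathrm{MN}}<0\).

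For the second assertion, the full trial exponent in \eqref{eq:trial-jp} equals \(\Phi_0(a,b)\) when \(\omega=0\), because then \(\mu=1\). It is jointly continuous on the compact set \(\mathcal K_{\mathrm{MN}}\times[0,1/2]\), hence uniformly continuous there. Choose \(0<\delta_*\le1/2\) so that moving \(\omega\) from \(0\) to any value in \([0,\delta_*]\) changes it by at most \(\gamma_{\mathrm{MN}}\). Alternatively, use monotonicity directly: \(\mu\le1\), so the log term only decreases as \(\omega\) grows, and it suffices to take \(\h(\delta_*)\le\gamma_{\mathrm{MN}}\). Either way the full exponent is at most \(-\gamma_{\mathrm{MN}}\) on \(\mathcal K_{\mathrm{MN}}\times[0,\delta_*]\).
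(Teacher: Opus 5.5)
\textbf{There is a genuine gap, and it sits exactly at the step you call the main obstacle.}

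Your convexity split $\Phi_0(a,b)\le\lambda\Psi_{j_X,k}(a)+(1-\lambda)\Psi_{j_X,k}(b)$ is correct, and nearly tight at tiny scales. The trouble is the next step. Passing from $\max\{\alpha_Za,\alpha_\Delta b\}\ge\tau_{\mathrm{MN}}$ to ``one of $a,b$ lies in $[q_0,1/2]$'' discards the factor $\alpha_X/\alpha_Z=1/\lambda$ (resp.\ $1/(1-\lambda)$). The weakened claim is false.

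Your local heuristic has a factor-two error. The slope of $\log_2(1+(1-2q)^L)$ at $q=0$ is $-L/\ln2$, not $-2L/\ln 2$, so $(\ln2)\Psi_{j_X,k}(q)\approx\alpha_Xq(\ln(\mathrm e/q)-k)$. This is positive up to $q\approx\mathrm e^{1-k}$, only a factor $2$ below $q_0$. Its maximum is $M\approx\alpha_X\mathrm e^{-k}/\ln2$, attained near $q=\mathrm e^{-k}$, not $O(\mathrm e^{-2k})$. Since $c\approx-\Psi_{j_X,k}(q_0)\approx\alpha_Xq_0$, you get $c/M\approx2\mathrm e\ln2<4$. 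Hence $\min(\lambda,1-\lambda)c>\max(\lambda,1-\lambda)M$ fails as soon as $j_\Delta/j_Z$ or $j_Z/j_\Delta$ exceeds about $3.8$.

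More decisively, the logarithm linearizes for tiny arguments:
\[
(\ln2)\,\Phi_0(a,b)=\alpha_Za\Bigl(\ln\frac{\mathrm e}{a}-k\Bigr)+\alpha_\Delta b\Bigl(\ln\frac{\mathrm e}{b}-k\Bigr)+O\bigl(k^2(a+b)^2\bigr).
\]
So $(\ln2)\Phi_0(q_0,\mathrm e^{-k})\approx\mathrm e^{-k}(j_\Delta-2\mathrm e\ln2\,j_Z)/k$, which is positive for instance for $(4,26,30)\in\mathcal T_{\mathrm{GV}}$. Symmetrically, $\Phi_0(\mathrm e^{-k},q_0)>0$ for $(10,12,22)$. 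Thus ``$\Phi_0<0$ whenever one coordinate is at least $q_0$'' is simply false, and no sharper estimate of $c$ or $M$ can rescue it. These points lie outside $\mathcal K_{\mathrm{MN}}$ only because of the factor you dropped.

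The fallback fails as well. After $(1-2b)^{j_\Delta}\le1$ and $\h(b)\le\h(q_0)$, the leftover term is $\alpha_\Delta\h(q_0)\approx\alpha_\Delta q_0(k-\ln2)/\ln2$. At the point $(q_0/\lambda,q_0)\in\mathcal K_{\mathrm{MN}}$, the available gap is only $-\Psi_{j_Z,k}(q_0/\lambda)\approx\alpha_Xq_0\ln(2/\lambda)/\ln2$. The leftover exceeds this gap for every admissible triple; compare $j_\Delta(k-\ln2)$ with $j_X\ln(2j_X/j_Z)$. The decrease supplied by $(1-2b)^{j_\Delta}$ is precisely what pays for $\alpha_\Delta\h(b)$, so it cannot be thrown away.

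\textbf{The missing idea is to merge instead of split.} Take $L=j_X$ and $q:=(\alpha_Za+\alpha_\Delta b)/\alpha_X$.
\begin{itemize}
\item Concavity of $\h$ gives $\alpha_Z\h(a)+\alpha_\Delta\h(b)\le\alpha_X\h(q)$.
\item Concavity of $x\mapsto\ln(1-2x)$ gives $(1-2a)^{j_Z}(1-2b)^{j_\Delta}\le(1-2q)^L$.
\item Together these give $\Phi_0(a,b)\le\Psi_{L,k}(q)$.
\end{itemize}
Membership in $\mathcal K_{\mathrm{MN}}$ gives $\alpha_Za+\alpha_\Delta b\ge\tau_{\mathrm{MN}}=\alpha_Xq_0$, i.e.\ exactly $q\ge q_0$. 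Lemma~\ref{lem:scalar-zero-output-gap-jp} and compactness then produce $\gamma_{\mathrm{MN}}$ with no cross term at all. This is the paper's proof, and $\tau_{\mathrm{MN}}$ is calibrated for it.

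Your split could in principle be repaired by keeping the sharp thresholds $a\ge q_0/\lambda$ or $b\ge q_0/(1-\lambda)$. In the linearized regime, $\lambda\Psi_{j_X,k}(q_0/\lambda)\approx-\alpha_Xq_0\ln(2/\lambda)/\ln2$, which beats $(1-\lambda)M$ because $2\mathrm e\ln(2/\lambda)>1-\lambda$. That route needs explicit error control, which the merge avoids.

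Your $\omega$-extension via $\mu\le1$ and $\h(\delta_*)\le\gamma_{\mathrm{MN}}$ is fine and matches the paper. One minor point: $k\ge10$ follows from $k=2j_Z+j_\Delta\ge10$; evenness of $k$ is not a hypothesis of this lemma.
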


\begin{proof}
Let
\[
L:=j_Z+j_\Delta=\alpha_X k,\qquad
q:=\frac{\alpha_Za+\alpha_\Delta b}{\alpha_X}.
\]
The assumptions give \(k=2j_Z+j_\Delta\ge10\), and \(L=j_Z+j_\Delta=k-j_Z<k\).
Then \(0\le q\le1/2\). On the half-domain, concavity of the binary entropy function gives
\[
\alpha_Z\h(a)+\alpha_\Delta\h(b)\le \alpha_X\h(q).
\]
Moreover, since \(x\mapsto \log(1-2x)\) is concave on \([0,1/2)\), with the endpoint interpreted by a limit,
\[
(1-2a)^{j_Z}(1-2b)^{j_\Delta}
\le
(1-2q)^L.
\]
The entropy and product bounds imply
\[
\Phi_0(a,b)\le
\Psi(q):=\alpha_X\h(q)-1+\log_2\!\bigl(1+(1-2q)^L\bigr).
\]
The condition \(\max\{\alpha_Za,\alpha_\Delta b\}\ge \tau_{\mathrm{MN}}\) gives
\[
\alpha_Za+\alpha_\Delta b\ge \tau_{\mathrm{MN}},
\qquad
q=\frac{\alpha_Za+\alpha_\Delta b}{\alpha_X}
\ge \frac{\tau_{\mathrm{MN}}}{\alpha_X}
=2\mathrm{e}\,\mathrm{e}^{-k}.
\]
Set \(q_0:=2\mathrm{e}\,\mathrm{e}^{-k}\).
Since \(L<k\) and \(k\ge10\), Lemma~\ref{lem:scalar-zero-output-gap-jp} applies and shows that \(\Psi(q)<0\) on \(q_0\le q\le1/2\). The function \(\Psi\) is continuous on this compact interval, so
\[
\eta_0:=-\max_{q_0\le q\le1/2}\Psi(q)>0.
\]
Consequently \(\Phi_0(a,b)\le-\eta_0\) on \(\mathcal K_{\mathrm{MN}}\).
\(\gamma_{\mathrm{MN}}:=\eta_0/2\) gives the displayed zero-output margin.
For \(0\le\omega\le1/2\), \(\mu\le1\), so the full trial exponent is at most \(\Phi_0(a,b)+\h(\omega)\). Choose \(0<\delta_*\le1/2\) so that \(\h(\delta_*)\le\gamma_{\mathrm{MN}}\). Then, on \(\mathcal K_{\mathrm{MN}}\) and \(0\le\omega\le\delta_*\), monotonicity of \(\h\) on \([0,1/2]\) gives the bound
\[
-2\gamma_{\mathrm{MN}}+\h(\omega)
\le -2\gamma_{\mathrm{MN}}+\h(\delta_*)
\le -\gamma_{\mathrm{MN}},
\]
so the full trial exponent is at most \(-\gamma_{\mathrm{MN}}\) throughout
\(\mathcal K_{\mathrm{MN}}\times[0,\delta_*]\).
\end{proof}

\subsection{Pairing Bound Governing the Low-Weight Regime}

\begin{lemma}[General stacked pairing bound]\label{lem:pair-jp}
Fix parameters \(\sigma_1\in(0,\alpha_Z)\), \(\sigma_\Delta\in(0,\alpha_\Delta)\), and \(\sigma_w\in(0,1)\), and set
\[
\rho:=\max\left\{\frac{\sigma_1}{\alpha_Z},\frac{\sigma_\Delta}{\alpha_\Delta},\sigma_w\right\},
\qquad
C_{\mathrm{pair}}:=\frac{1+\rho}{(1-\rho)^2}.
\]
Assume \(0\le t_1\le \sigma_1 n\), \(0\le t_\Delta\le \sigma_\Delta n\), and \(0\le w\le \sigma_w n\), and write
\[
M:=k(t_1+t_\Delta+w).
\]
The parity event has probability \(0\) when \(M\) is odd. When \(M\) is even and \(n\ge (1+\rho)/(2\rho(1-\rho))\), one has
\begin{equation*}
\PP\bigl(A_Z^T\ones_{S_1}+A_\Delta^T\ones_{S_\Delta}+B^T\ones_T=0\bigr)
\le
(M-1)!!\left(\frac{C_{\mathrm{pair}}}{n}\right)^{M/2}
\end{equation*}
for any fixed supports \(S_1\subset[m_Z]\), \(S_\Delta\subset[m_\Delta]\), and \(T\subset[n]\) with
\(|S_1|=t_1\), \(|S_\Delta|=t_\Delta\), and \(|T|=w\).

Multiplying this fixed-support bound by the number of choices of \(S_1,S_\Delta,T\) gives, in this even-\(M\) case,
\begin{equation*}
\EE[\NX(t_1,t_\Delta,w)]
\le
\binom{m_Z}{t_1}\binom{m_\Delta}{t_\Delta}\binom{n}{w}
(M-1)!!\left(\frac{C_{\mathrm{pair}}}{n}\right)^{M/2}
\end{equation*}
Here \((M-1)!!\) denotes the number of perfect matchings of \(M\) labeled objects for positive even \(M\), and \((-1)!!=1\) when \(M=0\).
If \(k\) is even, then \(M\) is even for all integer \(t_1,t_\Delta,w\).
\end{lemma}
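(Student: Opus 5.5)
The plan is a first-moment pairing argument in the socket-based configuration model, run separately in each of the three pools, in the style of the single-pool pairing bound in the proof of Lemma~\ref{lem:HA-inputs-jp}.

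\textbf{Reformulation.} Fix supports \(S_1,S_\Delta,T\) of sizes \(t_1,t_\Delta,w\). As in the proof of Lemma~\ref{lem:exact-jp}, the active row-side sockets are
\begin{itemize}
\item \(kt_1\) sockets in the \(A_Z\)-pool, matched into the \(nj_Z=km_Z\) column-side sockets of that pool;
\item \(kt_\Delta\) sockets in the \(A_\Delta\)-pool, matched into its \(nj_\Delta\) column-side sockets;
\item \(kw\) sockets in the \(B\)-pool, matched into its \(nk\) column-side sockets.
\end{itemize}
The three matchings are independent and uniformly random. The event \(A_Z^T\ones_{S_1}+A_\Delta^T\ones_{S_\Delta}+B^T\ones_T=0\) says that every column receives an even total number of active sockets, summed over the three pools. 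Summing over columns, the \(M=k(t_1+t_\Delta+w)\) active sockets then split into column classes of even size, so \(M\) must be even. This gives probability \(0\) for odd \(M\), and evenness is automatic when \(k\) is even.

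\textbf{Union bound over pairings.} For even \(M\), if every column class is even, one can pair up the active sockets inside each column class. The parity event is therefore contained in the union, over all \((M-1)!!\) perfect matchings \(\pi\) of the \(M\) labelled active row-side sockets, of the event \(E_\pi\) that both members of every pair of \(\pi\) land in the same column. It then suffices to show
\[
\PP(E_\pi)\le (C_{\mathrm{pair}}/n)^{M/2}
\]
for each fixed \(\pi\).

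\textbf{Bounding \(\PP(E_\pi)\).} Expose the pairs of \(\pi\) one at a time. For each pair, first reveal the image of one endpoint, then the image of the other. Conditionally on everything revealed so far, the second endpoint's image is uniform over the still unused column-side sockets of its own pool. That pool is determined by the endpoint, since pairs may mix pools. Two counts control the conditional probability.
\begin{itemize}
\item The target column has at most \(d\) sockets in that pool, where \(d\in\{j_Z,j_\Delta,k\}\) is the pool's column degree.
\item Fewer than \(kt_1\le k\sigma_1 n\le \rho\, kα_Z n=\rho\, nj_Z\) sockets of the \(A_Z\)-pool have been used, and likewise fewer than \(\rho\, nj_\Delta\) and \(\rho\, nk\) in the other two pools. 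So at least \((1-\rho)dn\) sockets remain in the pool.
\end{itemize}
Hence the conditional probability that the second endpoint lands in the same column is at most
\[
\frac{d}{(1-\rho)dn}=\frac{1}{(1-\rho)n}\le\frac{C_{\mathrm{pair}}}{n}.
\]
Multiplying over the \(M/2\) pairs gives \(\PP(E_\pi)\le (C_{\mathrm{pair}}/n)^{M/2}\). Summing over the \((M-1)!!\) pairings proves the fixed-support bound. Multiplying by the \(\binom{m_Z}{t_1}\binom{m_\Delta}{t_\Delta}\binom{n}{w}\) choices of supports then gives the bound on \(\EE[\NX(t_1,t_\Delta,w)]\).

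\textbf{Main obstacle.} The delicate step is justifying the sequential exposure. I must check that, given the history, the partner's image really is uniform on the unused sockets of its own pool. This holds because each pool is an independent uniform perfect matching, and revealing images of row-side sockets in one pool leaves the remaining sockets of that pool uniformly matched. I also need a careful count of used sockets, including off-by-one effects where the current endpoint may already occupy a socket in the target column. The extra factor \((1+\rho)/(1-\rho)\) in \(C_{\mathrm{pair}}\), together with the hypothesis \(n\ge(1+\rho)/(2\rho(1-\rho))\), should be enough slack to absorb these effects, as in the \(km_Z-(m_Z-1)\) count in Lemma~\ref{lem:HA-inputs-jp}. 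The constant \(1/(1-\rho)\) itself already suffices for the main estimate.
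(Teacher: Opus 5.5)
Your proposal is correct and follows the paper's proof essentially step for step. It uses the same union bound over the \((M-1)!!\) perfect matchings of active row-side sockets, then exposes the pairs one at a time inside the three independent uniform pool matchings. Your per-pair estimate \(d_P/((1-\rho)nd_P)=1/((1-\rho)n)\) already holds as stated: the partner socket is itself still unexposed, so strictly fewer than the active count of its pool have been used. The off-by-one worry in your last paragraph is therefore unfounded, and the paper's extra factor \((1+\rho)/(1-\rho)\) and its lower bound on \(n\) are not actually needed.
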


\begin{proof}
If \(M=0\), no active socket is present and the displayed bound is \(1\). If \(M\) is odd, then it is impossible for every column to receive an even number of active sockets, so the left-hand side is \(0\). Assume henceforth that \(M\) is positive and even.
Label the active row-side sockets and fix one perfect matching \(\pi\) of them. Let \(E_\pi\) be the event that every pair in \(\pi\) has its two active sockets in the same column.

If \(A_Z^T\ones_{S_1}+A_\Delta^T\ones_{S_\Delta}+B^T\ones_T=0\) holds, then the total number of active column-side sockets in each column is even. Pairing the active sockets arbitrarily within each column therefore produces a perfect matching of all active sockets. Taking the union over perfect matchings gives
\[
\PP(A_Z^T\ones_{S_1}+A_\Delta^T\ones_{S_\Delta}+B^T\ones_T=0)
\le \sum_{\pi}\PP(E_\pi)
\le (M-1)!!\,\max_{\pi}\PP(E_\pi).
\]

For a fixed \(\pi\), expose the images of the row-side sockets pair by pair, in the order ``first socket, second socket.'' For a pool \(P\in\{A_Z,A_\Delta,B\}\), let \(n d_P\) be the total number of column-side sockets in that pool. Then \(d_{A_Z}=j_Z\), \(d_{A_\Delta}=j_\Delta\), and \(d_B=k\).

The active-socket counts in the \(A_Z\)-, \(A_\Delta\)-, and \(B\)-pools are \(kt_1\le \rho\, n j_Z\), \(kt_\Delta\le \rho\, n j_\Delta\), and \(kw\le \rho\, nk\). Hence, at every stage, each pool \(P\) still contains at least \((1-\rho) n d_P\) unexposed column-side sockets.

Consider one pair \(\{\xi,\eta\}\) of \(\pi\), and expose the image of the first socket \(\xi\). Let \(P\) be the pool to which the second socket \(\eta\) belongs, and let \(c\) be the column index of the image of \(\xi\).

The number of unexposed sockets in pool \(P\) that belong to column \(c\) is at most \(d_P\), and is at most \(d_P-1\) when \(\xi\) and \(\eta\) belong to the same pool. The total number of unexposed sockets remaining in pool \(P\) is at least \((1-\rho) n d_P-1\). Since \(0<\rho<1\) and \(d_P\ge1\), the inequality
\[
\frac{d_P}{(1-\rho) n d_P-1}
\le
\frac{1+\rho}{(1-\rho)^2 n}
\]
holds once \(n\ge (1+\rho)/(2\rho(1-\rho))\). For all such \(n\),
\[
\PP(\xi,\eta\text{ fall in the same column}\mid \mathcal F)
\le
\frac{d_P}{(1-\rho) n d_P-1}
\le
\frac{C_{\mathrm{pair}}}{n}
\]
holds even after conditioning on the previous exposure history \(\mathcal F\).

Applying this successively to the \(M/2\) pairs of \(\pi\), we obtain
\[
\PP(E_\pi)\le \left(\frac{C_{\mathrm{pair}}}{n}\right)^{M/2}.
\]
Substitution into the union bound over perfect matchings gives the fixed-support probability bound in the first display. Multiplying that bound by the number of support choices, \(\binom{m_Z}{t_1}\binom{m_\Delta}{t_\Delta}\binom{n}{w}\), gives the expectation bound in the second display.
\end{proof}

\begin{proof}[Proof of Theorem~\ref{thm:MN-lin-jp}]
Set
\begin{equation*}
\tau_0:=2\mathrm{e}\,\alpha_X\,2^{-k/\ln 2}.
\end{equation*}
Apply Lemma~\ref{lem:stacked-zero-output-gap-jp}; its split scale is exactly this \(\tau_0\). Let \(\gamma_0:=\gamma_{\mathrm{MN}}\), and let \(\delta_*\) be the corresponding small-output constant.
Take initially \(0<\delta_X^{\mathrm{lin}}\le\min\{\delta_*,1/2\}\); the small-support estimate will impose one further restriction on this constant.
We bound \(\sum_{1\le w\le\delta_X^{\mathrm{lin}}n}\EE[A_{\CX}(w)]\) by splitting the witness weights \((t_1,t_\Delta)\) at the scale \(\tau_0 n\), and apply Markov's inequality only after this first-moment bound is obtained.

Under the assumptions of the theorem, the balanced condition \(j_X+j_Z=k\) and
\(j_\Delta=j_X-j_Z\ge 2\) imply
\begin{equation*}
k=2j_Z+j_\Delta\ge 2\cdot 4+2=10
\end{equation*}
Moreover \(\alpha_\Delta\ge2/k\), \(\alpha_Z\ge4/k\), and \(\alpha_X\le1\), so
\[
\frac{\tau_0}{\alpha_\Delta}\le \mathrm{e}k\mathrm{e}^{-k}<1,
\qquad
\frac{\tau_0}{\alpha_Z}\le \frac{\mathrm{e}k}{2}\mathrm{e}^{-k}<1
\qquad (k\ge10).
\]
The displayed bounds imply
\begin{equation*}
\tau_0<\min\{\alpha_Z,\alpha_\Delta\}
\end{equation*}
holds.
These inequalities show that \(\sigma_1=\sigma_\Delta=\tau_0\) satisfies the two support-size hypotheses of Lemma~\ref{lem:pair-jp}, and \(\sigma_w=\delta_X^{\mathrm{lin}}\) satisfies \(0<\sigma_w<1\).

By Proposition~\ref{prop:fold-jp},
\begin{equation*}
A_{\CX}(w)
\le
4\sum_{0\le t_1\le m_Z/2}\sum_{0\le t_\Delta\le m_\Delta/2}\NX(t_1,t_\Delta,w)
\end{equation*}
holds.
Split the witness range into two regions.

\smallskip
\noindent\textbf{Small-support region.}
Assume \(0\le t_1\le \tau_0 n\), \(0\le t_\Delta\le \tau_0 n\), and \(1\le w\le \delta_X^{\mathrm{lin}}n\).
Apply Lemma~\ref{lem:pair-jp} with \(\sigma_1=\sigma_\Delta=\tau_0\) and \(\sigma_w=\delta_X^{\mathrm{lin}}\).
\begin{equation*}
\rho_0:=\max\left\{\frac{\tau_0}{\alpha_Z},\frac{\tau_0}{\alpha_\Delta},\delta_X^{\mathrm{lin}}\right\},
\qquad
C_0:=\frac{1+\rho_0}{(1-\rho_0)^2},
\qquad
c_0:=2\tau_0+\delta_X^{\mathrm{lin}}
\end{equation*}
Then Vandermonde's identity \cite[Sec.~5.1, Vandermonde's convolution~(5.27)]{GKP94}
and Stirling's formula \cite[Sec.~9.6, eq.~(9.91), Table~452]{GKP94}
(see also \cite{Robbins55} for sharper estimates) give
\begin{equation*}
\sum_{1\le w\le \delta_X^{\mathrm{lin}}n}
\sum_{0\le t_1\le \tau_0 n}
\sum_{0\le t_\Delta\le \tau_0 n}
\EE[\NX(t_1,t_\Delta,w)]
\le
\sum_{u=1}^{c_0 n}
\binom{n+m_X}{u}(ku-1)!!\left(\frac{C_0}{n}\right)^{ku/2}
\end{equation*}
Here \(m_Z+m_\Delta+n=m_X+n\), so summing the three support choices over \(t_1+t_\Delta+w=u\) gives the binomial coefficient \(\binom{n+m_X}{u}\).

For each term, the bounds \(\binom{N}{u}\le (\mathrm{e}N/u)^u\) and Stirling's formula yield
\begin{equation*}
\binom{n+m_X}{u}\le \left(\frac{\mathrm{e}(n+m_X)}{u}\right)^u
=\left(\frac{\mathrm{e}(1+\alpha_X)n}{u}\right)^u,
\qquad
(ku-1)!!\le \sqrt{2}\left(\frac{ku}{\mathrm{e}}\right)^{ku/2}
\end{equation*}
Combining the two displayed bounds gives
\begin{equation*}
\begin{aligned}
\binom{n+m_X}{u}(ku-1)!!\left(\frac{C_0}{n}\right)^{ku/2}
&\le
\left(\frac{\mathrm{e}(1+\alpha_X)n}{u}\right)^u
\sqrt{2}\left(\frac{ku}{\mathrm{e}}\right)^{ku/2}
\left(\frac{C_0}{n}\right)^{ku/2} \\
&=
\sqrt{2}\left[
\mathrm{e}(1+\alpha_X)\left(\frac{C_0k}{\mathrm{e}}\right)^{k/2}
\left(\frac{u}{n}\right)^{k/2-1}
\right]^u.
\end{aligned}
\end{equation*}
Since \(u=t_1+t_\Delta+w\le c_0n\),
\begin{equation*}
\left(\frac{u}{n}\right)^{k/2-1}\le c_0^{k/2-1}=c_0^{-1}c_0^{k/2}
\end{equation*}
and therefore each term is bounded by
\begin{equation*}
\sqrt{2}\left[
\mathrm{e}(1+\alpha_X)c_0^{-1}\left(\frac{C_0kc_0}{\mathrm{e}}\right)^{k/2}
\right]^u
\end{equation*}

Define the base
\begin{equation}\label{eq:mnlin-small-base-jp}
B_0:=\mathrm{e}(1+\alpha_X)c_0^{-1}\left(\frac{C_0kc_0}{\mathrm{e}}\right)^{k/2}
\end{equation}

Choose \(\delta_X^{\mathrm{lin}}\) small enough to make \(B_0<1\). As \(\delta_X^{\mathrm{lin}}\downarrow0\), \(c_0\downarrow2\tau_0=4\mathrm{e}\alpha_X\mathrm{e}^{-k}\), while
\[
\rho_0\downarrow
\max\left\{\frac{\tau_0}{\alpha_Z},\frac{\tau_0}{\alpha_\Delta}\right\}
\le k\mathrm{e}^{1-k}<\frac1{10}
\]
because \(\alpha_\Delta\ge2/k\) and \(\alpha_X\le1\). Thus the one-sided limit satisfies \(C_0\to(1+\rho_*)/(1-\rho_*)^2<2\), where
\(\rho_*:=\max\{\tau_0/\alpha_Z,\tau_0/\alpha_\Delta\}\), and
\[
\limsup_{\delta_X^{\mathrm{lin}}\downarrow0} B_0
\le
(8k)^{k/2}\exp\!\left(k-\frac{k^2}{2}\right).
\]
The displayed upper bound for \(\limsup_{\delta_X^{\mathrm{lin}}\downarrow0} B_0\) is less than \(1\) for every \(k\ge10\): at \(k=10\) its logarithm is less than
\(5\cdot5+10-50=-15\), because \(\ln80<5\), and the derivative of its logarithm is
\(3/2+\frac12\ln(8k)-k<0\) throughout \(k\ge10\).
Shrink \(\delta_X^{\mathrm{lin}}\), if necessary, so that \(0<\delta_X^{\mathrm{lin}}\le\delta_*\) and \(B_0<1\).

Moreover, for each fixed \(u\ge 1\),
\begin{equation*}
\binom{n+m_X}{u}(ku-1)!!\left(\frac{C_0}{n}\right)^{ku/2}
=O\!\left(n^{-u(k/2-1)}\right)\to 0
\end{equation*}
holds.
Using the geometric tail bound and fixed-\(u\) decay, for any \(\varepsilon>0\), choose \(U\) so that
\(\sum_{u>U}\sqrt{2}B_0^u<\varepsilon/2\),
and then choose \(n\) sufficiently large so that
\(\sum_{u=1}^{U}\binom{n+m_X}{u}(ku-1)!!(C_0/n)^{ku/2}<\varepsilon/2\).
The tail-initial-segment split gives
\begin{equation*}
\sum_{u=1}^{c_0 n}
\binom{n+m_X}{u}(ku-1)!!\left(\frac{C_0}{n}\right)^{ku/2}
<\varepsilon.
\end{equation*}
Substituting this bound in the small-support first moment gives
\begin{equation*}
\sum_{1\le w\le \delta_X^{\mathrm{lin}}n}
\sum_{0\le t_1\le \tau_0 n}
\sum_{0\le t_\Delta\le \tau_0 n}
\EE[\NX(t_1,t_\Delta,w)]
=o(1)
\end{equation*}
holds.

\smallskip
\noindent\textbf{Large-support region.}
Here at least one of \(t_1\) or \(t_\Delta\) exceeds \(\tau_0 n\).
Then
\begin{equation*}
\max\{\alpha_Za,\alpha_\Delta b\}
=\max\{\tau_1,\tau_\Delta\}
\ge \tau_0
\end{equation*}
holds.
Since \(0\le\omega\le\delta_X^{\mathrm{lin}}\le\delta_*\), Lemma~\ref{lem:stacked-zero-output-gap-jp} and the uniform \(O((\log n)/n)\) error in Lemma~\ref{lem:trial-exp-jp} give, uniformly over the entire large-support region,
\begin{equation*}
\frac1n\log_2\EE[\NX(t_1,t_\Delta,w)]
\le -\gamma_0+O\!\left(\frac{\log n}{n}\right)
\le -\frac{\gamma_0}{2}
\end{equation*}
for all sufficiently large \(n\).
There are at most \((m_Z+1)(m_\Delta+1)(\lfloor\delta_X^{\mathrm{lin}}n\rfloor+1)=O(n^3)\) admissible triples, so the total contribution of the large-support region is \(O(n^3)2^{-\gamma_0 n/2}=o(1)\).

Combining the two regional estimates gives
\begin{equation*}
\sum_{1\le w\le \delta_X^{\mathrm{lin}}n}\EE[A_{\CX}(w)] = o(1)
\end{equation*}
for the full range \(1\le w\le \delta_X^{\mathrm{lin}}n\).
Now let
\begin{equation*}
X_n(\delta):=\sum_{1\le w\le \delta n}A_{\CX}(w).
\end{equation*}
Then the event \(d(\CX)\le \delta n\) is equivalent to the existence of at least one nonzero word
of weight \(1,\dots,\delta n\), namely to the event \(X_n(\delta)\ge 1\).
Since \(\delta<\delta_X^{\mathrm{lin}}\),
\begin{equation*}
X_n(\delta)\le X_n(\delta_X^{\mathrm{lin}})
\end{equation*}
holds, and the first-moment estimate over \(1\le w\le\delta_X^{\mathrm{lin}}n\) implies
\begin{equation*}
\EE[X_n(\delta)]\le \sum_{1\le w\le \delta_X^{\mathrm{lin}}n}\EE[A_{\CX}(w)]=o(1).
\end{equation*}
Markov's inequality gives
\begin{equation*}
\PP[d(\CX)\le \delta n]
=
\PP[X_n(\delta)\ge 1]
\le
\EE[X_n(\delta)]
=o(1).
\end{equation*}
\end{proof}

\engappendixsection{Actual-Rate Convergence}{app:actual-rates-jp}

Throughout this appendix, \(j_\Delta:=j_X-j_Z\). Under the homogeneous balanced relation used in Theorem~\ref{thm:actual-rates-jp}, \(j_X=k-j_Z\) and hence \(j_\Delta=k-2j_Z\). We abbreviate
\begin{equation*}
\alpha_Z:=\frac{j_Z}{k},
\qquad
\alpha_\Delta:=\frac{j_\Delta}{k},
\qquad
\alpha_X:=\frac{j_X}{k}=\alpha_Z+\alpha_\Delta=1-\alpha_Z.
\end{equation*}
For the stacked \(A_Z,A_\Delta\) estimates in this appendix, this specialization gives
\[
m_Z=\alpha_Z n,\qquad
m_\Delta=\alpha_\Delta n,\qquad
m_X=m_Z+m_\Delta=\alpha_X n.
\]
 
\begin{lemma}[The left-kernel dimension of the stacked matrix is sublinear]\label{lem:A2-nullity-jp}
For the random matrices in Definition~\ref{def:family-jp}, specialized to a fixed homogeneous balanced triple satisfying
\begin{equation*}
4\le j_Z<\frac{k}{2},
\qquad
k\equiv0 \pmod 2,
\qquad
j_\Delta=k-2j_Z,
\end{equation*}
the left-kernel dimension is sublinear in probability:
\begin{equation*}
\frac{1}{n}\dim\Ker A_X^T \to 0
\qquad (n\to\infty)
\end{equation*}
Equivalently, the rank convergence
\begin{equation*}
\frac{1}{n}\rank A_X \to \alpha_X
\qquad (n\to\infty)
\end{equation*}
also holds in probability.

For the \((j_Z,k)\)-regular block \(A_Z\) in this stacked ensemble, the one-block rank estimates are
\begin{equation*}
\frac{1}{n}\dim\Ker A_Z^T\to 0,
\qquad
\frac{1}{n}\rank A_Z\to \alpha_Z
\end{equation*}
with convergence in probability.
\end{lemma}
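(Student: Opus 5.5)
The plan is a first-moment bound on the size of the left kernel, converted into a bound on its dimension through \(|\Ker A_X^T|=2^{\dim\Ker A_X^T}\). Since \(\rank A_X=m_X-\dim\Ker A_X^T\) and \(m_X=\alpha_Xn\), the rank statement is equivalent to the nullity statement, so only the latter needs proof. For fixed \(\varepsilon>0\), Markov's inequality gives
\[
\PP[\dim\Ker A_X^T\ge\varepsilon n]\le 2^{-\varepsilon n}\,\EE|\Ker A_X^T|,
\]
so it suffices to show \(\EE|\Ker A_X^T|=2^{o(n)}\). Now \(|\Ker A_X^T|=\sum_{t_1,t_\Delta}\NX(t_1,t_\Delta,0)\), because at \(w=0\) the \(B\)-pool is inactive. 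Hence Lemma~\ref{lem:exact-jp}, Lemma~\ref{lem:trial-exp-jp} and Lemma~\ref{lem:pair-jp} all apply with \(w=0\), and the \(O(n^2)\) pairs \((t_1,t_\Delta)\) cost only a polynomial factor.

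First, the \(\Delta\)-block reduction. Since \(j_\Delta=k-2j_Z\) is always even, \(A_\Delta^T\ones_{[m_\Delta]}=0\), and this folds \(b\) to \([0,1/2]\) at a cost of a factor \(2\), exactly as in Proposition~\ref{prop:fold-jp}. When \(j_Z\) is even we also fold \(a\), and the remaining argument is the MN linear-distance proof at \(w=0\).

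That argument splits at the scale \(\tau_0\) of Lemma~\ref{lem:stacked-zero-output-gap-jp}. In the large-support region \(\max\{\alpha_Za,\alpha_\Delta b\}\ge\tau_0\), that lemma gives \(\Phi_0\le-2\gamma_{\mathrm{MN}}\), so the contribution is \(O(n^2)2^{-\gamma_{\mathrm{MN}}n}\). In the small-support region, Lemma~\ref{lem:pair-jp} with \(w=0\) bounds the sum by
\[
\sum_{u\ge0}\binom{m_X}{u}(ku-1)!!\,(C_0/n)^{ku/2}.
\]
The base-\(B_0\) computation following \eqref{eq:mnlin-small-base-jp} shows this sum is \(1+o(1)\), where the \(u=0\) term counts the zero vector. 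Altogether \(\EE|\Ker A_X^T|=O(1)\), which is much stronger than \(2^{o(n)}\). The one-block statement for \(A_Z\) is the same computation with only the \(u\)-pool, using the one-block trial exponent \(\alpha_Z\h(a)-1+\log_2(1+(1-2a)^{j_Z})\). For even \(j_Z\) this is negative away from small \(a\) by Lemma~\ref{lem:scalar-zero-output-gap-jp} with \(L=j_Z\), and the pairing bound together with the \(W_o\)-type estimates of Lemma~\ref{lem:HA-inputs-jp}, now applied to column parities, handles small \(a\).

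The main obstacle is odd \(j_Z\), which is included in the hypothesis. In that case \(A_Z^T\ones_{[m_Z]}=\ones_{[n]}\neq0\), so the \(a\)-fold is unavailable and the range \(1/2<a\le1\) must be treated directly. Writing \(\bp_Z=\ones+\bp'\), a witness with large \(a\) corresponds to a small \(\bp'\) solving
\[
A_Z^T\bp'+A_\Delta^T\bp_\Delta=\ones_{[n]},
\]
that is, the odd-syndrome coset. In the exact generating function this is the second term of \(g\) acquiring the factor \((1-u)^{j_Z}\) with negative sign, and the trial bound then gives exponent at most \(\Phi_0\) with \(|1-2a|\), which is negative on the large-support region exactly as before.

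Near \(a=1\) with \(b\) small, I would argue directly that every column must receive an odd number of active sockets. That forces at least \(n\) active column-side sockets, whereas a small \(\bp'\) together with \(\bp_\Delta\) supplies only \(k(t'_1+t_\Delta)\ll n\) of them. The probability is therefore exactly \(0\) once \(t'_1+t_\Delta<n/k\), and the remaining intermediate range is again covered by the trial exponent. Checking that the scale \(\tau_0\) sits below \(\alpha_X/k\), so that these two ranges overlap, is the step I would verify carefully. The same coset argument handles the one-block \(A_Z\) with odd \(j_Z\).
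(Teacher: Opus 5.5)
Your proposal is correct and follows the paper's proof essentially step for step. The shared steps are: the always-available $A_\Delta$-fold; the extra $A_Z$-fold for even $j_Z$; the large-support bound from Lemma~\ref{lem:stacked-zero-output-gap-jp}; the $w=0$ pairing bound grouped by $u=t_1+t_\Delta$ for small supports; and, for odd $j_Z$, complementation to the all-one-syndrome coset, whose exponent is dominated by the even-parity one on the large part and which is empty on the small corner by socket counting. The one-block $A_Z$ case is handled the same way via Lemma~\ref{lem:scalar-zero-output-gap-jp} at $L=j_Z$.

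The overlap check you flagged should be stated as $2k\tau_0<1$, not $\tau_0<\alpha_X/k$. In the leftover corner both $t_1'$ and $t_\Delta$ can be close to $\tau_0 n$, and $2\alpha_X>1$, so your version is not enough. The correct condition holds because $2k\tau_0\le 4\mathrm{e}k\mathrm{e}^{-k}<1$ for $k\ge10$.
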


\begin{proof}
The assumptions imply \(k\ge 10\) and \(j_\Delta\ge 2\).
First define
\begin{equation*}
N_{\mathrm{dep}}(t_1,t_\Delta)
:=
\bigl|
\{(\bp_Z,\bp_\Delta):
\wt(\bp_Z)=t_1,\ \wt(\bp_\Delta)=t_\Delta,
A_Z^T \bp_Z+A_\Delta^T \bp_\Delta=0
\}
\bigr|
\end{equation*}
Then
\begin{equation*}
|\Ker A_X^T|
=
\sum_{t_1=0}^{m_Z}\sum_{t_\Delta=0}^{m_\Delta}
N_{\mathrm{dep}}(t_1,t_\Delta)
\end{equation*}
holds.
If \(j_Z\) is even, then \(j_\Delta=k-2j_Z\) is also even, and
Proposition~\ref{prop:fold-jp}, applied to the \(w=0\) specialization of the
refined enumerator, gives
\begin{equation*}
|\Ker A_X^T|
\le
4\sum_{0\le t_1\le m_Z/2}\sum_{0\le t_\Delta\le m_\Delta/2}
N_{\mathrm{dep}}(t_1,t_\Delta).
\end{equation*}
If \(j_Z\) is odd, then \(j_\Delta=k-2j_Z\) is still even, so the \(A_\Delta\)-block can still be folded:
\[
N_{\mathrm{dep}}(t_1,t_\Delta)
=N_{\mathrm{dep}}(t_1,m_\Delta-t_\Delta).
\]
Summing one representative from each \(A_\Delta\)-orbit gives
\begin{equation}\label{eq:rank-partial-fold-jp}
|\Ker A_X^T|
\le
2\sum_{t_1=0}^{m_Z}\sum_{0\le t_\Delta\le m_\Delta/2}
N_{\mathrm{dep}}(t_1,t_\Delta).
\end{equation}

\(N_{\mathrm{dep}}(t_1,t_\Delta)=\NX(t_1,t_\Delta,0)\). This identity puts the left-kernel enumerator into the exact stacked formula of Lemma~\ref{lem:exact-jp} with \(w=0\); in the trial-bound specialization to \(\omega=0\), the endpoint convention \(r=\omega/(1-\omega)\to0\) and the endpoint-uniform Stirling estimate stated before Lemma~\ref{lem:trial-exp-jp} justify the limiting substitution.

Now set
\begin{equation*}
\tau_0:=2\mathrm{e}\,\alpha_X\,2^{-k/\ln 2}.
\end{equation*}
Apply Lemma~\ref{lem:stacked-zero-output-gap-jp}; its split scale is this \(\tau_0\). Let \(\gamma_0:=\gamma_{\mathrm{MN}}>0\) be the corresponding zero-output gap.
Consider first the lower folded domain
\[
0\le t_1\le m_Z/2,\qquad 0\le t_\Delta\le m_\Delta/2.
\]
For exponent estimates on this domain, write
\(a:=t_1/m_Z\) and \(b:=t_\Delta/m_\Delta\).
On this domain, split the index set into
\begin{equation*}
\mathcal R_{\mathrm{small}}
=
\{0\le t_1\le \tau_0 n,\ 0\le t_\Delta\le \tau_0 n\}
\end{equation*}
and its complement \(\mathcal R_{\mathrm{large}}\).

For \((t_1,t_\Delta)\in \mathcal R_{\mathrm{large}}\), at least one of \(t_1,t_\Delta\) exceeds \(\tau_0 n\), so \(\max\{\alpha_Za,\alpha_\Delta b\}\ge\tau_0\). The zero-output part of Lemma~\ref{lem:stacked-zero-output-gap-jp}, together with the uniform \(O((\log n)/n)\) error in Lemma~\ref{lem:trial-exp-jp}, gives
\[
\frac1n\log_2 \EE[N_{\mathrm{dep}}(t_1,t_\Delta)]
\le -\gamma_0+O\!\left(\frac{\log n}{n}\right)
\le -\frac{\gamma_0}{2}
\]
uniformly on \(\mathcal R_{\mathrm{large}}\) for all sufficiently large \(n\).
Since there are at most \((m_Z+1)(m_\Delta+1)=O(n^2)\) choices of \((t_1,t_\Delta)\), the large-support part is bounded by \(O(n^2)2^{-\gamma_0 n/2}\) and therefore satisfies
\begin{equation*}
\sum_{(t_1,t_\Delta)\in\mathcal R_{\mathrm{large}}}
\EE[N_{\mathrm{dep}}(t_1,t_\Delta)]
=o(1)
\end{equation*}

Next, for \((t_1,t_\Delta)\in \mathcal R_{\mathrm{small}}\), set
\begin{equation*}
\rho_0:=\max\left\{\frac{\tau_0}{\alpha_Z},\frac{\tau_0}{\alpha_\Delta}\right\},
\qquad
C_0:=\frac{1+\rho_0}{(1-\rho_0)^2}.
\end{equation*}
Because \(j_\Delta\ge2\), \(\alpha_\Delta\ge2/k\), \(\alpha_Z\ge4/k\), and \(\alpha_X\le1\), the definition
\(\tau_0=2\mathrm{e}\alpha_X2^{-k/\ln2}=2\alpha_X \mathrm{e}^{1-k}\) gives
\[
\rho_0\le k\mathrm{e}^{1-k}<1
\qquad (k\ge10),
\]
so the hypotheses of Lemma~\ref{lem:pair-jp} are satisfied.

Lemma~\ref{lem:pair-jp}, applied with \(\sigma_1=\sigma_\Delta=\tau_0\), \(w=0\), \(T=\emptyset\), and \(\sigma_w=\rho_0\), gives
\begin{equation*}
\EE[N_{\mathrm{dep}}(t_1,t_\Delta)]
\le
\binom{m_Z}{t_1}\binom{m_\Delta}{t_\Delta}
(M-1)!!\left(\frac{C_0}{n}\right)^{M/2},
\qquad
M:=k(t_1+t_\Delta)
\end{equation*}
Since \(w=0\), the range condition \(w\le\sigma_w n\) in Lemma~\ref{lem:pair-jp} is satisfied for any positive \(\sigma_w<1\). Taking \(\sigma_w=\rho_0\) makes the lemma's parameter \(\rho\) equal to the displayed \(\rho_0\), and hence makes \(C_{\mathrm{pair}}=C_0\).

Vandermonde's identity \cite[Sec.~5.1, Vandermonde's convolution~(5.27)]{GKP94}
groups the support choices by \(u=t_1+t_\Delta\):
\begin{equation*}
\sum_{(t_1,t_\Delta)\in\mathcal R_{\mathrm{small}}}
\EE[N_{\mathrm{dep}}(t_1,t_\Delta)]
\le
1+\sum_{u=1}^{\lfloor2\tau_0 n\rfloor}
\binom{\alpha_X n}{u}(ku-1)!!
\left(\frac{C_0}{n}\right)^{ku/2}.
\end{equation*}
Here \(m_Z+m_\Delta=m_X=\alpha_X n\), so summing the support choices over \(t_1+t_\Delta=u\) gives
\[
\sum_{t_1+t_\Delta=u}\binom{m_Z}{t_1}\binom{m_\Delta}{t_\Delta}
=\binom{\alpha_X n}{u}.
\]
The initial \(1\) is the zero dependence.

For \(u\ge1\), the binomial and double-factorial bounds give
\begin{equation*}
\binom{\alpha_X n}{u}(ku-1)!!
\left(\frac{C_0}{n}\right)^{ku/2}
\le
\left(\frac{\mathrm{e}\alpha_X n}{u}\right)^u
\sqrt{2}\left(\frac{ku}{\mathrm{e}}\right)^{ku/2}
\left(\frac{C_0}{n}\right)^{ku/2}
\end{equation*}
which simplifies to
\begin{equation*}
\sqrt{2}\left[
\mathrm{e}\alpha_X
\left(\frac{C_0k}{\mathrm{e}}\right)^{k/2}
\left(\frac{u}{n}\right)^{k/2-1}
\right]^u
\end{equation*}

For \((t_1,t_\Delta)\in\mathcal R_{\mathrm{small}}\),
\(u=t_1+t_\Delta\le 2\tau_0 n\), and since \(k\ge4\),
\begin{equation*}
\left(\frac{u}{n}\right)^{k/2-1}
\le
(2\tau_0)^{k/2-1}
=(2\tau_0)^{-1}(2\tau_0)^{k/2}
\end{equation*}
Substituting this inequality into the preceding \(u\)-term bound, each term is bounded by
\begin{equation*}
\sqrt{2}\left[
\mathrm{e}\alpha_X(2\tau_0)^{-1}
\left(\frac{2C_0k\tau_0}{\mathrm{e}}\right)^{k/2}
\right]^u
\end{equation*}

In the balanced range, \(k\ge10\) and
\(\tau_0=2\mathrm{e}\alpha_X 2^{-k/\ln2}=2\alpha_X \mathrm{e}^{1-k}\).
Since \(\rho_0\le k\mathrm{e}^{1-k}<1/10\) for \(k\ge10\), we have
\(C_0<(11/10)/(9/10)^2=110/81<2\), so the base is at most
\[
\frac14(8k)^{k/2}\exp\!\left(k-\frac{k^2}{2}\right),
\]
whose logarithm is negative at \(k=10\), since \(\ln80<5\), and has derivative
\[
\frac12\log(8k)+\frac32-k,
\]
which is negative and decreasing for \(k\ge10\). Hence this base is less than \(1\) for every \(k\ge10\).
Denote the bracketed base in the \(u\)-term bound by \(q_X<1\). Including the \(u=0\) term, the grouped small-support sum is bounded by
\[
1+\sqrt2\sum_{u\ge1}q_X^u=O(1).
\]

When \(j_Z\) is odd, the remaining upper half in \eqref{eq:rank-partial-fold-jp} is \(t_1>m_Z/2\). Put
\[
\bp_Z':=\bp_Z+\ones_{[m_Z]},
\qquad
t_1':=m_Z-t_1.
\]
Since \(A_Z^T\ones_{[m_Z]}=\ones_{[n]}\), the equation
\(A_Z^T\bp_Z+A_\Delta^T\bp_\Delta=0\) is equivalent to
\[
A_Z^T\bp_Z'+A_\Delta^T\bp_\Delta=\ones_{[n]}.
\]
After the \(A_\Delta\)-folding, \(0\le t_\Delta\le m_\Delta/2\) and \(0\le t_1'<m_Z/2\). Let \(N_{\mathrm{odd}}(t_1',t_\Delta)\) count pairs \((\bp_Z',\bp_\Delta)\) of these weights satisfying \(A_Z^T\bp_Z'+A_\Delta^T\bp_\Delta=\ones_{[n]}\). Its one-column polynomial is
\[
g^-_{j_Z,j_\Delta}(x,z)
:=
\frac{(1+x)^{j_Z}(1+z)^{j_\Delta}-(1-x)^{j_Z}(1-z)^{j_\Delta}}{2}.
\]

The polynomial \(g^-_{j_Z,j_\Delta}\) has nonnegative coefficients, since it selects exactly the monomials with odd total column parity. Therefore the coefficient-extraction inequality \([x^{q_1}z^{q_\Delta}]F\le F(x_0,z_0)x_0^{-q_1}z_0^{-q_\Delta}\) applies to powers of \(g^-_{j_Z,j_\Delta}\) for positive trial points, and the endpoint-uniform Stirling estimate from Lemma~\ref{lem:trial-exp-jp} applies in the specialization \(w=0\).

For the subregion \(t_1'>\tau_0 n\) or \(t_\Delta>\tau_0 n\), the odd-syndrome exponent is
\[
\Phi_-^{(0)}(a,b)
:=
\alpha_Z\h(a)+\alpha_\Delta\h(b)-1
+\log_2\!\bigl(1-y_1^{j_Z}y_\Delta^{j_\Delta}\bigr),
\]
where \(a=t_1'/m_Z\), \(b=t_\Delta/m_\Delta\), \(y_1=1-2a\), and \(y_\Delta=1-2b\). On this large-support subregion, either \(a\ge \tau_0/\alpha_Z>0\) or \(b\ge \tau_0/\alpha_\Delta>0\), while the folded domain gives \(0\le a,b\le1/2\). Hence \(0\le y_1,y_\Delta\le1\) and at least one of them is strictly less than \(1\), so
\[
0\le y_1^{j_Z}y_\Delta^{j_\Delta}<1.
\]
The strict upper bound in the last display makes the logarithm in \(\Phi_-^{(0)}\) well-defined, and \(1-x\le 1+x\) gives
\[
\Phi_-^{(0)}(a,b)
\le
\alpha_Z\h(a)+\alpha_\Delta\h(b)-1
+\log_2\!\bigl(1+y_1^{j_Z}y_\Delta^{j_\Delta}\bigr).
\]

The exponent on the right is the plus-case zero-output exponent. In the present subregion \(\max\{\alpha_Za,\alpha_\Delta b\}\ge\tau_0\), the zero-output part of Lemma~\ref{lem:stacked-zero-output-gap-jp} bounds this plus-case exponent by \(-\gamma_0\).
The endpoint-uniform coefficient-extraction estimate in Lemma~\ref{lem:trial-exp-jp}, with \(w=0\), adds only \(O((\log n)/n)\) to \(\frac1n\log_2\EE[N_{\mathrm{odd}}(t_1',t_\Delta)]\), uniformly over the folded large-support subregion.

The zero-output gap and the uniform error bound give, for all sufficiently large \(n\), an upper bound \(2^{-\gamma_0 n/2}\) for each odd-syndrome summand in this subregion. Since there are at most \((m_Z+1)(m_\Delta+1)=O(n^2)\) choices of \((t_1',t_\Delta)\), this part is \(O(n^2)2^{-\gamma_0 n/2}=o(1)\).

On the remaining small subregion \(0\le t_1'\le\tau_0n\) and \(0\le t_\Delta\le\tau_0n\), an all-one syndrome is impossible. Such a syndrome would require at least one active socket in every column, but the number of active sockets is at most
\[
k(t_1'+t_\Delta)\le 2k\tau_0n<n,
\]
because \(2k\tau_0\le4\mathrm{e} k\mathrm{e}^{-k}<1\) for \(k\ge10\). The odd upper half is the union of this zero-contribution small subregion and the large subregion \(t_1'>\tau_0 n\) or \(t_\Delta>\tau_0 n\), whose total contribution is \(o(1)\), so the odd upper half contributes \(o(1)\).

On the lower folded domain, the large subregion contributes \(o(1)\) and the small subregion contributes \(O(1)\), so the lower-half contribution is \(O(1)\). The even-\(j_Z\) complement fold and the odd-\(j_Z\) upper-half bound then give, for both parities,
\begin{equation*}
\EE |\Ker A_X^T| = O(1).
\end{equation*}
This \(O(1)\) bound excludes a linear-dimensional left kernel by Markov's inequality: for any \(\varepsilon>0\),
\begin{equation*}
\PP\!\left[\dim\Ker A_X^T\ge \varepsilon n\right]
=
\PP\!\left[|\Ker A_X^T|\ge 2^{\varepsilon n}\right]
\le
2^{-\varepsilon n}\EE |\Ker A_X^T|
\to 0
\end{equation*}
holds.
Thus \(\dim\Ker A_X^T=o(n)\) in probability. Also,
\begin{equation*}
\rank A_X = m_X-\dim\Ker A_X^T = \alpha_X n + o(n)
\end{equation*}
in probability.

For \(A_Z\), introduce the one-block zero-output enumerator
\[
N_{\mathrm{dep},Z}(t)
:=
\bigl|\{\bp_Z:\wt(\bp_Z)=t,\ A_Z^T\bp_Z=0\}\bigr|
\]
and set \(\tau_{0,Z}:=2\mathrm{e}\alpha_Z2^{-k/\ln2}\). Its exact configuration-model expectation is
\[
\EE N_{\mathrm{dep},Z}(t)
=
\binom{m_Z}{t}
\frac{[u^{kt}]\left(\frac{(1+u)^{j_Z}+(1-u)^{j_Z}}{2}\right)^n}
{\binom{n j_Z}{kt}}.
\]
On the lower-half range \(\tau_{0,Z}\le t/n\le \alpha_Z/2\), write \(a=t/(\alpha_Z n)\). Applying the one-block specialization of the endpoint-uniform coefficient-extraction bound from Lemma~\ref{lem:trial-exp-jp} gives the zero-output exponent
\[
\alpha_Z\h(a)-1+\log_2\!\bigl(1+(1-2a)^{j_Z}\bigr).
\]
Here \(a\in[2\mathrm{e}\,\mathrm{e}^{-k},1/2]\), so Lemma~\ref{lem:scalar-zero-output-gap-jp} with \(L=j_Z\) gives a gap \(\gamma_{0,Z}>0\) on this \(a\)-interval. The endpoint-uniform Stirling estimate used in Lemma~\ref{lem:trial-exp-jp}, specialized to one block, then makes the contribution of \(t>\tau_{0,Z}n\) exponentially small.

For the small range \(0\le t\le\tau_{0,Z}n\), the one-block exposure/pairing estimate bounds the small-range contribution by
\[
1+\sum_{u=1}^{\lfloor\tau_{0,Z}n\rfloor}
\binom{\alpha_Z n}{u}(ku-1)!!
\left(\frac{C_{0,Z}}{n}\right)^{ku/2},
\qquad
C_{0,Z}:=\frac{1+\rho_{0,Z}}{(1-\rho_{0,Z})^2},
\quad
\rho_{0,Z}:=\frac{\tau_{0,Z}}{\alpha_Z}.
\]
Here \(u\) is the row-support size \(t\); since \(k\) is even under the lemma assumptions, \(ku\) is even.

For \(u\ge1\), the binomial and double-factorial bounds give
\[
\binom{\alpha_Z n}{u}(ku-1)!!
\left(\frac{C_{0,Z}}{n}\right)^{ku/2}
\le
\sqrt2\left[
\mathrm{e}\alpha_Z
\left(\frac{C_{0,Z}k}{\mathrm{e}}\right)^{k/2}
\left(\frac{u}{n}\right)^{k/2-1}
\right]^u.
\]
Since \(u\le\tau_{0,Z}n\) and \(k\ge4\), the bracket is at most
\[
q_Z:=
\mathrm{e}\alpha_Z\tau_{0,Z}^{-1}
\left(\frac{C_{0,Z}k\tau_{0,Z}}{\mathrm{e}}\right)^{k/2}
\]
because \((u/n)^{k/2-1}\le \tau_{0,Z}^{-1}\tau_{0,Z}^{k/2}\).
For \(k\ge10\), \(q_Z<1\): \(\rho_{0,Z}=2\mathrm{e}^{1-k}<1/10\), so
\(C_{0,Z}<110/81<2\), and \(q_Z\) is at most
\[
\frac12(2k)^{k/2}\exp\!\left(k-\frac{k^2}{2}\right),
\]
whose logarithm is negative at \(k=10\), since \(\ln20<4\), and has derivative \(\frac12\log(2k)+\frac32-k<0\) for \(k\ge10\). Hence \(q_Z<1\), and the small one-block sum is at most
\[
1+\sqrt2\sum_{u\ge1}q_Z^u=O(1).
\]

If \(j_Z\) is even, the complement symmetry \(A_Z^T\ones_{[m_Z]}=0\) folds the upper half to the lower half.

If \(j_Z\) is odd, put \(\bp_Z':=\bp_Z+\ones_{[m_Z]}\) and \(t':=m_Z-t\). The equation \(A_Z^T\bp_Z=0\) becomes \(A_Z^T\bp_Z'=\ones_{[n]}\). The one-block all-one-syndrome polynomial is
\[
g^-_{j_Z}(x):=\frac{(1+x)^{j_Z}-(1-x)^{j_Z}}{2},
\]
whose coefficients are nonnegative because only odd column parities are retained. Therefore, for every positive trial point \(x_0\),
\([x^q]F(x)\le F(x_0)x_0^{-q}\) applies to powers of \(g^-_{j_Z}\), and the endpoint-uniform Stirling estimate used for the plus-case one-block bound applies to the surrounding binomial factors. With \(a=t'/m_Z\) and \(y=1-2a\), the resulting upper exponent is
\[
\alpha_Z\h(a)-1+\log_2(1-y^{j_Z})
\]
with an \(O((\log n)/n)\) error uniform on the folded upper-half range where the logarithm is finite.

For \(t'>\tau_{0,Z}n\), the folded upper-half variable satisfies \(0\le a\le1/2\) and \(a\ge\tau_{0,Z}/\alpha_Z>0\), so \(0\le y<1\). Thus the odd-syndrome logarithm is well-defined, and \(\log_2(1-y^{j_Z})\le \log_2(1+y^{j_Z})\) bounds the odd-syndrome one-block exponent by the plus-case one-block exponent.

The one-block gap \(\gamma_{0,Z}\) from the lower-half large-range estimate applies after absorbing the uniform \(O((\log n)/n)\) error. Since there are at most \(m_Z+1=O(n)\) possible values of \(t'\), the total contribution of \(t'>\tau_{0,Z}n\) is exponentially small.

For \(0\le t'\le\tau_{0,Z}n\), an all-one syndrome is impossible: it would require at least one active socket in each of the \(n\) columns, whereas \(kt'\le k\tau_{0,Z}n<n\) by \(k\tau_{0,Z}\le2\mathrm{e} k\mathrm{e}^{-k}<1\) for \(k\ge10\). Thus this small upper-half range contributes zero.

In the one-block lower half, the range \(t>\tau_{0,Z}n\) is exponentially small and the range \(0\le t\le\tau_{0,Z}n\) is \(O(1)\). The even-\(j_Z\) complement fold and the odd-\(j_Z\) upper-half estimates then yield \(\EE|\Ker A_Z^T|=O(1)\). For every \(\varepsilon>0\),
\[
\PP[\dim\Ker A_Z^T\ge\varepsilon n]
=
\PP[|\Ker A_Z^T|\ge2^{\varepsilon n}]
\le
2^{-\varepsilon n}\EE|\Ker A_Z^T|
\to0,
\]
so \(\dim\Ker A_Z^T=o(n)\) in probability.
\end{proof}

\begin{proposition}[Sublinear nullity of the square regular map \(B\)]\label{prop:B-rank-jp}
For any fixed \(k\ge 3\), let \(B\) be sampled from the unconditioned square
\((k,k)\)-regular configuration-model ensemble of Definition~\ref{def:family-jp}. Then the
following convergence in probability holds:
\begin{equation*}
\frac1n \dim\Ker B \to 0
\qquad (n\to\infty)
\end{equation*}
Equivalently,
\begin{equation*}
\frac1n \rank B \to 1
\qquad (n\to\infty)
\end{equation*}
also holds in probability.
\end{proposition}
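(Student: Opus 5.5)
We reuse the first-moment method of Lemma~\ref{lem:A2-nullity-jp}, applied to \(|\Ker B|\) (or \(|\Ker B^T|\); since \(B\) is square, \(\dim\Ker B=\dim\Ker B^T\)). Define \(N_B(s):=|\{\bu:\wt(\bu)=s,\ B\bu=0\}|\). By the configuration-model argument of Lemma~\ref{lem:HA-transition-jp} with \(l=0\),
\[
\EE N_B(s)=M_k(s,0)=\binom ns\frac{[z^{ks}]f_+(z,k)^n}{\binom{kn}{ks}}.
\]
It would be enough to show \(\EE|\Ker B|=\sum_s\EE N_B(s)\le 2^{o(n)}\). Then Markov's inequality, as at the end of Lemma~\ref{lem:A2-nullity-jp}, gives \(\PP[\dim\Ker B\ge\varepsilon n]\le2^{-\varepsilon n}\EE|\Ker B|\to0\).

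\textbf{Low weights.} For \(s\le\tau n\) with \(\tau\) small, we use the pairing bound, i.e.\ the one-block version of Lemma~\ref{lem:pair-jp}:
\[
\EE N_B(s)\le\binom ns(ks-1)!!\,(C/n)^{ks/2},
\]
which is at most \(\sqrt2[\mathrm e\,(Ck/\mathrm e)^{k/2}(s/n)^{k/2-1}]^s\). Since \(k\ge3\), we have \(k/2-1>0\), so for \(\tau\) small the base is below \(1\). The sum is then \(O(1)\): it is geometric in the tail, and each fixed \(s\) contributes \(O(n^{-s(k/2-1)})\). When \(k\) is odd, \(ks\) must be even, and odd \(ks\) terms vanish.

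\textbf{Linear weights.} Put \(x=\tau/(1-\tau)\). The trial point bound, computed as in Lemma~\ref{lem:trial-exp-jp}, gives the exponent
\[
\Phi_B(\tau)=\h(\tau)-1+\log_2\bigl(1+(1-2\tau)^k\bigr)
\]
up to a uniform \(O((\log n)/n)\) error. For even \(k\), the symmetry \(\tau\mapsto1-\tau\) (from \(B\ones=0\)) reduces us to \(\tau\le1/2\). For odd \(k\), on \(\tau>1/2\) the term \((1-2\tau)^k\) is negative, so \(\Phi_B\le\h(\tau)-1\le0\) there. On \([\tau_0,1/2]\), Lemma~\ref{lem:half-entropy-jp} (with \(L=k\)) gives \(\Phi_B\le0\) for \(\tau\ge1/k\). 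On \([\tau_0,1/k]\) we would use an argument like Lemma~\ref{lem:scalar-zero-output-gap-jp} with \(\alpha=1\), i.e.\ the bound \(x(\frac1k\ln\frac{\mathrm ek}{x}-1+\frac x2)\) for \(x=k\tau\le1\), to get strict negativity.

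\textbf{Main obstacle.} Here \(\alpha=1\), so Lemma~\ref{lem:scalar-zero-output-gap-jp} does not apply directly. Worse, \(\Phi_B\) actually equals \(0\) at \(\tau=1/2\) and, for \(k=2\), on a whole region, so there is no uniform negative gap. We only need subexponential growth, not \(o(1)\). Every linear-weight term is bounded by \(2^{n\max\Phi_B+O(\log n)}\), and if \(\max_{[\tau_0,1]}\Phi_B\le0\) this is \(2^{O(\log n)}\). Summing \(n\) such terms gives \(\EE|\Ker B|\le n^{O(1)}=2^{o(n)}\), which suffices. So the whole task reduces to showing \(\Phi_B\le0\) on \([\tau_0,1]\) for every \(k\ge3\). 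Lemma~\ref{lem:half-entropy-jp} covers \([1/k,1/2]\). The hardest step is the window \([\tau_0,1/k]\) for small \(k\) (\(k=3,4,\dots\)), where the Lemma~\ref{lem:scalar-zero-output-gap-jp} estimates were stated only for \(k\ge10\). There I would first try the same convexity argument with the split scale \(\tau_0\) chosen \(k\)-dependently small. If that fails, I would fall back on a direct calculus check that \(\h(\tau)+\log_2(1+(1-2\tau)^k)\le1\) on \((0,1/2]\) for \(k\ge3\), via the derivative sign analysis of \(J_L\) in Lemma~\ref{lem:half-entropy-jp}, extended to the full interval \(0\le z<1\).
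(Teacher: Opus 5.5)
Your skeleton matches the paper's: a first moment of $|\Ker B|$, then Markov, where a $2^{o(n)}$ bound suffices because the exponent may touch $0$. However, the step you flag as the main obstacle is a genuine gap, and with your trial point $x=\tau/(1-\tau)$ it cannot be closed for $k=3$.

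\textbf{Where the plan breaks.} Your fallback inequality is false for every $k$. As $\tau\downarrow0$,
\[
\h(\tau)+\log_2\bigl(1+(1-2\tau)^k\bigr)=1+\tau\bigl(\log_2(1/\tau)-(k-1)/\ln2\bigr)+O(\tau^2),
\]
which exceeds $1$. So $\Phi_B>0$ below a scale of roughly $\mathrm e^{1-k}$, which is why the paper's split scales have that order. Your cutoff $\tau_0$ must therefore lie above the positivity region of $\Phi_B$ and still inside the range where the pairing base is below $1$. Choosing it ``$k$-dependently small'' does not help.

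For $k=3$ these two ranges do not overlap:
\begin{itemize}
\item Numerically, $\Phi_B(\tau)>0$ on all of $(0,0.2]$; for example $\Phi_B(0.15)\approx0.035$ and $\Phi_B(0.2)\approx0.004$. It first turns negative near $0.207$.
\item Your pairing base $\mathrm e(3C/\mathrm e)^{3/2}\tau^{1/2}$ is below $1$ only for $\tau\lesssim0.06$.
\item Even the idealized pairing exponent $\h(\tau)+\tfrac{3\tau}{2}\log_2(3\tau/\mathrm e)$, with $C=1$, is negative only up to $\tau\approx0.11$.
\end{itemize}
Input weights in roughly $[0.06n,0.2n]$ are therefore controlled by neither estimate, so the proposal does not prove the statement for $k=3$, which is in its scope. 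For $k=4,5$ an overlap does exist, but it needs explicit numerics. Your convexity bound $x[\frac1k\ln\frac{\mathrm ek}{x}-1+\frac x2]$ is already positive at $x=1$ when $k\le5$.

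\textbf{The missing idea: a better trial point.} The true exponent is nonpositive; only $x=\tau/(1-\tau)$ is too crude. The paper sets $y:=(\delta/(1-\delta))^{1/k}$ and takes $x:=y^{k-1}$. Using the identity $\h(\delta)=\log_2(1+y^k)-k\delta\log_2 y$, the exponent of the expected number of weight-$w$ solutions of $B\bp=0$ (and of $B\bp=\ones_{[n]}$) collapses, uniformly for $1\le w\le n/2$, to
\[
\Psi_k(y)=\log_2\frac{(1+y^{k-1})^k+(1-y^{k-1})^k}{2(1+y^k)^{k-1}}\le0 .
\]
This follows from the scalar Clarkson inequality with exponents $k$ and $k/(k-1)$; for $k=3$ it reduces to $2y^3(1-y)^2(y+2)\ge0$. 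The result is a bound $n^{C_k}$ at every weight at once, with no pairing bound and no split scale. The upper half is then folded by complementation, onto the $B\bp=\ones_{[n]}$ count when $k$ is odd.

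\textbf{What is fine.} Your odd-$k$ treatment of $\tau>1/2$ is valid: the sign of $(1-2\tau)^k$ forces $\Phi_B\le\h(\tau)-1\le0$, and this could replace that fold. One minor slip: $\EE N_B(s)=\binom ns M_k(s,0)$, not $M_k(s,0)$, although the formula you display is the correct one.
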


\begin{proof}
For every realization, \(\rank B=n-\dim\Ker B\), so the rank statement is
equivalent to \(\dim\Ker B=o(n)\). The parity of each row sum is the parity of its
socket degree, even when parallel edges cancel in the parity-incidence matrix. Hence,
when \(k\) is even,
\begin{equation*}
B\ones_{[n]}=0
\end{equation*}
holds identically, and \(B\) is singular for every finite \(n\).
This deterministic contribution has dimension one and therefore does not affect
the target estimate \(\dim\Ker B=o(n)\).

We now give the required first-moment estimate directly; in particular, no
rectangular-ensemble rate theorem is being applied to the square case. Let
\[
A_B^+(w):=|\{\bp:\wt(\bp)=w,\ B\bp=0\}|,
\qquad
A_B^-(w):=|\{\bp:\wt(\bp)=w,\ B\bp=\ones_{[n]}\}|,
\]
and define the even- and odd-parity row polynomials
\[
f_\pm(z,k):=\frac{(1+z)^k\pm(1-z)^k}{2}.
\]
The socket-based configuration count gives the exact expectations
\begin{equation}\label{eq:B-pm-enum-jp}
\EE A_B^\pm(w)
=
\binom{n}{w}
\frac{[z^{kw}]\,f_\pm(z,k)^n}{\binom{kn}{kw}}.
\end{equation}
Indeed, after a support of \(w\) columns is fixed, its \(kw\) active sockets form
a uniform \(kw\)-subset of the \(kn\) row-side sockets; \(f_+^n\) selects even
active degree in every row and \(f_-^n\) selects odd active degree in every row.

We bound both signs uniformly for \(1\le w\le n/2\). Put
\[
\delta:=\frac wn,
\qquad
y:=\left(\frac{\delta}{1-\delta}\right)^{1/k}\in(0,1],
\qquad
x:=y^{k-1}\in(0,1].
\]
Because \(f_-(x,k)\le f_+(x,k)\) on \(0\le x\le1\), nonnegativity of the
coefficients and the trial-point bound give
\[
[z^{kw}]f_\pm(z,k)^n
\le f_+(x,k)^n x^{-kw},
\]
where \(x=y^{k-1}\) is the chosen positive trial point. Using the endpoint-uniform estimate
\(\log_2\binom{N}{q}=N\h(q/N)+O(\log N)\) in
\eqref{eq:B-pm-enum-jp}, and the identity
\[
\h(\delta)=\log_2(1+y^k)-k\delta\log_2 y,
\]
we obtain, uniformly for \(1\le w\le n/2\),
\begin{align*}
\frac1n\log_2\EE A_B^\pm(w)
&\le
\Psi_k(y)+O\!\left(\frac{\log n}{n}\right),\\
\Psi_k(y)
&:=
\log_2\!\left(
\frac{(1+y^{k-1})^k+(1-y^{k-1})^k}
{2(1+y^k)^{k-1}}
\right).
\end{align*}
Clarkson's scalar inequality~\cite{Clark36}, with \(p=k\), conjugate exponent
\(q=k/(k-1)\), and the pair \((1,y^{k-1})\), states that
\[
\frac{(1+y^{k-1})^k+(1-y^{k-1})^k}{2}
\le
\left(1+y^k\right)^{k-1}.
\]
Consequently \(\Psi_k(y)\le0\) throughout \(0\le y\le1\). The constant in the
uniform Stirling remainder depends only on fixed \(k\), so there is a constant
\(C_k\) such that
\begin{equation}\label{eq:B-pm-poly-jp}
\EE A_B^\pm(w)\le n^{C_k}
\qquad(1\le w\le n/2).
\end{equation}
The zero-weight values are \(A_B^+(0)=1\) and \(A_B^-(0)=0\).

It remains only to fold weights above \(n/2\). For
\(\bq:=\bp+\ones_{[n]}\), one has \(\wt(\bq)=n-\wt(\bp)\) and
\[
B\bq=B\bp+(k\bmod2)\ones_{[n]}.
\]
Thus complementation maps upper-half kernel words to lower-half words counted by
\(A_B^+\) when \(k\) is even and by \(A_B^-\) when \(k\) is odd. It follows from
\eqref{eq:B-pm-poly-jp} that
\[
\EE|\Ker B|
=\sum_{w=0}^n\EE A_B^+(w)
\le 2+2\sum_{w=1}^{\lfloor n/2\rfloor}
\bigl(\EE A_B^+(w)+\EE A_B^-(w)\bigr)
=n^{O(1)}=2^{o(n)}.
\]
Since \(|\Ker B|=2^{\dim\Ker B}\), Markov's inequality gives, for any fixed \(\xi>0\),
\begin{equation*}
\PP\!\left[\frac{\dim\Ker B}{n}\ge \xi\right]
=
\PP\!\left[|\Ker B|\ge 2^{\xi n}\right]
\le
2^{-\xi n}\EE|\Ker B|
\to 0.
\end{equation*}
Equivalently,
\begin{equation*}
\frac1n\dim\Ker B\to 0
\end{equation*}
in probability.
Also,
\begin{equation*}
\rank B=n-\dim\Ker B
\end{equation*}
implies
\begin{equation*}
\frac1n\rank B\to 1
\end{equation*}
in probability.
\end{proof}

\begin{proof}[Proof of Theorem~\ref{thm:actual-rates-jp}]
Lemma~\ref{lem:A2-nullity-jp} implies
\begin{equation*}
\rank A_Z = \alpha_Z n + o(n),
\qquad
\rank A_X = \alpha_X n + o(n)
\end{equation*}
in probability.
Moreover, Proposition~\ref{prop:B-rank-jp} gives
\begin{equation*}
\dim\Ker B = o(n)
\end{equation*}
in probability.
In the exact dimension formulas of Proposition~\ref{prop:actual-rates-jp},
\[
0\le L_X\le L_Z\le \dim\Ker B.
\]
Fix \(\eta>0\). With probability tending to one,
\[
|\rank A_Z-\alpha_Z n|\le \frac{\eta n}{4},\qquad
|\rank A_X-\alpha_X n|\le \frac{\eta n}{4},\qquad
\dim\Ker B\le \frac{\eta n}{4},
\]
because the complement is contained in the union of the three corresponding failure events, each of which has probability \(o(1)\). On this high-probability event, \(0\le L_X\le L_Z\le\eta n/4\). Substituting these bounds into Proposition~\ref{prop:actual-rates-jp} gives
\begin{equation*}
|R_Z-\alpha_X|\le \frac{\eta}{2},
\qquad
|R_X-\alpha_X|\le \frac{\eta}{2},
\qquad
|R_Q-(2\alpha_X-1)|\le \frac{3\eta}{4}.
\end{equation*}
Since \(\eta>0\) is arbitrary, \(R_Z\to\alpha_X=R_Z^{\mathrm{des}}\), \(R_X\to\alpha_X=R_X^{\mathrm{des}}\), and \(R_Q\to2\alpha_X-1=(j_X-j_Z)/k=R_Q^{\mathrm{des}}\) in probability.
\end{proof}

\engappendixsection{Analytic Proof of Theorem~\ref{thm:HA-gv-jp}}{app:finite-gv-ha-jp}

Fix \((j_Z,j_X,k)\in\mathcal T_{\mathrm{GV}}\), and put
\[
\alpha:=\frac{j_Z}{k},\qquad
d:=\h^{-1}(\alpha),\qquad
\tau_0:=\frac{5}{2k^2},\qquad
u_k:=1-2\tau_0=1-\frac5{k^2}.
\]
Thus \(10\le k\le30\), \(4\le j_Z\le10\), and \(j_Z<k/2\).
The proof below reduces the finite-window estimate to a one-dimensional
analytic inequality.

For \(0\le u<1\), define
\[
P(u):=\frac{(1+u)\ln(1+u)+(1-u)\ln(1-u)}2,
\qquad v:=u^k,
\]
and, for \(v>0\), set
\[
c(v):=\frac{\ln(1+v)}{\ln2},
\qquad
K_k(u):=
P(u)-\ln(1+v)
-c(v)\ln\!\left(1+\exp\!\left[-\frac{2\operatorname{artanh}v}{c(v)}\right]\right).
\]
The continuous value at \(u=0\) is \(K_k(0)=0\).

\begin{lemma}[Finite-window scalar HA inequality]\label{lem:ha-scalar-window-jp}
For every even integer \(k\) with \(10\le k\le30\),
\[
K_k(u)\ge0\qquad(0\le u\le u_k).
\]
The inequality is strict for \(u>0\).  In addition, for the integer
parameters in \(\mathcal T_{\mathrm{GV}}\), if
\[
p_k:=\frac{1-u_k^k}{2},
\]
then
\[
\h(p_k)-\frac{j_Z}{k}>\frac1{50}.
\]
\end{lemma}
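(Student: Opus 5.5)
The plan is to treat the two assertions separately. Both reduce to finitely many exact rational comparisons, combined with an analytic argument near \(u=0\).

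\emph{Structure of \(K_k\).} Note that \(P(u)=\ln2-H\bigl(\tfrac{1-u}{2}\bigr)\) in natural units, so \(P\) is increasing on \([0,1)\) with \(P(u)\ge u^2/2\). Write the subtracted part as
\[
G(v):=\ln(1+v)+c(v)\ln\!\Bigl(1+\exp\!\bigl[-2\operatorname{artanh}v/c(v)\bigr]\Bigr),
\]
so that \(K_k(u)=P(u)-G(u^k)\). I will not try to prove that \(G\) is monotone. Instead I will use only the monotonicity of its ingredients: \(\ln(1+v)\), \(c(v)\) and \(\operatorname{artanh}v\) are all increasing in \(v\), and \(\ln(1+e^{-x})\) is decreasing in \(x\). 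Hence for \(v_{\mathrm{lo}}\le v\le v_{\mathrm{hi}}\),
\[
G(v)\le \ln(1+v_{\mathrm{hi}})+c(v_{\mathrm{hi}})\ln\!\Bigl(1+\exp\!\bigl[-2\operatorname{artanh}(v_{\mathrm{lo}})/c(v_{\mathrm{hi}})\bigr]\Bigr).
\]
On a subinterval \([u_i,u_{i+1}]\) this gives
\[
K_k(u)\ge P(u_i)-G^{\mathrm{up}}\bigl(u_i^k,u_{i+1}^k\bigr),
\]
where \(G^{\mathrm{up}}(v_{\mathrm{lo}},v_{\mathrm{hi}})\) denotes the right-hand side of the previous display. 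This is a single scalar comparison at rational endpoints.

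\emph{Near zero.} Since \(\ln(1+v)\le v\), \(c(v)\le v/\ln2\), and the last logarithm is at most \(\ln2\), we get \(G(v)\le 2v\). Therefore
\[
K_k(u)\ge \tfrac{u^2}{2}-2u^k>0
\qquad\text{for } 0<u<4^{-1/(k-2)},
\]
which handles a neighbourhood of \(0\) analytically, with strict inequality. (A finer expansion gives \(G(v)\approx v\bigl(1+\ln(5/4)/\ln2\bigr)\), but the crude constant suffices here.)

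\emph{Away from zero.} On the remaining range \([4^{-1/(k-2)},u_k]\), I partition into rational subintervals and verify each comparison above, for each of the eleven even \(k\in[10,30]\). The verification uses rational upper and lower bounds on \(\ln\), \(\exp\) and \(\operatorname{artanh}\) from truncated series with explicit remainders, as the paper does elsewhere. Each subinterval inequality is strict, so \(K_k>0\) on \((0,u_k]\).

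\emph{Main obstacle.} The hardest part is near the right endpoint \(u_k=1-5/k^2\). There \(v=u_k^k\approx e^{-5/k}\) is close to \(1\), \(P(u)\) approaches \(\ln2\), and the margin is smallest. I expect this to require a fine mesh and tight log bounds there, especially for \(k=10\), where \(v\) is farthest from \(1\) relative to \(u\). If the margin turns out too thin, I would first refine the subdivision geometrically toward \(u_k\) before looking for a sharper analytic bound on \(G\).

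\emph{The entropy gap.} Since \(p_k\) depends only on \(k\) and \(\h\) is increasing on \([0,1/2]\), it suffices for each even \(k\) to take the worst case \(j_Z=\min\{10,k/2-1\}\). I would bound \(p_k\) from below using an exact rational lower bound on \(1-(1-5/k^2)^k\), and then lower-bound \(\h(p_k)\) with rational log bounds.

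The cheap estimate \((1-5/k^2)^k\le e^{-5/k}\) is not enough. At \((j_Z,k)=(10,22)\) it yields a gap of only about \(0.0196\), which is below \(1/50\). The exact power gives \(p_{22}\approx0.1021\) and a gap of about \(0.0212\). So this case, which is the tight one, must be checked with the exact rational value \((1-5/484)^{22}\). Other cases, for example \((9,20)\) and \((10,24)\), have margins of at least about \(0.03\).
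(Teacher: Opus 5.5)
Your proposal is correct, but it proves the first assertion by a genuinely different route from the paper. For this lemma the paper does not subdivide the interval. It first rewrites the nested exponential exactly as $\exp[-2\operatorname{artanh}v/c(v)]=\tfrac12(1-v)e^{-G(v)}$, where $G(v)=-\ln(1-v)\,(\ln2-\ln(1+v))/\ln(1+v)$ is shown to be strictly decreasing. Two rational evaluations of $G$ then bound this term by $C_k(1-v)$, and $\ln(1+x)\le x$ gives $K_k(u)\ge P(u)-Q_k(u^k)$ with the elementary function $Q_k(v)=\ln(1+v)[1+C_k(1-v)/\ln2]$. Finally, a log-elasticity argument ($E_P$ increasing, $E_{Q,k}$ decreasing) shows that $P(u)/Q_k(u^k)$ decreases on $(0,u_k]$. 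So a couple of exact comparisons at $u=u_k$ for each $k$ finish the proof.

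You instead use only three ingredients:
\begin{itemize}
\item the trivial monotonicity of $\ln(1+v)$, $c$, $\operatorname{artanh}$ and $\ln(1+e^{-x})$;
\item the correct near-zero bound $K_k(u)\ge u^2/2-2u^k$;
\item a certified rational mesh on the rest of the range, starting at a rational $u_0$ with $u_0^{k-2}<1/4$.
\end{itemize}
This is sound. On $[u_0,u_k]$ all ingredients are uniformly continuous and $K_k$ is strictly positive, so a fine enough mesh with precise enough rational bounds must close. The paper's route gives a tiny certificate and explains why the endpoint is the worst point. Yours is more robust: it needs no sign analysis and no elasticity lemma, and it would transfer to other degree windows unchanged. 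The price is a much larger verification, which you promise but do not exhibit.

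One correction to your expectation: the endpoint margin $K_k(u_k)$ decreases with $k$, from roughly $0.024$ at $k=10$ to roughly $0.009$ at $k=30$. Your interval slack also grows with $k$, through the factor $ku^{k-1}$. So the finest mesh, with steps of order $10^{-4}$ near $u_k$, is needed for the largest $k$, not for $k=10$.

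For the entropy gap you do what the paper does: an exact rational check at $j_{\max}(k)=\min\{10,k/2-1\}$. Your observation that the bound $e^{-5/k}$ fails at $(10,22)$, while the exact power gives a gap of about $0.021$, agrees with the paper's table.
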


\begin{proof}
We first remove the nested exponential in \(K_k\).  Put
\(A_0:=\ln2\), \(\ell:=\ln(1+v)\), and \(m:=-\ln(1-v)\).  The exponent
in the definition of \(K_k\) satisfies the exact identity
\[
 \exp\!\left[-\frac{2\operatorname{artanh}v}{\ell/A_0}\right]
 =\frac12(1-v)^{A_0/\ell}.
\]
The ratio of the right-hand side to \(1-v\) is
\(\frac12\exp[-G(v)]\), where
\[
 G(v):=\frac{m(A_0-\ell)}{\ell}.
\]
This function is strictly decreasing on \((0,1)\).  Here is the complete
sign calculation.  Differentiation gives
\[
 G'(v)=-\frac{D(v)}{\ell^2(1-v^2)},\qquad
 D(v):=A_0m(1-v)-\ell(A_0-\ell)(1+v).
\]
Moreover,
\[
 D'(v)=(2-A_0)\ell+\ell^2-A_0m,
 \qquad
 D''(v)=
 \frac{2[1-A_0-v+(1-v)\ell]}{1-v^2}.
\]
The bracketed expression in the numerator has derivative
\(-\ell-2v/(1+v)<0\), is positive at \(v=0\), and is negative as
\(v\uparrow1\).  Thus \(D''\) changes sign once, from positive to
negative.  Since \(D'(0)=0\), \(D'(v)\to-\infty\), and
\(D(0)=D(1-)=0\), the function \(D'\) is first positive and then
negative, with exactly one zero.  Thus \(D\) first increases and then
decreases back to zero, so \(D(v)>0\), proving the claim.

At two rational points, exact logarithm bounds give
\[
 G(3/5)-\ln(3/2)>\frac{59}{2000},
 \qquad
 G(17/20)-\ln(5/4)>\frac{43}{2500}.        \tag{HA--1}
\]
For \(k=10\), \(v\le(19/20)^{10}<3/5\).  For \(12\le k\le30\),
\[
 v\le \mathrm e^{-5/k}\le \mathrm e^{-1/6}<\frac{17}{20};
\]
the last inequality already follows from
\(\mathrm e^{1/6}>1+1/6+1/72=85/72>20/17\).  Hence (HA--1) implies
\[
 \exp\!\left[-\frac{2\operatorname{artanh}v}{\ell/A_0}\right]
 \le C_k(1-v),\qquad
 C_{10}:=\frac13,\quad C_k:=\frac25\ (k\ge12).
\]
Using \(\ln(1+x)\le x\), we therefore obtain
\[
 K_k(u)\ge P(u)-Q_k(u^k),\qquad
 Q_k(v):=\ell\left[1+\frac{C_k}{A_0}(1-v)\right].       \tag{HA--2}
\]

It remains only to compare the two functions in (HA--2).  Define their
logarithmic elasticities
\[
 E_P(u):=\frac{uP'(u)}{P(u)},\qquad
 E_{Q,k}(v):=\frac{vQ_k'(v)}{Q_k(v)}.
\]
The series
\[
 P(u)=\sum_{r\ge1}\frac{u^{2r}}{2r(2r-1)}
\]
shows that \(E_P\) is strictly increasing: its derivative with respect
to \(\ln u\) is the variance of the integers \(2r\) under the positive
weights in this series.  On writing \(\gamma_k=C_k/A_0\), direct
differentiation gives
\[
 E_{Q,k}(v)=
 \frac{v}{(1+v)\ln(1+v)}
 -\frac{\gamma_kv}{1+\gamma_k(1-v)}.
\]
The first term is strictly decreasing because its derivative has
numerator \(\ln(1+v)-v<0\), while the subtracted term is strictly
increasing.  Thus \(E_{Q,k}\) is strictly decreasing.

The following table records lower bounds at \(u=u_k\), \(v=u_k^k\).
The last column uses
\(j_{\max}(k):=\min\{10,\lfloor(k-1)/2\rfloor\}\).
\[
\begin{array}{c|ccc}
k&P(u_k)-Q_k(v)&kE_{Q,k}(v)-E_P(u_k)&
 \h(p_k)-j_{\max}(k)/k\\ \hline
10&.0164&2.54&.323\\
12&.0017&2.43&.247\\
14&.0037&2.69&.185\\
16&.0049&2.94&.134\\
18&.0056&3.19&.090\\
20&.0061&3.43&.053\\
22&.0063&3.67&.021\\
24&.0065&3.91&.034\\
26&.0065&4.15&.044\\
28&.0065&4.39&.052\\
30&.0065&4.63&.058
\end{array}                                                     \tag{HA--3}
\]
The displayed decimals are conservative truncations for readability, not
floating-point premises.  The exact comparisons established for every row
of (HA--3) are
\[
 P(u_k)-Q_k(u_k^k)>\frac1{1000},\qquad
 kE_{Q,k}(u_k^k)-E_P(u_k)>2,
 \qquad
 \h(p_k)-\frac{j_{\max}(k)}k>\frac1{50}.       \tag{HA--4}
\]
To make (HA--1), (HA--3), and (HA--4) reproducible, write each positive
rational argument as \(x=2^py\), where \(p\in\mathbb Z\) and
\(1\le y<2\), and use
\[
 \ln y=2\sum_{r=0}^{N-1}\frac{z^{2r+1}}{2r+1}+R_N,
 \quad z=\frac{y-1}{y+1},\qquad
 0\le R_N\le\frac{2z^{2N+1}}{(2N+1)(1-z^2)},
\]
with strict inequalities when \(y>1\).  Then
\(\ln x=p\ln2+\ln y\), with the same series at \(z=1/3\) for
\(\ln2\).  Taking \(N=72\) gives conservative decimal truncations of rigorous
rational bounds.  The
supplementary exact-arithmetic script performs precisely these finite
fraction operations.

By the monotonicities just proved, for \(0<u\le u_k\),
\[
 \frac{d}{d\ln u}\ln\frac{P(u)}{Q_k(u^k)}
 =E_P(u)-kE_{Q,k}(u^k)<0.
\]
The first inequality in (HA--4) then gives \(P(u)>Q_k(u^k)\).  Equation (HA--2)
proves \(K_k(u)>0\).  Finally, the third column of (HA--3) applies to every
admissible \(j_Z\le j_{\max}(k)\) and has minimum larger than \(1/50\).
This proves both assertions.
\end{proof}

\begin{lemma}[Analytic HA exponent bound]\label{lem:finite-gv-ha-neg-jp}
For every \(0<\delta<d\), there is \(\eta_Z(\delta)>0\) such that
\[
\h(\omega)+F_Z(\tau,\omega)\le-\eta_Z(\delta)
\qquad
\left(\tau_0\le\tau\le\frac12, 0\le\omega\le\delta\right).
\]
\end{lemma}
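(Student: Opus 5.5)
The plan is to make the $\omega$-dependence of $\h(\omega)+F_Z(\tau,\omega)$ explicit and reduce everything to the scalar inequality $K_k\ge0$ of Lemma~\ref{lem:ha-scalar-window-jp}. Write $T=(1-2\tau)^k$ and $p(\tau):=(1-T)/2$. From \eqref{eq:Wo-trial-jp} and \eqref{eq:FZ-jp},
\[
G(\tau,\omega):=\h(\omega)+F_Z(\tau,\omega)
=\h(\tau)-\alpha+\alpha\log_2(1+T)+\h(\omega)+\omega\log_2\frac{1-T}{2}+(1-\omega)\log_2\frac{1+T}{2}.
\]
For fixed $\tau$ this is concave in $\omega$, with
\[
\partial_\omega G=\log_2\frac{(1-\omega)\,p(\tau)}{\omega\,(1-p(\tau))}.
\]
Since $p(\tau)$ is increasing on $[\tau_0,1/2]$, we have $p(\tau)\ge p(\tau_0)=p_k$. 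The last assertion of Lemma~\ref{lem:ha-scalar-window-jp}, $\h(p_k)>\alpha=\h(d)$, gives $p_k>d$.

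\textbf{Step 1: margin in $\omega$.} For all $\tau\in[\tau_0,1/2]$ and $0\le\omega\le d$,
\[
\partial_\omega G\ \ge\ c:=\log_2\frac{(1-d)\,p_k}{d\,(1-p_k)}>0 .
\]
Integrating from $\omega$ up to $d$ then gives
\[
G(\tau,\omega)\le G(\tau,d)-c\,(d-\omega)\le G(\tau,d)-c\,(d-\delta)
\qquad (0\le\omega\le\delta).
\]
At $\omega=0$, where $\h'$ blows up, the derivative bound still holds as $+\infty\ge c$, so nothing special is needed there. It therefore suffices to prove $G(\tau,d)\le0$ on $[\tau_0,1/2]$; one can then take $\eta_Z(\delta):=c(d-\delta)$.

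\textbf{Step 2: identify $G(\tau,d)$ with $K_k$.} Use $\h(d)=\alpha$ and substitute $u=1-2\tau\in[0,u_k]$ and $v=u^k=T$, so that $(\ln2)(1-\h(\tau))=P(u)$. This yields
\[
(\ln2)\,G(\tau,d)=-P(u)+\ln(1+v)+\bigl[(\ln2)\,\h(d)\,c(v)-2d\operatorname{artanh}v\bigr],
\]
using $(\alpha-d)\ln(1+v)+d\ln(1-v)=\alpha\ln(1+v)-2d\operatorname{artanh}v$ and $\ln(1+v)=(\ln2)\,c(v)$.

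Next bound the bracket by its supremum over all $d\in[0,1]$. This is a Legendre-conjugate computation:
\[
\sup_{d}\bigl[c\,H(d)-s\,d\bigr]=c\ln\bigl(1+e^{-s/c}\bigr),
\qquad H=(\ln2)\h,
\]
with the maximizer $d=1/(1+e^{s/c})$. Applied with $s=2\operatorname{artanh}v$ and $c=c(v)$, this gives
\[
(\ln2)\,G(\tau,d)\le -K_k(u).
\]
Lemma~\ref{lem:ha-scalar-window-jp} then gives $G(\tau,d)\le0$ on the whole interval. The endpoint $u=0$ ($\tau=1/2$, $v=0$) is covered by continuity, since $K_k(0)=0$, so only nonstrict positivity is needed there.

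\textbf{Main obstacle.} The main point is Step 2: recognizing that the nested-exponential term in $K_k$ is exactly the conjugate of the entropy in $d$. Once that is seen, the dependence on $j_Z$ disappears and a single scalar inequality per $k$ suffices. The uniform margin does not come from strict positivity of $K_k$, which fails at $u=0$. It comes from the $\omega$-slope $c$ in Step 1, and that is exactly where the $\h(p_k)-j_Z/k>1/50$ part of Lemma~\ref{lem:ha-scalar-window-jp} is used.
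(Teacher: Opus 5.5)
Your proof is correct and follows the paper's argument essentially step for step. Both use monotonicity in $\omega$ from $p(\tau)\ge p_k>d$, then the identity $-(\ln2)\bigl(\h(d)+F_Z(\tau,d)\bigr)=P(u)-(1+\alpha)\ln(1+v)+2d\operatorname{artanh}v$, then the entropy-conjugacy bound that reduces it to $K_k(u)\ge0$. The only difference is how the uniform margin is obtained: you extract it explicitly as $\eta_Z(\delta)=c\,(d-\delta)$ from a lower bound on the $\omega$-slope, while the paper gets it from strict pointwise negativity plus continuity on the compact rectangle. You also reuse the letter $c$ for both the slope constant and $c(v)$, which you should rename.
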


\begin{proof}
Write \(T=(1-2\tau)^k\) and \(p=(1-T)/2\).  The definition of
\(F_Z\) in \eqref{eq:FZ-jp} gives the exact identity
\[
\h(\omega)+F_Z(\tau,\omega)
=W_o^{\mathrm{tr}}(\tau)-D_2(\omega\|p),
\]
where \(D_2\) is binary relative entropy in bits.  On
\(\tau_0\le\tau\le1/2\), one has \(p\ge p_k\).  Since
\(0<p_k<1/2\), the inequality
\(\h(p_k)>\alpha=\h(d)\) in Lemma~\ref{lem:ha-scalar-window-jp} and
monotonicity of \(\h\) on \([0,1/2]\) give \(p_k>d\).  Therefore the last display is increasing as a function
of \(\omega\) on \([0,d]\), and it suffices first to put \(\omega=d\).

Set \(u=1-2\tau\), \(v=u^k\), and write
\(H(x)=(\ln2)\h(x)\).  Since \(H(d)=\alpha\ln2\), direct simplification
gives
\[
-(\ln2)\bigl(\h(d)+F_Z(\tau,d)\bigr)
=P(u)-(1+\alpha)\ln(1+v)+2d\operatorname{artanh}v.
\]
At \(u=0\), equivalently \(\tau=1/2\), the exponent
\(\h(d)+F_Z(\tau,d)\) equals \(\h(d)-\alpha=0\) directly.
For \(u>0\), put \(c=\ln(1+v)/\ln2>0\). The entropy conjugacy formula
\[
\sup_{0\le x\le1}\{cH(x)-2x\operatorname{artanh}v\}
=c\ln\!\left(1+\exp\!\left[-\frac{2\operatorname{artanh}v}{c}\right]\right)
\]
shows that the right-hand side of the preceding display is at least
\(K_k(u)\), and hence is nonnegative by
Lemma~\ref{lem:ha-scalar-window-jp}.  Thus
\(\h(d)+F_Z(\tau,d)\le0\).

For \(0<\omega<p\),
\[
\frac{\partial}{\partial\omega}[-D_2(\omega\|p)]
=\log_2\frac{p(1-\omega)}{\omega(1-p)}>0.
\]
Continuity at \(\omega=0\) therefore gives strict increase of
\(-D_2(\omega\|p)\) on \([0,d]\). Consequently, for
\(0\le\omega\le\delta<d<p\),
\(\h(\omega)+F_Z(\tau,\omega)<0\) throughout the displayed compact
rectangle.  Continuity then supplies a uniform margin
\(\eta_Z(\delta)>0\).
\end{proof}

\begin{proof}[Proof of Theorem~\ref{thm:HA-gv-jp}]
Fix \(0<\delta<d\).  For \(1\le s\le\tau_0n\), define
\[
T_{Z,n}(s):=
\begin{cases}
\binom ns(j_Zs-1)!!\,m_Z^{-j_Zs/2},&j_Zs\ \text{even},\\
0,&j_Zs\ \text{odd},
\end{cases}
\qquad m_Z=\alpha n.
\]
The socket-pairing argument used in
Lemma~\ref{lem:HA-inputs-jp} gives \(N_o(s)\le T_{Z,n}(s)\).  Moreover,
\[
T_{Z,n}(s)\le\sqrt2\,\lambda_Z^s,
\qquad
\lambda_Z:=
\mathrm e^{\,1-j_Z/2}k^{j_Z/2}\tau_0^{j_Z/2-1}.
\]
After substituting \(\tau_0=5/(2k^2)\), the value at \(j_Z=4\) is
\(5/(2\mathrm e)\), and increasing \(j_Z\) multiplies the expression by
at most \(\sqrt{5/(2\mathrm e k)}<1\).  Hence
\[
\lambda_Z\le\frac5{2\mathrm e}<1.
\]
For every fixed \(s\), the same pairing expression is
\(O(n^{-s(j_Z/2-1)})\).  Splitting the geometric tail from a fixed
initial segment therefore gives
\[
\sum_{1\le s\le\tau_0n}N_o(s)=o(1).
\]
Because \(k\) is even, complementing an outer word gives
\(N_o(s)=N_o(n-s)\), so the corresponding high-input sum is also \(o(1)\).

In \eqref{eq:HA-enum-jp}, split the input weights into these two tails and
the complementary range \(\tau_0n\le s\le(1-\tau_0)n\).  Since
\(\sum_lM_k(s,l)=1\), the two tails contribute \(o(1)\) after summing over
all visible weights.  On the complementary range, fold \(s>n/2\) to
\(n-s\).  The identities
\[
z^kf_+(1/z,k)=f_+(z,k),\qquad z^kf_-(1/z,k)=f_-(z,k)
\]
give \(M_k(s,l)=M_k(n-s,l)\).  Each folded index has at most two
preimages, so this operation costs only a factor of two.
Lemma~\ref{lem:finite-gv-ha-neg-jp}, together with the endpoint-uniform
Stirling error underlying \eqref{eq:FZ-jp}, bounds each folded summand by
a polynomial factor times \(2^{-\eta_Z(\delta)n}\).  There are at most
\((n+1)^2\) pairs \((s,l)\), so, after absorbing the polynomial factors
for large \(n\), the complementary contribution is at most
\(\operatorname{poly}(n)2^{-\eta_Z(\delta)n/2}=o(1)\).

Thus
\[
\sum_{1\le l\le\delta n}\EE[A_{\CZ}(l)]=o(1).
\]
Markov's inequality yields
\(\PP[d(\CZ)\le\delta n]\to0\), as required.
\end{proof}

\engappendixsection{Analytic Proof of Theorem~\ref{thm:MN-gv-jp}}{app:finite-gv-mn-jp}

Fix \((j_Z,j_X,k)\in\mathcal T_{\mathrm{GV}}\), set
\[
j_\Delta:=k-2j_Z,\qquad L:=j_Z+j_\Delta=j_X=k-j_Z,
\]
and write
\[
\alpha_Z:=\frac{j_Z}{k},\qquad
\alpha_\Delta:=\frac{j_\Delta}{k},\qquad
A:=\alpha_X=\frac Lk=1-\alpha_Z.
\]
Then \(10\le k\le30\), \(L\ge6\), and \(j_\Delta\ge2\).  Put
\[
\beta:=\frac1{10},\qquad
\omega_*:=\frac{1-(A/2)^{1/k}}2,
\]
so that \((1-2\omega_*)^k=A/2\).  The same cutoff \(\beta\) is used for
every degree triple; no multivariate
box subdivision is used in the argument below.

For \(0\le q,\omega\le1/2\), let \(H(x):=(\ln2)\h(x)\) and define
\[
\psi(q,\omega):=
A H(q)+H(\omega)-\ln2
+\ln\!\left(1+(1-2\omega)^k(1-2q)^L\right).
\]

\begin{lemma}[Elementary bounds for the finite degree window]\label{lem:mn-window-bounds-jp}
For all parameters above, the following statements hold.

\begin{enumerate}
\item Let
\[
\rho:=\max\!\left\{\frac\beta{j_Z},\frac\beta{j_\Delta},\omega_*\right\},
\quad c_X:=\frac{2\beta}{k}+\omega_*,
\quad C_X:=\frac{1+\rho}{(1-\rho)^2},
\]
and
\[
B_X:=\sqrt2\,\frac{\mathrm e(1+A)}{c_X}
\left(\frac{C_Xkc_X}{\mathrm e}\right)^{k/2}.
\]
Then \(B_X<7/20<1\).

\item Uniformly on
\[
\frac1{10L}\le q\le\frac12,\qquad 0\le\omega\le\omega_*,
\]
one has \(\psi(q,\omega)<0\).  In fact the upper bound obtained in the
proof below is at most \(-1/100\) in natural-log units.

\item If \(j_Z\) is odd, set
\[
\eta:=
\left(1-\frac{2\beta}{j_Z}\right)^{j_Z}
\left(1-\frac{2\beta}{j_\Delta}\right)^{j_\Delta}.
\]
Then
\[
\alpha_Z\h\!\left(\frac\beta{j_Z}\right)
+\alpha_\Delta\h\!\left(\frac\beta{j_\Delta}\right)
+\h(\omega_*)-1+\log_2\!\left(1-\frac A2\eta\right)<-\frac{23}{25}< -\frac9{10}.
\]
\end{enumerate}
\end{lemma}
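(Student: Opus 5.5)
The lemma has three parts, each a concrete scalar statement over a finite parameter set: the 56 triples in \(\mathcal T_{\mathrm{GV}}\), with \(10\le k\le30\), \(4\le j_Z\le10\), \(j_\Delta=k-2j_Z\ge2\). I will prove each part by analytic monotonicity reductions and then check a finite list of exact rational endpoint comparisons. The rational logarithm bounds will be those of Appendix~\ref{app:finite-gv-ha-jp}: write \(x=2^py\) and use the artanh series with its explicit remainder.

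\textbf{Part 1.} Start from the closed form of \(B_X\).
\begin{itemize}
\item Since \(j_Z\ge4\) and \(j_\Delta\ge2\), we have \(\beta/j_Z,\beta/j_\Delta\le1/20\).
\item Since \((1-2\omega_*)^k=A/2\ge1/4\), we get \(\omega_*\le(1-4^{-1/k})/2\le(\ln4)/(2k)\).
\item Hence \(\rho\le\max\{1/20,(\ln 2)/k\}\le 7/100\), so \(C_X\) is bounded by an explicit rational close to \(1.23\).
\item Also \(kc_X=2\beta+k\omega_*\le 1/5+\ln2\).
\end{itemize}
The factor \((C_Xkc_X/\mathrm e)^{k/2}\) is then at most \(r^{k/2}\) with \(r<1/2\). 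The remaining factor \(\mathrm e(1+A)/c_X\) grows only linearly in \(k\), since \(c_X\ge\omega_*\gtrsim(\ln(2/A))/(2k)\). So \(B_X\) is a product of a geometrically decaying term and a linear term, and it suffices to bound it at \(k=10\) and show the log-derivative in \(k\) is negative. The \(A\)-dependence enters only through \(\omega_*\) and \(1+A\), so I will bound the worst case over the admissible \(A\) for each \(k\) monotonically. The comparison \(B_X<7/20\) then reduces to one rational inequality per \(k\), or a single one plus monotonicity.

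\textbf{Part 2.} This is the main obstacle, since it is genuinely two-dimensional.
\begin{itemize}
\item First eliminate \(\omega\). For fixed \(q\), \(\partial_\omega\psi=\ln\frac{1-\omega}{\omega}-\frac{2k(1-2\omega)^{k-1}(1-2q)^L}{1+(1-2\omega)^k(1-2q)^L}\). On \([0,\omega_*]\) the first term dominates: it exceeds \(\ln(1/\omega_*-1)\approx\ln(2k/\ln(2/A))\), while the second is at most \(2k\cdot(\ldots)\), so I must check this comparison carefully. If it holds, \(\psi\) is increasing in \(\omega\), and it suffices to take \(\omega=\omega_*\), where \((1-2\omega_*)^k=A/2\) by design.
\item The problem becomes the one-variable inequality \(AH(q)+H(\omega_*)-\ln2+\ln(1+\tfrac A2(1-2q)^L)\le-1/100\) on \([1/(10L),1/2]\).
\item Set \(z=1-2q\) and argue as in Lemma~\ref{lem:half-entropy-jp}: a ratio-monotonicity argument shows the derivative changes sign at most once, so the maximum sits at an endpoint.
\item At \(q=1/2\) the value is \(A\ln2+H(\omega_*)-\ln2=-\alpha_Z\ln2+H(\omega_*)\). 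This is negative because \(\omega_*=O(1/k)\) makes \(H(\omega_*)\) small compared with \(j_Z\ln2/k\).
\item At \(q=1/(10L)\), \((1-2q)^L\approx \mathrm e^{-1/5}\) and \(H(q)\) is small, so the value is near \(\ln(1+0.41A)-\ln2+O(\ln L/L)<0\).
\end{itemize}
Both endpoint values become exact rational comparisons, checked for all 56 triples.

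\textbf{Part 3.} With \(j_Z\) odd, the expression is explicit in \((j_Z,k)\).
\begin{itemize}
\item \(\eta\ge(1-2\beta/j_Z)^{j_Z}(1-2\beta/j_\Delta)^{j_\Delta}\ge (1-1/20)^{\ldots}\), and more usefully \(\eta\ge\mathrm e^{-2/5}\cdot(1-O(\beta))\). Hence \(\log_2(1-\tfrac A2\eta)\le\log_2(1-0.33A)\).
\item The entropy terms satisfy \(\alpha_Z\h(\beta/j_Z)=\frac1k h\) and are \(O(\log k/k)\).
\item \(\h(\omega_*)\) is also \(O(\log k/k)\).
\end{itemize}
Adding these, the sum is about \(-1-\) (something positive) \(+\) small, comfortably below \(-23/25\). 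I will bound each term by monotone rational upper bounds in \(k\) and verify the worst triple, the smallest admissible \(k\) for each odd \(j_Z\), by exact arithmetic. I expect Part 2, specifically justifying the \(\omega\)-monotonicity reduction uniformly, to be the delicate step. If it fails near \(q\) small, I would fall back to bounding \(H(\omega)\le H(\omega_*)\) and \((1-2\omega)^k\le1\) separately. I would then verify the resulting cruder one-variable inequality, which still has slack at \(q\ge1/(10L)\) because of the factor \(1/2\) lost in \(\ln(1+\ldots)\le\ln 2\).
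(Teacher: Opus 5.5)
\textbf{The gap is in Part~2.} Your reduction to \(\omega=\omega_*\) fails on a large part of the \(q\)-range. As \(\omega\downarrow0\) the entropy slope \(\ln((1-\omega)/\omega)\) tends to \(+\infty\). Near \(\omega_*\), however, the parity-term slope \(2kz(1-2\omega)^{k-1}/(1+z(1-2\omega)^k)\), with \(z=(1-2q)^L\), is larger whenever \(z\) is not small. So the maximizer in \(\omega\) is interior and close to \(0\), not at \(\omega_*\).

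Two concrete checks:
\begin{itemize}
\item For \((j_Z,L,k)=(4,6,10)\) and \(q=1/(10L)=1/60\), one has \(\omega_*\approx0.0567\). Then \(\psi(q,0)\approx-0.046\), while \(\psi(q,\omega_*)\approx-0.205\).
\item For \((4,26,30)\) at \(q=1/(2L)\), one has \(\partial_\omega\psi(q,\omega_*)\approx4.27-8.34<0\).
\end{itemize}
Hence an endpoint analysis of \(q\mapsto\psi(q,\omega_*)\) says nothing about \(\sup_{\omega\le\omega_*}\psi\).

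Your fallback fails as well. Bounding \(H(\omega)\le H(\omega_*)\) and \((1-2\omega)^k\le1\) separately gives \(AH(q)+H(\omega_*)-\ln2+\ln(1+(1-2q)^L)\). At \((4,6,10)\) this is about \(+0.17\) at \(q=1/60\) and about \(+0.09\) at \(q=1/24\). The true value is negative only because the entropy gain in \(\omega\) is traded off against the decay of the parity term, and any decoupling throws that trade-off away.

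\textbf{The missing idea} keeps the coupling while making it one-dimensional. The function \(g(\omega)=\ln(1+z(1-2\omega)^k)\) is convex, so on \([0,\omega_*]\) it lies below its chord \(\ln(1+z)-c(q)\omega\), where \(c(q)=[\ln(1+z)-\ln(1+Az/2)]/\omega_*\). The conjugacy identity \(\sup_w\{H(w)-cw\}=\ln(1+\mathrm e^{-c})\) then gives
\[
\psi\le F_0(q)+\ln\bigl(1+\mathrm e^{-c(q)}\bigr).
\]
Because \(\omega_*<57/1000\), the slope is large on \(q\le1/(2L)\): \(c(q)>4\) for \(Lq\le1/4\) and \(c(q)>5/2\) for \(Lq\le1/2\). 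So the \(\omega\)-cost is below \(1/50\), respectively \(1/12\). This is absorbed by the split-scale bound \(F_0(q)\le f_{L,k}(Lq)\), which is at most \(-43/1000\), respectively \(-113/1000\), on those subranges.

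For \(q\ge1/(2L)\), the paper uses Pinsker's inequality and convexity in \(r^2\), where \(r=1-2q\), to reduce to the endpoints \(r=0\) and \(r=1-1/L\). It then applies the same chord--conjugacy step at the second endpoint.

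\textbf{Parts 1 and 3.} Your framing as exact finite checks matches the paper, but your heuristics understate how tight these are.
\begin{itemize}
\item Part~3 is not ``comfortable''. At \((5,7,12)\) the expression is about \(-0.9208\), only about \(8\times10^{-4}\) below \(-23/25\). The \(O(\log k/k)\) bounds and \(\eta\ge\mathrm e^{-2/5}(1-O(\beta))\) cannot certify this. You need exact rational evaluation. Your assumption that the smallest \(k\) is worst for each odd \(j_Z\) is also unproved; the paper avoids it by checking all 24 odd cases.
\item In Part~1, the crude bounds \(\rho\le7/100\) and \(kc_X\le1/5+\ln2\) leave \(B_X\) well above \(7/20\) at \(k=10\). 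The true value at \((4,6,10)\) is about \(0.339\), so the exact \(\omega_*\) bracket is required.
\end{itemize}
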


\begin{proof}
We first state explicitly how the finitely many degree-dependent scalar
comparisons are made.  Write a positive rational argument as
\(x=2^py\), where \(p\in\mathbb Z\) and \(1\le y<2\), and use
\[
 \ln y=2\sum_{r=0}^{N-1}\frac{z^{2r+1}}{2r+1}+R_N,
 \quad z=\frac{y-1}{y+1},\qquad
 0\le R_N\le\frac{2z^{2N+1}}{(2N+1)(1-z^2)}            \tag{MN--1}
\]
with strict inequalities when \(y>1\).  Then
\(\ln x=p\ln2+\ln y\), using the same series at \(z=1/3\) for
\(\ln2\).  This gives rational upper and lower bounds.  The algebraic number
\(s=(A/2)^{1/k}\) is bracketed by adjacent rationals \(s^-<s<s^+\),
verified exactly by \((s^-)^k<A/2<(s^+)^k\); hence
\((1-s^+)/2<\omega_*<(1-s^-)/2\).  Taking \(N=72\) and denominator
\(10^{12}\) for the root brackets gives the following finite table.
The third column reports the case whose exact upper bound is largest, and
the last column gives the rational threshold actually used in the proof.
\[
\begin{array}{l|c|c|c}
\text{quantity}&\text{cases}&\text{case }(j_Z,L,k)&
 \text{proved upper bound}\\ \hline
B_X&56&(4,6,10)& <17/50\\
H(\omega_*)-\alpha_Z\ln2&56&(4,26,30)&<-19/1000\\
z_0-Ar_0^2/2-\alpha_Z\ln2+\ln(1+\mathrm e^{-c_0})
 &56&(4,26,30)&<-13/100\\
\text{odd-corner expression}&24&(5,7,12)&<-23/25
\end{array}                                                     \tag{MN--2}
\]
Here \(r_0,z_0,c_0\) are the quantities defined below.  For the first
row we use the rational inequalities
\(\sqrt2<70711/50000\) and \(\mathrm e>1359/500\); these follow by
squaring the first rational and by
\(\sum_{r=0}^8 1/r!>1359/500\), respectively.  In the third row,
\(\ln(1+\mathrm e^{-c_0})\le\mathrm e^{-c_0}<r\) is certified by the
rational test \(c_0+\ln r>0\).  Thus (MN--2) is a list of exact scalar
inequalities, rather than a floating-point maximization or an enclosure
of a continuous box.  The supplementary script
\texttt{verify\_analytic\_scalar\_bounds.py} implements only the fraction
operations in (MN--1), enumerates the 56 admissible pairs, and reproduces
the table.  The first and fourth rows prove assertions 1 and 3.

We prove the second assertion in detail.  First suppose
\(1/(10L)\le q\le1/(2L)\), and set \(x=Lq\),
\(z=(1-2q)^L\).  The function
\[
g(\omega):=\ln\!\left(1+z(1-2\omega)^k\right)
\]
is convex because
\[
g''(\omega)=
\frac{4kz(1-2\omega)^{k-2}
\bigl((k-1)-z(1-2\omega)^k\bigr)}
{\bigl(1+z(1-2\omega)^k\bigr)^2}\ge0.
\]
Consequently its chord on \([0,\omega_*]\) gives
\[
g(\omega)\le\ln(1+z)-c(q)\omega,
\quad
c(q):=\frac{\ln(1+z)-\ln(1+Az/2)}{\omega_*}.
\]
The entropy conjugacy formula
\[
\sup_{0\le w\le1}\{H(w)-cw\}=\ln(1+\mathrm e^{-c})
\]
therefore yields
\[
\psi(q,\omega)
\le F_0(q)+\ln(1+\mathrm e^{-c(q)}),
\quad
F_0(q):=AH(q)-\ln2+\ln(1+(1-2q)^L).
\]
As in Lemma~\ref{lem:scalar-zero-output-gap-jp}, for \(x\le1\),
\[
F_0(q)\le
f_{L,k}(x):=
x\left[\frac1k\ln\frac{\mathrm eL}{x}-1+\frac x2\right].
\]
The function \(f_{L,k}\) is convex on \([1/10,1/2]\).  Elementary
endpoint comparisons give
\[
\begin{array}{c|ccc}
x&1/10&1/4&1/2\\ \hline
\text{upper bound for }f_{L,k}(x)&-43/1000&-113/1000&-1/5
\end{array}
\]
Indeed, \(L\le k-4\), and
\([1+\ln((k-4)/x)]/k\) decreases for \(k\ge10\) and
\(x\le1/2\), so all three upper bounds reduce to \((L,k)=(6,10)\)
and follow from (MN--1).
Explicitly, the numerator of its derivative is
\[
 \frac{k}{k-4}-\left[1+\ln\frac{k-4}{x}\right]
 \le\frac53-[1+\ln12]<0.
\]

We next give uniform lower bounds for \(c(q)\), without a degree scan.
For every admissible triple,
\[
 \omega_*<\frac{57}{1000}.                                  \tag{MN--3}
\]
To see this, put \(s=1-2\omega_*\).  If \(k=10\), then
\(s^{10}=A/2\ge3/10>(443/500)^{10}\).  If \(k\ge12\), then
\(A/2\ge3/11>(443/500)^{12}\ge(443/500)^k\).  Here
\(A\ge6/11\) follows from
\(A\ge1/2+1/k\ge11/20>6/11\) for \(k\le20\), and from
\(A\ge1-10/k\ge6/11\) for \(k\ge22\).  Thus
\(s>443/500\) in both cases, proving (MN--3).

For fixed \(x=Lq\), the value
\(z=(1-2x/L)^L\) increases with \(L\): with \(y=2x/L\), the derivative
of its logarithm has sign
\(\ln(1-y)+y/(1-y)>0\).  Moreover,
\(\ln(1+z)-\ln(1+Az/2)\) increases with \(z\) and decreases with
\(A\); its derivative with respect to \(z\) is
\((1-A/2)/[(1+z)(1+Az/2)]>0\).  Since \(L\ge6\) and
\(A\le13/15\), (MN--1) gives
\[
c(1/(4L))>
\frac{\ln(1+(11/12)^6)-\ln(1+(13/30)(11/12)^6)}{57/1000}>4,
\]
and
\[
c(1/(2L))>
\frac{\ln(1+(5/6)^6)-\ln(1+(13/30)(5/6)^6)}{57/1000}>\frac52.
\]
The two logarithmic numerators are respectively larger than
\(237/1000\) and \(1533/10000\).  Because \(z(q)\) decreases and the numerator
of \(c(q)\) increases with \(z\), the function \(c(q)\) decreases with \(q\);
these estimates hold throughout the corresponding subintervals.
Thus on \(1/10\le x\le1/4\),
\[
\psi<-\frac{43}{1000}+\frac1{50}=-\frac{23}{1000},
\]
while on \(1/4\le x\le1/2\),
\[
\psi<-\frac{113}{1000}+\frac1{12}< -\frac{29}{1000}.
\]
Here \(\mathrm e^{-4}<1/50\) and \(\mathrm e^{-5/2}<1/12\), again
by (MN--1).  In particular both bounds are below \(-1/100\).

It remains to treat \(q\ge1/(2L)\).  Put
\(r=1-2q\), \(r_0=1-1/L\), and \(z_0=r_0^L\).  Pinsker's inequality gives
\(\ln2-H(q)\ge r^2/2\), and \(\ln(1+y)\le y\), so
\[
\psi(q,\omega)
\le H(\omega)-\alpha_Z\ln2
+(1-2\omega)^kr^L-\frac A2r^2.
\]
As a function of \(r^2\), the last two terms are convex because \(L\ge6\).
Their maximum on \([0,r_0^2]\) is therefore at an endpoint.  At \(r=0\)
the resulting bound is
\[
H(\omega_*)-\alpha_Z\ln2<-\frac{19}{1000}<-\frac1{100}
\]
by the second row of (MN--2).
At \(r=r_0\), convexity of \((1-2\omega)^k\) on
\([0,\omega_*]\) gives
\[
(1-2\omega)^k
\le1-\left(1-\frac A2\right)\frac\omega{\omega_*}.
\]
Writing
\[
c_0:=\frac{z_0(1-A/2)}{\omega_*},
\]
and applying entropy conjugacy once more bounds this endpoint by
\[
z_0-\frac A2r_0^2-\alpha_Z\ln2+\ln(1+\mathrm e^{-c_0})
<-\frac{13}{100}
\]
by the third row of (MN--2).
This proves the second assertion.
\end{proof}

\begin{proof}[Proof of Theorem~\ref{thm:MN-gv-jp}]
Fix \(0<\delta<\h^{-1}(\alpha_Z)\).  For a term
\(\NX(t_1,t_\Delta,w)\), put
\[
a:=\frac{t_1}{m_Z},\qquad b:=\frac{t_\Delta}{m_\Delta},
\qquad\omega:=\frac wn.
\]
On the half-domain, abbreviate the exponent in
Lemma~\ref{lem:trial-exp-jp} by
\[
\Phi_{\mathrm{MN}}(a,b,\omega)
:=\alpha_Z\h(a)+\alpha_\Delta\h(b)+\h(\omega)-1
+\log_2\!\left(1+(1-2\omega)^k
(1-2a)^{j_Z}(1-2b)^{j_\Delta}\right),
\]
and set
\[
q:=\frac{\alpha_Za+\alpha_\Delta b}{A},
\qquad r:=1-2q.
\]

First consider the half-domain \(0\le a,b\le1/2\).  In the
linear-weight regime \(\omega_*\le\omega\le\delta\), entropy concavity
and concavity of \(x\mapsto\ln(1-2x)\) give
\[
\alpha_Z\h(a)+\alpha_\Delta\h(b)\le A\h(q),
\qquad
(1-2a)^{j_Z}(1-2b)^{j_\Delta}\le r^L.
\]
Here \((1-2\omega)^k\le A/2\) by the definition of \(\omega_*\).
Pinsker's inequality in bits, \(1-\h(q)\ge r^2/(2\ln2)\), together
with \(\ln(1+x)\le x\) and \(r^L\le r^2\) (recall \(L\ge6\)), yields
\[
\log_2\!\left(1+(1-2\omega)^kr^L\right)
\le \frac{A r^2}{2\ln2}
\le A(1-\h(q)).
\]
Consequently, with \(\gamma_{\mathrm{lin}}:=\alpha_Z-\h(\delta)>0\),
\[
\Phi_{\mathrm{MN}}(a,b,\omega)\le\h(\omega)-\alpha_Z
\le\h(\delta)-\alpha_Z=-\gamma_{\mathrm{lin}}.
\]
The endpoint-uniform error in Lemma~\ref{lem:trial-exp-jp} is
\(O((\log n)/n)\).  Hence every such summand is at most
\(2^{-\gamma_{\mathrm{lin}}n/2}\) for all sufficiently large \(n\).
There are \(O(n^3)\) choices of \((t_1,t_\Delta,w)\), so the total
linear-weight contribution is \(o(1)\).

For the low-weight small-support region, impose
\[
0<\omega\le\omega_*,\qquad
t_1,t_\Delta\le\frac{\beta n}{k}.
\]
Apply Lemma~\ref{lem:pair-jp} with
\(\sigma_1=\sigma_\Delta=\beta/k\) and
\(\sigma_w=\omega_*\), and group by \(u=t_1+t_\Delta+w\).  Since
\(u\le c_Xn\), define the grouped contribution
\[
T_{X,n}(u):=
\sum_{\substack{t_1+t_\Delta+w=u\\
0\le t_1,t_\Delta\le\beta n/k\\1\le w\le\omega_*n}}
\EE[\NX(t_1,t_\Delta,w)].
\]
Vandermonde's identity and Lemma~\ref{lem:pair-jp} give
\[
T_{X,n}(u)
\le \binom{(1+A)n}{u}(ku-1)!!
       \left(\frac{C_X}{n}\right)^{ku/2}.
\]
Using \(\binom{N}{u}\le(\mathrm eN/u)^u\) and
\((ku-1)!!\le\sqrt2(ku/\mathrm e)^{ku/2}\), the right-hand side is at
most
\[
\sqrt2\left[
\frac{\mathrm e(1+A)}{c_X}
\left(\frac{C_Xkc_X}{\mathrm e}\right)^{k/2}
\right]^u
\le B_X^u,
\]
where the last inequality uses \(u\ge1\) and the definition of \(B_X\).
For each fixed \(u\), the unsimplified binomial--double-factorial bound is
\(O(n^{-u(k/2-1)})\).  Thus
\[
T_{X,n}(u)\le B_X^u,
\qquad
T_{X,n}(u)=O(n^{-u(k/2-1)})\quad(u\ \text{fixed}).
\]
Lemma~\ref{lem:mn-window-bounds-jp} gives \(B_X<7/20\).  Splitting a
geometric tail from a fixed initial segment therefore shows that the whole
small-support contribution is \(o(1)\).

Now suppose a half-domain low-weight term is outside the small-support
region.  Recall \(q=(\alpha_Za+\alpha_\Delta b)/A\).
Concavity of \(H\) and of \(x\mapsto\ln(1-2x)\) gives
\[
(\ln2)\Phi_{\mathrm{MN}}(a,b,\omega)\le\psi(q,\omega).
\]
Moreover, either \(t_1>\beta n/k\) or
\(t_\Delta>\beta n/k\), and hence
\[
q>\frac{\beta}{kA}=\frac1{10L}.
\]
The second part of Lemma~\ref{lem:mn-window-bounds-jp} supplies a uniform
negative exponent.  After summing the \(O(n^3)\) possible triples, the
low-weight large-support contribution is \(o(1)\).

If \(j_Z\) is even, then \(j_\Delta\) is even, and the folding symmetry in
Proposition~\ref{prop:fold-jp} transfers these three half-domain estimates
to the full domain.  Markov's inequality completes the proof in this case.

Suppose finally that \(j_Z\) is odd.  Since \(j_\Delta=k-2j_Z\) is even,
the \(A_\Delta\)-block still folds.  To make the remaining half explicit,
let \(N_X^-(t_1,t_\Delta,w)\) count triples of the indicated weights that
satisfy
\[
A_Z^T\bp_Z+A_\Delta^T\bp_\Delta+B^T\bv=\ones_{[n]}.
\]
Its one-column polynomial is
\[
g^-_{j_Z,j_\Delta,k}(u,v,r)
:=\frac{(1+u)^{j_Z}(1+v)^{j_\Delta}(1+r)^k
-(1-u)^{j_Z}(1-v)^{j_\Delta}(1-r)^k}{2},
\]
whose coefficients are nonnegative because it selects odd total parity in
each column.  The same socket count as in Lemma~\ref{lem:exact-jp} gives
the exact identity
\[
\EE[N_X^-(t_1,t_\Delta,w)]
=\binom{m_Z}{t_1}\binom{m_\Delta}{t_\Delta}\binom nw
\frac{[u^{kt_1}v^{kt_\Delta}r^{kw}]
      g^-_{j_Z,j_\Delta,k}(u,v,r)^n}
{\binom{nj_Z}{kt_1}\binom{nj_\Delta}{kt_\Delta}\binom{nk}{kw}}.
\tag{MN--4}
\]

After folding the \(A_\Delta\)-block, partition the \(A_Z\)-witness
weights at \(m_Z/2\).  Complementing an upper-half witness uses
\(A_Z^T\ones_{[m_Z]}=\ones_{[n]}\) and maps it bijectively to the
odd-parity count in (MN--4).  Consequently, for every \(w\),
\[
A_{\CX}(w)\le
2\sum_{0\le t_\Delta\le m_\Delta/2}
\left[
 \sum_{0\le t_1\le m_Z/2}\NX(t_1,t_\Delta,w)
 +\sum_{0\le t_1<m_Z/2}N_X^-(t_1,t_\Delta,w)
\right].                                                     \tag{MN--5}
\]

Applying the same trial points as in Lemma~\ref{lem:trial-exp-jp} to
(MN--4) gives, uniformly on \(0\le a,b\le1/2\) and \(1\le w\le n/2\),
\[
\frac1n\log_2\EE[N_X^-(t_1,t_\Delta,w)]
\le \Phi_{\mathrm{MN}}^-(a,b,\omega)
   +O\!\left(\frac{\log n}{n}\right),
\]
where \(\mu:=(1-2\omega)^k\) and
\[
\begin{aligned}
\Phi_{\mathrm{MN}}^-(a,b,\omega)
:={}&\alpha_Z\h(a)+\alpha_\Delta\h(b)+\h(\omega)-1\\
&+\log_2\!\left(1-\mu(1-2a)^{j_Z}(1-2b)^{j_\Delta}\right).
\end{aligned}                                                \tag{MN--6}
\]
The logarithm is finite because \(w\ge1\) implies \(0\le\mu<1\).
Since \(1-x\le1+x\), one has
\(\Phi_{\mathrm{MN}}^-\le\Phi_{\mathrm{MN}}\); hence the preceding
linear-weight and low-weight large-support estimates apply unchanged.

The only remaining part of the odd count is the small corner
\[
0\le a\le\frac\beta{j_Z},\qquad
0\le b\le\frac\beta{j_\Delta},\qquad
0<\omega\le\omega_*.
\]
Here \(\mu\ge A/2\), while
\((1-2a)^{j_Z}(1-2b)^{j_\Delta}\ge\eta\).  Monotonicity of entropy
therefore gives
\[
\Phi_{\mathrm{MN}}^-(a,b,\omega)
\le
\alpha_Z\h\!\left(\frac\beta{j_Z}\right)
+\alpha_\Delta\h\!\left(\frac\beta{j_\Delta}\right)
+\h(\omega_*)-1+\log_2\!\left(1-\frac A2\eta\right)
<-\frac{23}{25}< -\frac9{10}
\]
by Lemma~\ref{lem:mn-window-bounds-jp}.  After absorbing the uniform
\(O((\log n)/n)\) term, each odd small-corner summand is exponentially
small.  There are only \(O(n^3)\) summands in (MN--5), so the odd upper
half contributes \(o(1)\).

In both parity cases,
\[
\sum_{1\le w\le\delta n}\EE[A_{\CX}(w)]=o(1),
\]
and Markov's inequality gives
\(\PP[d(\CX)\le\delta n]\to0\).
\end{proof}

\engappendixsection{Proof of Proposition~\ref{prop:zero-rate-boundary-jp}}{app:zero-rate-boundary-jp}

Fix \((j_Z,j_X,k)=(j,j,2j)\), where \(4\le j\le15\). Here
\(A_X=A_Z\), \(C_X=C_Z^\perp\), and the quantum dimension is exactly zero.
The following argument concerns the two classical constituent codes.
Write \(H(x):=(\ln2)\h(x)\), and put
\[
\bar d:=\frac{11002787}{10^8},\qquad
\tau_0:=\frac1{4k},\qquad
\omega_*:=\frac{1-4^{-1/k}}2.
\]
Set \(\beta=3/25\) when \(j=4\), and \(\beta=1/5\) otherwise.
Unlike the positive-rate proof, this boundary proof includes \(k=8\)
and uses \(\alpha_Z=1/2\) throughout.

\begin{lemma}[Exact rational boundary certificates]\label{lem:boundary-certificates}
For all 12 boundary triples, the following inequalities hold:
\[
\h(\bar d)>\frac12,\qquad
p(\tau_0)>\bar d,\qquad
p(\tau):=\frac{1-(1-2\tau)^k}{2},
\]
\[
\lambda_Z:=k\mathrm e^{1-j/2}(1/4)^{j/2-1}<\frac34.
\]
Define
\[
\rho:=\max\{\beta/j,\omega_*\},\quad
c_X:=\beta/k+\omega_*,\quad
C:=\frac{1+\rho}{(1-\rho)^2},\quad
B_X:=\sqrt2\,\frac{3\mathrm e}{2c_X}
\left(\frac{Ckc_X}{\mathrm e}\right)^j.
\]
Then \(\rho<1\), \(B_X<97/100\), and
\[
\sup_{\tau_0\le\tau\le49/100}
\left\{H(\tau)-\tfrac12\ln2+
\tfrac12\ln(1+(1-2\tau)^k)-D_{\rm nat}(\bar d\|p(\tau))\right\}
<-\frac1{100000},                                      \tag{F--1}
\]
\[
\sup_{\substack{\beta/j\le a\le1/2\\0\le w\le\omega_*}}
\left\{\tfrac12 H(a)+H(w)-\ln2+
\ln(1+(1-2w)^k(1-2a)^j)\right\}
<-\frac1{100000}.                                      \tag{F--2}
\]
Here \(D_{\rm nat}\) is Bernoulli relative entropy using natural logarithms.
For odd \(j\), one additionally has
\[
\tfrac12H(\beta/j)+H(\omega_*)-\ln2+
\ln\left(1-\tfrac14(1-2\beta/j)^j\right)<-\tfrac12.       \tag{F--3}
\]
\end{lemma}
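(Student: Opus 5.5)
The statement is a finite list of scalar inequalities over the 12 boundary triples $(j,j,2j)$, $4\le j\le15$, so the plan is to certify each one with exact rational arithmetic, handling the two suprema (F--1), (F--2) by interval subdivision with explicit logarithm remainders. The rational logarithm expansion (MN--1) is the basic tool: write each argument as $2^py$ and use the $\operatorname{artanh}$ series with its remainder bound. Roots such as $4^{-1/k}$ will be bracketed by rationals $s^-<s<s^+$, checked by comparing $k$-th powers with $1/4$, and $\mathrm e$ and $\sqrt2$ by the rational brackets used earlier.

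\emph{Pointwise inequalities.} For $\h(\bar d)>1/2$, I will bound $-\bar d\ln\bar d-(1-\bar d)\ln(1-\bar d)$ from below using upper bounds on the two logarithms, and compare the result with $\frac12\ln2$. For $p(\tau_0)>\bar d$, note that $(1-1/(2k))^k$ is a rational number, so the inequality reduces to comparing rationals. The bound $\lambda_Z<3/4$ holds at $j=4$ because $8\mathrm e^{-1}/4<3/4$ fails, so I need to recompute this case carefully: $\lambda_Z=k\mathrm e^{1-j/2}4^{1-j/2}$, which at $j=4$ equals $8\mathrm e^{-1}/4=2/\mathrm e\approx0.736<0.75$. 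For larger $j$, going from $j$ to $j+1$ multiplies $\lambda_Z$ by $\frac{2j+2}{2j}\cdot(4\mathrm e)^{-1/2}<1$, so $\lambda_Z$ decreases in $j$. For $\rho$ and $B_X$, I will bound $\omega_*$ above via the root bracket. Then $B_X$ is an explicit product of rationals, $\mathrm e^{\pm1}$, and $\sqrt2$ raised to integer powers, so each of the 12 cases becomes one exact rational comparison. The odd-$j$ inequality (F--3) is likewise a single point evaluation for each odd $j$, using $\log(1-x)$ with the series bound.

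\emph{Suprema (F--1) and (F--2).} This is the main obstacle. I will use rational subdivision. On each subinterval $[\tau^-,\tau^+]$ of $[\tau_0,49/100]$, I will bound each term by monotonicity:
\begin{itemize}
\item use $H(\tau^+)$ on $\tau\le1/2$;
\item use $\ln(1+(1-2\tau^-)^k)$;
\item bound $D_{\rm nat}(\bar d\|p)$ below by $D_{\rm nat}(\bar d\|p(\tau^-))$, since $p$ increases in $\tau$ and $D(\bar d\|\cdot)$ increases on $p>\bar d$, where $p(\tau)\ge p(\tau_0)>\bar d$ by the pointwise certificate.
\end{itemize}
Each box then needs only finitely many rational log evaluations. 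The width is refined adaptively wherever the margin fails.

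For (F--2), I will first reduce the two-dimensional box. For fixed $a$, the map $w\mapsto H(w)+\ln(1+\mu(w)Y)$ has to be handled on $[0,\omega_*]$, and I will subdivide in both $a$ and $w$ with the same endpoint monotonicity: $H$ is increasing in $a$ and in $w$ on $[0,1/2]$, and the last term is decreasing in both. The delicate regions are near $\tau\to49/100$, where the $1/100000$ margin is tight because the exponent tends to $0$ at $\tau=1/2$, and near $a=\beta/j$ with small $w$. There I expect to need a fine grid, and possibly a derivative (Lipschitz) bound in place of crude monotone corners. This is why the statement excludes $\tau>49/100$, presumably to be handled separately via Lemma~\ref{lem:half-entropy-jp}-type arguments.

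Finally, I will implement all of this as exact fraction arithmetic, as in \path{verify_boundary_bounds.py}, and record the endpoints of the intervals used.
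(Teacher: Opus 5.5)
Your plan is correct. For the pointwise certificates and for (F--1) it is the paper's own method. Your corner bound built from $H(\tau^+)$, $\ln(1+(1-2\tau^-)^k)$ and $D_{\rm nat}(\bar d\|p(\tau^-))$ is exactly the paper's $U_Z(a,b)$ in (F--4), and it rests on the same monotonicity of $-D_{\rm nat}(\bar d\|\cdot)$ on $p>\bar d$. The paper likewise brackets $4^{-1/k}$ and substitutes $\omega^+$ in $B_X$ (which increases in $c_X$ and $\rho$) and in (F--3).

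The genuine difference is (F--2). You subdivide the $(a,w)$ box using two-dimensional monotone corners. The paper instead removes $w$ analytically:
\begin{itemize}
\item convexity of $w\mapsto\ln(1+z(1-2w)^k)$ gives the chord bound $\ln(1+z)-c(z)w$ on $[0,\omega_*]$;
\item then $\sup_{0\le w\le\omega^+}\{H(w)-c^-w\}$ is either $H(\omega^+)-c^-\omega^+$ or at most $\ln(1+\mathrm e^{-c^-})$, by entropy conjugacy.
\end{itemize}
So the paper only needs a one-dimensional bisection in $a$, with 4 to 8 leaves per $j$. Your route needs no convexity argument but more boxes. In particular you need several refinement levels near $w=0$, where $H$ has infinite slope and the corner term $H(w^+)$ must fit inside the margin. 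It still terminates: the true supremum in (F--2) lies far below the threshold (roughly $-0.04$ for $j=4$, near $a=\beta/j$ and small $w$), and each monotone piece is uniformly continuous on the compact box.

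Your $\lambda_Z$ argument also differs. The ratio $\frac{j+1}{j}\cdot\frac1{2\sqrt{\mathrm e}}<1$, plus the single check $2/\mathrm e<3/4$ (i.e.\ $\mathrm e>8/3$), replaces the paper's twelve rational comparisons. Do delete the self-contradictory ``fails'' sentence.

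Finally, your forecast of where (F--1) needs refinement is off. Near $\tau=49/100$ the expression is increasing, so the corner bound is essentially exact at the right endpoint, even though the true margin there is only about $2\times10^{-4}$. The real pressure for $j=4$ is near $\tau_0=1/32$. There the expression decreases, and $H$ has slope $\ln 31\approx3.4$ against a margin of only about $3.6\times10^{-3}$. This is consistent with $j=4$ requiring the most HA leaves (11) in the paper.
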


\begin{proof}
All comparisons are verified by the accompanying standard-library script
\path{verify_boundary_bounds.py}. We specify the bounds and interval
coverage so that the computation is reproducible. Every endpoint and
operation is rational. Logarithms use the series and explicit remainder
of (MN--1), with 24 terms and conservative rational enclosure afterwards.
The root \(4^{-1/k}\) is bracketed between adjacent rationals of denominator
\(10^{12}\), checked by exact integer powers. Denote the induced bracket
for \(\omega_*\) by \([\omega^-,\omega^+]\).

The small-support comparisons use \(\sqrt2<70711/50000\) and
\(\mathrm e>1359/500\), with the integer and factorial checks of
Appendix~\ref{app:finite-gv-mn-jp}. In particular,
\[
\lambda_Z^2< k^2(1/4)^{j-2}(1359/500)^{2-j}<9/16.
\]
For \(B_X\), replace \(\rho,c_X\) by the upper bounds obtained using
\(\omega^+\), and rewrite its dependence on \(\mathrm e\) as
\(\mathrm e^{1-j}\) before using this rational lower bound.

For (F--1), on an interval \([a,b]\subset[\tau_0,49/100]\), an upper
bound is
\[
\begin{split}
U_Z(a,b):={}&H(b)-\tfrac12\ln2+
\tfrac12\ln(1+(1-2a)^k)+H(\bar d)\\
&+\bar d\ln p(a)+(1-\bar d)\ln(1-p(a)).
\end{split}                                            \tag{F--4}
\]
Indeed, \(H\) increases on the half interval, the parity factor decreases,
and \(-D_{\rm nat}(\bar d\|p)\) decreases for \(p>\bar d\).

For (F--2), fix \(a\), write \(z=(1-2a)^j\), and use convexity of
\(w\mapsto\ln(1+z(1-2w)^k)\). Its chord on \([0,\omega_*]\) gives
\[
\ln(1+z(1-2w)^k)\le\ln(1+z)-c(z)w,\qquad
c(z):=\frac{\ln(1+z)-\ln(1+z/4)}{\omega_*}.
\]
For an interval \([a,b]\subset[\beta/j,1/2]\), put
\(z^+=(1-2a)^j\), \(z^-=(1-2b)^j\).
The numerator of \(c\) increases with \(z\). A rational lower bound
\(c^-\ge0\) is therefore obtained from
\([\ln(1+z^-)-\ln(1+z^-/4)]/\omega^+\).
An upper bound for the interval is
\[
U_X(a,b):=\tfrac12 H(b)-\ln2+\ln(1+z^+)
+\sup_{0\le w\le\omega^+}\{H(w)-c^-w\}.                \tag{F--5}
\]
If \(c^-\le\ln((1-\omega^+)/\omega^+)\), the supremum is
\(H(\omega^+)-c^-\omega^+\). Otherwise it is bounded by
\(\ln(1+\exp(-c^-))\). To enclose the exponential without floating
point, for a rational \(0\le c_0\le c^-\) use
\[
\exp(-c^-)\le\exp(-c_0)
\le\left(\sum_{r=0}^{48}\frac{c_0^r}{r!}\right)^{-1}.
\]
The script uses only certified branch comparisons and conservative
rational logarithm bounds in (F--4) and (F--5).

Each initial interval is bisected until its certified upper bound is
below \(-1/100000\). The saved leaf intervals have exactly matching
adjacent endpoints and cover the entire initial interval. The numbers
of leaves for \(j=4,5,\ldots,15\), respectively, are
\[
\begin{array}{c|rrrrrrrrrrrr}
j&4&5&6&7&8&9&10&11&12&13&14&15\\\hline
\text{HA}&11&9&8&7&6&6&6&6&6&6&6&6\\
\text{MN}&8&5&4&4&4&4&4&4&4&4&4&4
\end{array}
\]
The exact rational endpoints and certified upper bounds are saved in
\path{boundary_certificates.json}. Direct rational logarithm bounds
also verify (F--3), using \(\omega^+\) in its entropy term.
\end{proof}

\begin{proof}[Proof of Proposition~\ref{prop:zero-rate-boundary-jp}]
Fix \(0<\delta<d:=\h^{-1}(1/2)<\bar d\).
The coefficient-extraction and transition identities in
Appendix~\ref{app:HA-proofs-jp} remain valid for \(j=k/2\) and \(k=8\).
For \(1\le s\le\tau_0n\), their socket-pairing proof gives
\[
N_o(s)\le\binom ns(js-1)!!(n/2)^{-js/2}
\le\sqrt2\,\lambda_Z^s
\]
when \(js\) is even, and \(N_o(s)=0\) otherwise. For fixed \(s\),
the unsimplified bound is \(O(n^{-s(j/2-1)})\). Thus a fixed initial
segment and a geometric tail give \(\sum_{s\le\tau_0n}N_o(s)=o(1)\).
Complement symmetry handles \(s\ge(1-\tau_0)n\). The identity
\(\sum_lM_k(s,l)=1\) makes both tails \(o(1)\) after the visible
weights have been summed.

On \(\tau_0\le\tau\le49/100\), (F--1) bounds the HA exponent
uniformly below zero for every \(0\le\omega\le\delta\), since
\(p(\tau)>\bar d>\delta\) and \(-D_{\rm nat}(\omega\|p)\) increases
with \(\omega<p\). For \(49/100\le\tau\le1/2\), put
\(u=1-2\tau\le1/50\). Pinsker's inequality and \(k\ge8\) give
\[
1-\h(\tau)\ge\frac{u^2}{2\ln2},\qquad
\frac32\log_2(1+u^k)\le\frac{3u^8}{2\ln2}
\le\frac{u^2}{2\ln2}.
\]
Therefore the exponent is at most \(\h(\delta)-1/2<0\).
Complementing the input preserves both \(N_o(s)\) and \(M_k(s,l)\),
so the upper half has the same bound. Uniform Stirling errors are
\(O(\log n)\); there are only \(O(n^2)\) pairs \((s,l)\).
The total first moment is consequently \(o(1)\), proving the HA assertion.

On the MN side, the two-pool version of the exact socket formula gives
\[
\EE N^{(0)}(t,w)=\binom{n/2}{t}\binom nw
\frac{[u^{kt}r^{kw}]
\left(\frac{(1+u)^j(1+r)^k+(1-u)^j(1-r)^k}{2}\right)^n}
{\binom{nj}{kt}\binom{nk}{kw}}.
\]
There is no absent-pool denominator in this formula. With
\(a=2t/n\) and \(\omega=w/n\), the trial exponent in bits is
\[
\Phi_{+,0}(a,\omega)=\tfrac12\h(a)+\h(\omega)-1+
\log_2(1+(1-2\omega)^k(1-2a)^j).
\]
For \(\omega\ge\omega_*\), the parity factor is at most \(1/4\),
so Pinsker's inequality and \((1-2a)^j\le(1-2a)^2\) give
\(\Phi_{+,0}\le\h(\delta)-1/2<0\).

For \(0<\omega\le\omega_*\) and \(t\le\beta n/k\), the exposure
proof of Lemma~\ref{lem:pair-jp} applies to the two nonempty pools.
Grouping by \(u=t+w\le c_Xn\) gives
\[
T_n(u)\le\binom{3n/2}{u}(ku-1)!!(C/n)^{ku/2}
\le B_X^u,
\qquad
T_n(u)=O(n^{-u(k/2-1)})\quad(u\text{ fixed}).
\]
Since \(B_X<1\), the grouped sum is \(o(1)\). The complementary
half-domain has \(a>\beta/j\) and is covered by (F--2).

If \(j\) is even, complementing the witness folds the full domain to
\(a\le1/2\). If \(j\) is odd, complementing an upper-half witness
changes the syndrome to \(\ones\). The odd-parity generating polynomial
has nonnegative coefficients and gives
\[
\Phi_{-,0}(a,\omega)=\tfrac12\h(a)+\h(\omega)-1+
\log_2(1-(1-2\omega)^k(1-2a)^j).
\]
For \(w\ge1\), the logarithm is finite, and \(\Phi_{-,0}\le\Phi_{+,0}\)
handles the linear-weight and large-support regions. In the remaining
corner \(a\le\beta/j\), \(0<\omega\le\omega_*\), (F--3) bounds the
odd exponent uniformly below zero. All coefficient upper bounds have
uniform \(O(\log n)\) Stirling errors, and at most \(O(n^2)\) terms
are summed. Thus the MN first moment is \(o(1)\). Markov's inequality
proves the second distance assertion.

The same estimates also establish the actual constituent rates. Set
\(w=0\) in the one-block enumerator above. For nonzero witnesses in the
lower half, the pairing estimate bounds the small-support sum by \(o(1)\),
and (F--2) bounds the remaining sum exponentially. If \(j\) is even,
complement symmetry gives at most twice the lower-half contribution,
including the zero witness. If \(j\) is odd, the upper half has odd
syndrome and is bounded by (F--2) and (F--3); the zero-support odd term
vanishes. Hence
\[
\EE|\Ker A_Z^T|=O(1),\qquad
\dim\Ker A_Z^T=o(n)\quad\text{in probability}.
\]
Thus \(\dim\Ker A_Z=n/2+o(n)\). Proposition~\ref{prop:B-rank-jp}
applies also at \(k=8\) and gives \(\dim\Ker B=o(n)\) in probability.
Since \(\CZ=B(\Ker A_Z)\),
\[
\dim\Ker A_Z-\dim\Ker B\le\dim\CZ\le\dim\Ker A_Z.
\]
Consequently \(R_Z\to1/2\), and \(\CX=\CZ^\perp\) gives
\(R_X\to1/2\), both in probability.
\end{proof}

\bibliographystyle{IEEEtran}
\bibliography{references}

\begin{thebibliography}{10}
\providecommand{\url}[1]{#1}
\csname url@samestyle\endcsname
\providecommand{\newblock}{\relax}
\providecommand{\bibinfo}[2]{#2}
\providecommand{\BIBentrySTDinterwordspacing}{\spaceskip=0pt\relax}
\providecommand{\BIBentryALTinterwordstretchfactor}{4}
\providecommand{\BIBentryALTinterwordspacing}{\spaceskip=\fontdimen2\font plus
\BIBentryALTinterwordstretchfactor\fontdimen3\font minus
  \fontdimen4\font\relax}
\providecommand{\BIBforeignlanguage}[2]{{%
\expandafter\ifx\csname l@#1\endcsname\relax
\typeout{** WARNING: IEEEtran.bst: No hyphenation pattern has been}%
\typeout{** loaded for the language `#1'. Using the pattern for}%
\typeout{** the default language instead.}%
\else
\language=\csname l@#1\endcsname
\fi
#2}}
\providecommand{\BIBdecl}{\relax}
\BIBdecl

\bibitem{PK21}
P.~Panteleev and G.~Kalachev, ``Asymptotically good quantum and locally
  testable classical {LDPC} codes,'' in \emph{Proc. 54th Annual ACM SIGACT
  Symposium on Theory of Computing (STOC)}, 2022, pp. 375--388.

\bibitem{DHLV22}
I.~Dinur, M.-H. Hsieh, T.-C. Lin, and T.~Vidick, ``Good quantum {LDPC} codes
  with linear time decoders,'' in \emph{Proc. 55th Annual ACM Symposium on
  Theory of Computing (STOC)}, 2023, pp. 905--918.

\bibitem{LZ22}
A.~Leverrier and G.~Z{\'e}mor, ``Quantum tanner codes,'' in \emph{Proc. 2022
  IEEE 63rd Annual Symposium on Foundations of Computer Science (FOCS)}, 2022,
  pp. 872--883.

\bibitem{Tanner81}
R.~M. Tanner, ``A recursive approach to low complexity codes,'' \emph{IEEE
  Transactions on Information Theory}, vol.~27, no.~5, pp. 533--547, 1981.

\bibitem{MN96}
D.~J.~C. MacKay and R.~M. Neal, ``Near shannon limit performance of low density
  parity check codes,'' \emph{Electronics Letters}, vol.~32, no.~18, pp.
  1645--1646, 1996.

\bibitem{HA05}
C.-H. Hsu and A.~Anastasopoulos, ``Capacity-achieving codes with bounded
  graphical complexity and maximum likelihood decoding,'' \emph{IEEE
  Transactions on Information Theory}, vol.~56, no.~3, pp. 992--1006, 2010.

\bibitem{RU08}
T.~Richardson and R.~Urbanke, \emph{Modern Coding Theory}.\hskip 1em plus 0.5em
  minus 0.4em\relax Cambridge University Press, 2008.

\bibitem{SipserS94}
M.~Sipser and D.~A. Spielman, ``Expander codes,'' in \emph{Proc. 35th Annual
  Symposium on Foundations of Computer Science (FOCS)}, 1994, pp. 566--576.

\bibitem{TZ09}
J.-P. Tillich and G.~Z{\'e}mor, ``Quantum {LDPC} codes with positive rate and
  minimum distance proportional to the square root of the blocklength,''
  \emph{IEEE Transactions on Information Theory}, vol.~60, no.~2, pp.
  1193--1202, 2014.

\bibitem{LeverrierTZ15}
A.~Leverrier, J.-P. Tillich, and G.~Z{\'e}mor, ``Quantum expander codes,'' in
  \emph{Proc. 2015 IEEE 56th Annual Symposium on Foundations of Computer
  Science (FOCS)}, 2015, pp. 810--824.

\bibitem{EvraKZ20}
S.~Evra, T.~Kaufman, and G.~Z{\'e}mor, ``Decodable quantum {LDPC} codes beyond
  the square root distance barrier using high dimensional expanders,'' in
  \emph{Proc. 2020 IEEE 61st Annual Symposium on Foundations of Computer
  Science (FOCS)}, 2020, pp. 218--227.

\bibitem{BreuckmannE21}
N.~P. Breuckmann and J.~N. Eberhardt, ``Balanced product quantum codes,''
  \emph{IEEE Transactions on Information Theory}, vol.~67, no.~10, pp.
  6653--6674, 2021.

\bibitem{MKC09}
R.~E. Moore, R.~B. Kearfott, and M.~J. Cloud, \emph{Introduction to Interval
  Analysis}.\hskip 1em plus 0.5em minus 0.4em\relax SIAM, 2009.

\bibitem{Tucker11}
W.~Tucker, \emph{Validated Numerics: A Short Introduction to Rigorous
  Computations}.\hskip 1em plus 0.5em minus 0.4em\relax Princeton University
  Press, 2011.

\bibitem{CS96}
A.~R. Calderbank and P.~W. Shor, ``Good quantum error-correcting codes exist,''
  \emph{Physical Review A}, vol.~54, no.~2, pp. 1098--1105, 1996.

\bibitem{Ste96}
A.~M. Steane, ``Multiple-particle interference and quantum error correction,''
  \emph{Proceedings of the Royal Society of London. Series A}, vol. 452, pp.
  2551--2577, 1996.

\bibitem{BayatiMS09}
\BIBentryALTinterwordspacing
M.~Bayati, A.~Montanari, and A.~Saberi, ``Generating random graphs with large
  girth,'' in \emph{Proceedings of the Twentieth Annual {ACM-SIAM} Symposium on
  Discrete Algorithms ({SODA} 2009)}, 2009, pp. 566--575. [Online]. Available:
  \url{https://dl.acm.org/doi/10.5555/1496770.1496833}
\BIBentrySTDinterwordspacing

\bibitem{BayatiKMOS09}
M.~Bayati, R.~Keshavan, A.~Montanari, S.~Oh, and A.~Saberi, ``Generating random
  {Tanner}-graphs with large girth,'' in \emph{2009 {IEEE} Information Theory
  Workshop}, 2009, pp. 154--157.

\bibitem{AlonHL02}
N.~Alon, S.~Hoory, and N.~Linial, ``The moore bound for irregular graphs,''
  \emph{Graphs and Combinatorics}, vol.~18, no.~1, pp. 53--57, 2002.

\bibitem{AmirzadePS24}
F.~Amirzade, D.~Panario, and M.~Sadeghi, ``Quantum quasi-cyclic {LDPC} codes
  with column weight at least 3 have girth at most 6,'' \emph{Problems of
  Information Transmission}, vol.~60, no.~2, pp. 3--11, 2024.

\bibitem{Gal63}
R.~G. Gallager, \emph{Low-Density Parity-Check Codes}.\hskip 1em plus 0.5em
  minus 0.4em\relax MIT Press, 1963.

\bibitem{Gil52}
E.~N. Gilbert, ``A comparison of signalling alphabets,'' \emph{Bell System
  Technical Journal}, vol.~31, no.~3, pp. 504--522, 1952.

\bibitem{Var57}
R.~R. Varshamov, ``The evaluation of signals in codes with correction of
  errors,'' \emph{Doklady Akademii Nauk SSSR}, vol. 117, no.~5, pp. 739--741,
  1957.

\bibitem{DIRU06}
C.~Di, T.~J. Richardson, and R.~L. Urbanke, ``Weight distribution of
  low-density parity-check codes,'' \emph{IEEE Transactions on Information
  Theory}, vol.~52, no.~11, pp. 4839--4855, 2006.

\bibitem{KasaiMET10}
K.~Kasai, T.~Awano, D.~Declercq, C.~Poulliat, and K.~Sakaniwa, ``Weight
  distributions of multi-edge type {LDPC} codes,'' \emph{IEICE Transactions on
  Fundamentals of Electronics, Communications and Computer Sciences}, vol.
  E93-A, no.~11, pp. 1942--1948, 2010.

\bibitem{Mitchell12}
D.~G.~M. Mitchell, K.~Kasai, M.~Lentmaier, and J.~D.~J.~Costello, ``Asymptotic
  analysis of spatially coupled {MacKay-Neal} and {Hsu-Anastasopoulos} {LDPC}
  codes,'' in \emph{Proc. 2012 International Symposium on Information Theory
  and its Applications (ISITA)}, 2012, pp. 337--341.

\bibitem{Zahr25}
A.~Zahr, E.~B. Yacoub, B.~Matuz, and G.~Liva, ``Rate-adaptive protograph-based
  {MacKay-Neal} codes,'' \emph{IEEE Transactions on Information Theory},
  vol.~71, no.~2, pp. 914--929, 2025.

\bibitem{KasaiGVData26}
K.~Kasai, ``Finite-degree {GV} certificates: Data and verification code,''
  \url{https://github.com/kasaikenta/finite-degree-gv-certificates}, 2026,
  accessed 2026-09-13.

\bibitem{ObataJKP13}
N.~Obata, Y.-Y. Jian, K.~Kasai, and H.~D. Pfister, ``Spatially-coupled
  multi-edge type {LDPC} codes with bounded degrees that achieve capacity on
  the {BEC} under {BP} decoding,'' in \emph{Proc. IEEE International Symposium
  on Information Theory (ISIT)}, 2013, pp. 2433--2437.

\bibitem{KasaiSak11}
K.~Kasai and K.~Sakaniwa, ``Spatially-coupled {MacKay-Neal} codes and
  {Hsu-Anastasopoulos} codes,'' \emph{IEICE Transactions on Fundamentals of
  Electronics, Communications and Computer Sciences}, vol. E94-A, no.~11, pp.
  2161--2168, 2011.

\bibitem{FukushimaOK15}
M.~Fukushima, T.~Okazaki, and K.~Kasai, ``Spatially-coupled {MacKay-Neal} codes
  universally achieve the symmetric information rate of arbitrary generalized
  erasure channels with memory,'' in \emph{Proc. IEEE International Symposium
  on Information Theory (ISIT)}, 2015, pp. 899--903.

\bibitem{HagiwaraKIS11}
M.~Hagiwara, K.~Kasai, H.~Imai, and K.~Sakaniwa, ``Spatially coupled
  quasi-cyclic quantum {LDPC} codes,'' in \emph{Proc. IEEE International
  Symposium on Information Theory (ISIT)}, 2011, pp. 638--642.

\bibitem{KomotoKasai25}
D.~Komoto and K.~Kasai, ``Quantum error correction near the coding theoretical
  bound,'' \emph{npj Quantum Information}, vol.~11, p. 154, 2025.

\bibitem{GKP94}
R.~L. Graham, D.~E. Knuth, and O.~Patashnik, \emph{Concrete Mathematics: A
  Foundation for Computer Science}, 2nd~ed.\hskip 1em plus 0.5em minus
  0.4em\relax Addison-Wesley Professional, 1994.

\bibitem{Robbins55}
H.~Robbins, ``A remark on {Stirling's} formula,'' \emph{American Mathematical
  Monthly}, vol.~62, no.~1, pp. 26--29, 1955.

\bibitem{Clark36}
J.~A. Clarkson, ``Uniformly convex spaces,'' \emph{Transactions of the American
  Mathematical Society}, vol.~40, no.~3, pp. 396--414, 1936.

\end{thebibliography}

\end{document}